\documentclass[aps,pra,superscriptaddress,nofootinbib,twocolumn]{revtex4-2}

\usepackage{amsmath,amsthm,amsfonts,amssymb,amscd} 
\usepackage[utf8]{inputenc}
\usepackage{indentfirst}
\usepackage{graphicx}
\usepackage{color}
\usepackage[T1]{fontenc}
\usepackage[american,ngerman,greek,brazilian,british]{babel}
\usepackage{float}
\usepackage[dvipsnames]{xcolor}
\usepackage[colorlinks=true,citecolor=Mulberry,linkcolor=prettygreen,urlcolor=MidnightBlue,hyperindex,breaklinks]{hyperref}
\usepackage{breakurl}
\usepackage{cleveref}
\usepackage{mathtools,xfrac}
\usepackage{bm}
\usepackage{ulem}
\usepackage{enumerate}
\usepackage{sidecap}
\usepackage{microtype}
\usepackage{enumitem}
\usepackage{bbold}
\usepackage{listings}
\usepackage{algorithm}
\usepackage{algpseudocode}
\usepackage{multirow}
\usepackage{setspace}
\usepackage{physics}
\usepackage{dsfont}
\usepackage{soul}
\usepackage{lipsum}  
\usepackage{comment}
\usepackage{graphicx}

\newcommand{\id}{\mathbb{1}}

\renewcommand{\norm}[1]{\left\lVert#1\right\rVert}

\newcommand{\EE}{\mathbb{E}}

\newcommand{\Hcal}{\mathcal{H}}

\newcommand{\Lcal}{\mathcal{L}}

\newcommand{\Qcal}{\mathcal{Q}}

\definecolor{awesome}{rgb}{0.93, 0.53, 0.18}
\definecolor{prettygreen}{RGB}{5,125,143}

\newtheorem{theorem}{Theorem}
\newtheorem*{theorem*}{Theorem}
\newtheorem{proposition}{Proposition}
\newtheorem{lemma}{Lemma}
\newtheorem{corollary}{Corollary}

\newcommand{\lipsix}{Sorbonne Universit\'e, CNRS, LIP6, F-75005 Paris, France}

\begin{document}

\title{Probabilistic and approximate universal quantum purification machines}

\author{Zoe Garc\'{i}a del Toro}
\email{zoe.garcia@lip6.fr}
\address{\lipsix}

\author{Jessica Bavaresco}
\email{jessica.bavaresco@lip6.fr}
\address{\lipsix}

\begin{abstract} 
We study the task of lifting arbitrary quantum states and channels to purifications and Stinespring dilations, respectively, in both the probabilistic exact and deterministic approximate settings. We formalize this task through a general framework of quantum purification machines that, given a finite number of copies or uses of a black-box input, aim to output a corresponding purification or Stinespring dilation. In the probabilistic exact setting, we show that universality is not necessary to rule out such transformations: the simple requirement that a machine purifies two inputs of different rank with non-zero probability already implies that it cannot be described by a linear positive map. This simple argument captures a fundamental obstruction of quantum theory and recovers the impossibility of universal probabilistic purification from finitely many copies. In the approximate setting, we allow for general machines that are not required, in general, to produce a pure output. Using the minimum average error as our figure of merit, we derive analytical expressions for the performance of several physically motivated strategies as well as a general upper bound on the achievable error, which is tight in a specific regime. Our analysis reveals a trade-off: strategies that produce a pure output---among which we prove the optimal to be a strategy that produces as a fixed output a maximally entangled purification of the fully depolarizing channel---perform optimally between those considered for large environment dimension, while append-environment strategies that generally produce non-pure outputs perform better at small environment dimension. 
\end{abstract}

\maketitle

\section{Introduction}\label{sec:: Introduction}

\begin{figure}[t]
\begin{center}
    \includegraphics[width=0.75\columnwidth]{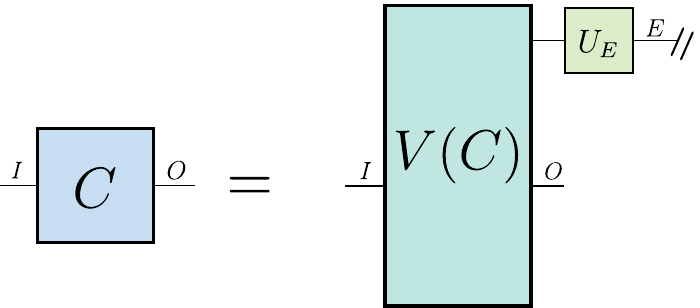}
    \caption{\textbf{Representation of the Stinespring dilation theorem.} Every quantum channel $C$ can be seen as an isometric channel $V(C)$, acting jointly on the original systems and an environment under a unitary freedom $U_E$, followed by a partial trace that discards the environment system. The purification of states is equivalent to the purification of quantum channels that have a trivial input space $I$.}
    \label{fig::stinespring}
\end{center}
\end{figure}

\begin{figure}[t]
    \begin{center}
    \includegraphics[width=\columnwidth]{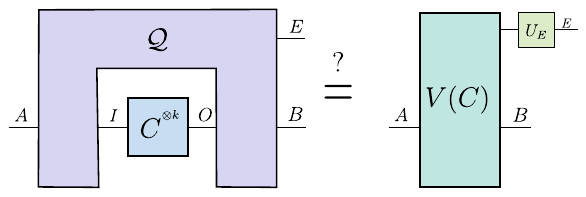}
    \caption{\textbf{Universal quantum purification machines.} A machine that takes as input $k$ copies of an arbitrary quantum channel and outputs one of its possible purifications, either probabilistically or approximately.} 
    \label{fig::purificationmachine}
\end{center}
\end{figure}

A basic structural feature of quantum theory is the ubiquity of purification and dilation: mixed, noisy, or irreversible descriptions can often be realized as reductions of pure, reversible, or higher-dimensional ones~\cite{stinespring1995positive,naimark1943representation,ozawa1984quantum,szHokefalvi1954contractions}. This paradigm arises across many classes of quantum objects. In particular, every quantum state is the marginal of a pure state on a larger Hilbert space, and every quantum channel admits a isometric realization on a larger system and environment whose partial trace reproduces the channel. The purification principle for states and the Stinespring dilation theorem for channels thus exemplify a common idea: mixedness, noise, and irreversibility originate from the neglect of degrees of freedom in a larger reversible description. Beyond its conceptual significance, this paradigm plays an indispensable role in quantum information science, providing a framework for open-system dynamics~\cite{evans1977dilations,attal2006repeated,vom2024finite}, and enabling the design and analysis of quantum protocols in which ancillary degrees of freedom are explicitly modeled~\cite{AcinBae2006,Dupuis2014,Wang23}.

Because of their generality, purifications of states and Stinespring dilations of channels are often treated as canonical representations of the corresponding quantum objects. Moreover, since every quantum state can be identified with a quantum channel that has a trivial, one-dimensional input system, purification of states may be viewed as a special case of Stinespring dilation. Motivated by this identification, we will refer to Stinespring dilations of quantum channels, including the state case, simply as purifications. It is then natural to ask whether such purifications can be obtained systematically---for instance, whether there could exist a universal transformation that, given as input an arbitrary mixed state or noisy channel, outputs one of its possible purifications. A map of this kind would provide a powerful bridge between the operational level of description, where one has access to a quantum state or channel only in the form of a black box, and a purified representation, where the object is embedded into a larger reversible evolution.

However, quantum theory strictly forbids the existence of such universal purification transformations. The reason is intuitively clear: a universal purification map would necessarily act as an inverse to the partial trace. By construction, purifications are obtained from larger systems by discarding auxiliary degrees of freedom; reversing this process would amount to retrieving information and correlations lost to the environment from a local description alone. Such a scenario directly conflicts with thermodynamic principles: the partial trace models an irreversible loss of information, resulting in entropy increase and the emergence of irreversibility. A transformation capable of undoing this operation for arbitrary inputs would therefore amount to a device that systematically decreases entropy, in violation of the second law of thermodynamics.

While intuitive, this limitation of quantum theory has only recently been formalized in a rigorous no-go theorem~\cite{liu2025nouniversal}. In this work, we further explore the limits of quantum purification by formalizing and investigating this task in the probabilistic exact and deterministic approximate settings. In the probabilistic exact setting, we offer a simple proof of impossibility that does not require universality to hold, from which it follows as a corollary that universal purification of quantum states and channels, for any finite number of input copies, cannot succeed with non-zero probability. In the deterministic approximate setting, we focus on the minimum average purification error, defined as the minimum squared Hilbert-Schmidt distance between the output of the transformation and any of the possible purifications of the input, averaged over the ensemble of quantum channels induced by Haar-distributed Stinespring isometries. This framework allows us to analyze the dimension of the environment system as a prior distribution over the possible inputs of transformation. We study various families of purification strategies, analytically computing their attained error explicitly for both quantum states and quantum channels, and comparing their optimality in different regimes of environment dimension. We investigate strategies that always produce a pure output, and show that the optimal ones are precisely those whose outputs are purifications of the fully depolarizing channel. We then characterize the average purification error of this family in terms of the amount of entanglement between the original systems and the environment system of the pure output. Next, we investigate strategies that simply append a state onto the environment system, leaving the input unchanged, computing the minimum average error among all possible appended states and showing that, in the regime of low environment dimension, this strategy performs better than any strategy that produces a pure output. We further characterize the average error attained by a strategy that maps all inputs to the fully depolarizing channel. Next, we describe a many-copy estimation strategy based on the average purification transformation \cite{PelecanosEtAl2025,girardi2025random}, compute a general error upper bound, and finally compare the performance of these different classes of strategies. We conclude by comparing our universal purification task with universal cloning~\cite{wootters1982asingle}, universal unitary controlization~\cite{araujo2014quantum,gavorova2024topological}, and the task of transforming $k$ copies of an arbitrary input into the average of $k$ copies of its possible purifications~\cite{chen2024localtestunitarilyinvariant,chen2025quantumchanneltomographyestimation,tang2025,girardi2025random2,yoshida2025}.

\section{Quantum states, quantum channels, and their purifications}

Quantum channels are characterized by completely positive (CP) trace-preserving (TP) linear maps $\widetilde{C}:\Lcal (\Hcal_I)\to\Lcal (\Hcal_O)$ that transform states acting on an linear input space $\Lcal (\Hcal_I)$ of finite dimension $d_I\coloneqq\text{dim}(\Hcal_I)$, to states acting on a linear output space of finite dimension $d_O\coloneqq\text{dim}(\Hcal_O)$. The action of quantum channels on input quantum states $\rho\in\Lcal (\Hcal_I)$ can be described by their Kraus decomposition, according to $\widetilde{C}(\rho)= \sum_i K_i \rho K_i^* \in \Lcal (\Hcal_O)$, where $K_i:\Hcal_I\to\Hcal_O$, called the Kraus operators, satisfy $K_i\geq0$ and $\sum_i K_i^* K_i=\id$. A relevant case of quantum channel is that of an isometric channel $\widetilde{V}(\rho)=V\rho V^*$, a quantum operation that maps pure states, of the form $\rho=\ketbra{\psi}$, into other pure states, and is described by a single Kraus operator $V:\Hcal_I\to\Hcal_O$ satisfying $V^* V=\id$. Unitary channels are a particular case of isometric channels, in the case where $d_I=d_O=d$, that is, $\Hcal_I\cong \Hcal_O$, and correspond to reversible transformations.

Using the Choi-Jamio\l{}kowski isomorphism, a quantum channel can be equivalently represented by a linear operator $C\in\Lcal (\Hcal_I\otimes\Hcal_O)$ that acts on the joint input-output space, defined as $C\coloneqq\sum_{i,j}\ketbra{i}{j}\otimes\widetilde{C}(\ketbra{i}{j})$, where $\{\ket{i}\}_i$ denotes the computational basis. The action of a quantum channel on an input quantum state is obtained from its Choi operator via the relation $\widetilde{C}(\rho)=\tr_{I}[C\,(\rho^T\otimes\id^O)]$, where $(\cdot)^T$ denotes transposition in the computational basis and $\id^X$ the identity operator on $\Hcal_X$. Isometric channels are characterized by a rank-one Choi operator $\ketbra{V}$, where $\ket{V}\coloneqq\sum_i\ket{i}\otimes V\ket{i}$. We refer to both the map $\widetilde{C}$ and the operator $C$ equivalently as quantum channels.

According to the Stinespring dilation theorem~\cite{stinespring1995positive}, every non-isometric quantum channel $\widetilde{C}:\Lcal (\Hcal_I)\to\Lcal (\Hcal_O)$ can be represented by an isometric channel $\widetilde{V}:\Lcal (\Hcal_I)\to\Lcal (\Hcal_O\otimes \Hcal_E)$, referred to as its purification, with output in a larger space that includes an auxiliary system $\Hcal_E$ representing the environment, followed by a partial trace that discards the auxiliary system, as depicted in Fig.~\ref{fig::stinespring}. Formally, for every quantum channel $\widetilde{C}$ there exists a family of isometries $V(C):\Hcal_I\to\Hcal_O\otimes\Hcal_E$ that satisfy
\begin{equation}\label{eq::defstines1}
    \widetilde{C}(\rho) = \tr_E [V(C)\,\rho\, V(C)^*] \ \ \ \forall \ \rho,
\end{equation}
or, equivalently, in the Choi representation,
\begin{equation}\label{eq::defstines2}
    C = \tr_E \left[\ketbra{V(C)}\right].
\end{equation}
These isometric operators can be constructed from any Kraus decomposition $\{K_i\}_i$ of the channel $\widetilde{C}$ according to 
\begin{equation}\label{eq::stinesiscometry}
    V(C) \coloneqq \sum_i K_i \otimes \ket{\phi_i}, 
\end{equation}
where $\{\ket{\phi_i}\}$ is any orthonormal basis on $\Hcal_E$. For every quantum channel $\widetilde{C}$, there exits a purification with environment dimension $d_E=\rank(C)\leq d_Id_O$. Conversely, every isometric channel $\widetilde{V}:\Lcal (\Hcal_I)\to\Lcal (\Hcal_O\otimes \Hcal_E)$ is the Stinespring dilation of some quantum channel $\widetilde{C}:\Lcal (\Hcal_I)\to\Lcal (\Hcal_O)$.

For a fixed environment space $\Hcal_E$, different choices of Kraus decomposition and of basis for the environment space will lead to different purifications of the same channel, all of which are related to each other via a unitary operator $U_E:\Hcal_E\to\Hcal_E$ acting on the environment. That is, the Choi operators of all possible purifications of a channel $\widetilde{C}$ can be attained from a fixed purification $V(C)$, constructed with some choice of $\{K_i\}_i$ and $\{\ket{\phi_i}\}_i$ in the form of Eq.~\eqref{eq::stinesiscometry}, via
\begin{equation}
    \ket{V'(C)} = \left(\id\otimes U_E\right)\ket{V(C)}.
\end{equation}

As already noted in Sec.~\ref{sec:: Introduction},  quantum states themselves can be regarded as a special case of quantum channels, those that have an input space of dimension $d_I=1$, i.e., a trivial input space. Under this identification, a density operator $\rho$ can be viewed as a particular case of the Choi operator of a quantum channel $C$, with $d_I=1$. In particular, pure states can be represented as isometric channels with $d_I=1$, with $\ket{\psi}$ seen as a particular case of $\ket{V}$. Accordingly, if a mixed state admits the spectral decomposition $\rho=\sum_i p_i \ketbra{\psi_i}{\psi_i}$, its possible purifications  $\ket{\psi(\rho)}\coloneqq\sum_i\sqrt{p_i}\ket{\psi_i}\otimes\ket{\phi_i}$, where again we have $\{\ket{\phi_i}\}$ being an orthonormal basis on $\Hcal_E$, are also related by a unitary operator $U_E$ acting on the environment, via $\ket{\psi'(\rho)} = (\id\otimes U_E)\ket{\psi(\rho)}$. Hence, state purifications $\ket{\psi(\rho)}$ can be seen as Stinespring dilations of a channel  $\ket{V(C)}$ for channels with a trivial input system. This identification will be useful for the interpretation of our results.

\section{Impossibility of universal quantum purification machines}

A quantum purification machine is a hypothetical device that can take an arbitrary quantum channel as input, which may be provided as a black box, and transform it into one of its valid purifications. Such a machine can be modeled by a higher-order transformation, namely a linear map whose inputs are themselves maps such as quantum channels~\cite{chiribella2008quantum,chiribella2008transforming,taranto2025higher}, as depicted in Fig.~\ref{fig::purificationmachine}. Throughout, we describe purification machines as linear maps $\Qcal:\Lcal (\Hcal_I\otimes\Hcal_O)\to\Lcal (\Hcal_A\otimes\Hcal_B\otimes\Hcal_E)$, where $\Hcal_A\cong\Hcal_I$ and $\Hcal_B\cong\Hcal_O$, acting on the Choi operator $C$ of the input quantum channel. $\Qcal$ is a universal purification machine if, for every channel $C$, there exists a (potentially different) unitary $U_E(C)$ such that
\begin{align}\label{eq::determ_exact_machine}
    \Qcal(C) &= (\id\otimes U_E(C))\ketbra{V(C)}(\id\otimes U_E(C)^*) \\
    &=: \ketbra{V_{U_E}(C)}.
\end{align}
Equivalently, for every input channel $C$, $\Qcal(C)$ satisfies $\rank[\Qcal(C)]=1$ and $\tr_E[\Qcal(C)]=C$.

At this stage, we require $\Qcal$ to produce such an output deterministically and exactly for every input channel, while imposing no structural assumptions beyond linearity. This is already enough to rule out the existence of a universal purification machine. Indeed, let $C$ be a non-extremal quantum channel. Since the set of quantum channels is convex, one can write $C=q\,C_1+(1-q)\,C_2$ for some distinct quantum channels $C_1$ and $C_2$, and some $q\in(0,1)$. Then by hypothesis, the output will be $\Qcal(C) = \ketbra{V_{U_E}(C)}$ with some unitary $U_E(C)$. From the linearity of $\Qcal$, we also get that $\Qcal(C)=q\Qcal(C_1)+(1-q)\Qcal(C_2)$, and by hypothesis, that $\Qcal(C)=q\ketbra{V_{U_E}(C_1)}+(1-q)\ketbra{V_{U_E}(C_2)}$ from some unitaries $U_{E}(C_1)$ and $U_{E}(C_2)$. However, this leads to a contradiction, since for all unitaries  $U_E(C)$, $U_E(C_1)$, and $U_E(C_2)$, we have that $\ketbra{V_{U_E}(C)}\neq q\ketbra{V_{U_E}(C_1)}+(1-q)\ketbra{V_{U_E}(C_2)}$, since the l.h.s. has rank $1$ for all $U_E(C)$, while in the r.h.s. has rank greater than $1$ for all $U_E(C_1),U_E(C_2)$. This can be seen from the fact that, by virtue of being purifications of two different channels, $\ket{V_{U_E}(C_1)}$ and $\ket{V_{U_E}(C_2)}$ are linearly independent for all $U_E(C_1),U_E(C_2)$, and the convex combination of two linearly independent rank-$1$ operators with convex weight $q\in(0,1)$ necessarily increases the rank. This contradiction shows that no linear universal purification machine can exist. One way to relax the machine's task is to give it access to more calls of the input channels. In this scenario, a linear machine that has access to infinitely many calls of the arbitrary input channels can perform process tomography of the input, completely characterizing it, and subsequently prepare one of its valid purifications as output. Therefore, the interesting, non-trivial question is how well a machine can perform the purification task when only a finite number of uses of the input channel are available, and hence only partial information about it can be extracted. In the following we consider the probabilistic exact and the deterministic approximate versions of this purification problem.

\section{Probabilistic exact quantum purification machines}

In the many-copy, probabilistic exact case, we consider a machine $\Qcal$ that corresponds to a linear map and transforms a finite number of copies of an input into one of its valid purifications with some non-zero probability. It is known that such a universal purification machine cannot be described by a positive linear map~\cite{liu2025nouniversal}. Here, we provide a simple result, focusing on the case of quantum states ($d_I=d_A=1$), showing that it is not necessary to require the universality of the machine to arrive at this conclusion---indeed, it is only necessary to impose that the machine probabilistically purifies two specific input quantum states, a rank-$1$ state and a rank-$2$ state, with some non-zero probability.

\begin{theorem}[No-go theorem for probabilistic exact non-universal many-copy purification]
    For all finite $k$, there is no linear positive map  $\Qcal:\Lcal (\Hcal_O^{\otimes k})\to\Lcal (\Hcal_B\otimes\Hcal_E)$ such that, for two rank-$1$ quantum states $\ketbra{\psi_0},\ketbra{\psi_1}\in\Lcal (\Hcal_O)$, with $\ketbra{\psi_0}\neq\ketbra{\psi_1}$, $\Qcal$ satisfies
    \begin{equation}
       \Qcal[\ketbra{\psi_0}^{\otimes k}] = p_0 \ketbra{\psi_0}\otimes\ketbra{\phi_0} 
    \end{equation}
    and
    \begin{equation}
        \Qcal[(q\ketbra{\psi_0}+(1-q)\ketbra{\psi_1})^{\otimes k}] = p_{01} \ketbra{\phi_{01}},
    \end{equation}
    where $\tr_E(\ketbra{\phi_{01}}) = q\,\ketbra{\psi_0}+(1-q)\ketbra{\psi_1}$, and $q\in(0,1)$, with
    \begin{equation}
        p_0>0 \quad \text{and} \quad p_{01}>0.
    \end{equation}
\end{theorem}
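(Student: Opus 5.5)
Assume, toward a contradiction, that such a positive linear map $\Qcal$ exists, and set $\sigma\coloneqq q\ketbra{\psi_0}+(1-q)\ketbra{\psi_1}$. The plan is to use positivity of $\Qcal$ together with an operator inequality between $\sigma^{\otimes k}$ and $\ketbra{\psi_0}^{\otimes k}$. This forces the support of $\Qcal[\ketbra{\psi_0}^{\otimes k}]$ to lie inside the support of the rank-one output $\ketbra{\phi_{01}}$. That, in turn, forces $\ket{\phi_{01}}$ to be a product vector, which is impossible because its marginal $\sigma$ is mixed.

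First, expand $\sigma^{\otimes k}$ as a sum over all strings in $\{0,1\}^k$. The terms are tensor products of the projectors $\ketbra{\psi_0}$ and $\ketbra{\psi_1}$ with nonnegative weights $q^{m}(1-q)^{k-m}$. Each term is positive semidefinite, and the all-$0$ string contributes $q^k\ketbra{\psi_0}^{\otimes k}$. Hence
\begin{equation}
\sigma^{\otimes k}-q^k\ketbra{\psi_0}^{\otimes k}\geq 0 .
\end{equation}
Applying the positive map $\Qcal$ and using linearity gives
\begin{equation}
p_{01}\ketbra{\phi_{01}}-q^k p_0\,\ketbra{\psi_0}\otimes\ketbra{\phi_0}\geq 0 .
\end{equation}

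Next, use the elementary fact that if $A\geq B\geq 0$ then $\operatorname{supp}(B)\subseteq\operatorname{supp}(A)$. This follows because $\bra{v}A\ket{v}=0$ implies $\bra{v}B\ket{v}=0$, which implies $B\ket{v}=0$. Since $q^k p_0>0$, the support of $\ketbra{\psi_0}\otimes\ketbra{\phi_0}$ is the line spanned by $\ket{\psi_0}\otimes\ket{\phi_0}$. The support of $\ketbra{\phi_{01}}$ is the line spanned by $\ket{\phi_{01}}$. Two one-dimensional subspaces with one contained in the other coincide, so $\ket{\phi_{01}}=e^{i\theta}\ket{\psi_0}\otimes\ket{\phi_0}$. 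Implicitly we need $\ket{\phi_0}\neq 0$; this holds because the output for $\ketbra{\psi_0}^{\otimes k}$ is a nonzero positive operator, given $p_0>0$ and $\ket{\phi_0}$ a normalized state.

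Finally, take the partial trace over $E$: $\tr_E\ketbra{\phi_{01}}=\ketbra{\psi_0}$. By hypothesis this equals $\sigma=q\ketbra{\psi_0}+(1-q)\ketbra{\psi_1}$, which has rank $2$. Indeed, the two distinct rank-one projectors have linearly independent vectors and $q\in(0,1)$, so the combination cannot be a rank-one projector. This contradicts rank one, so no such $\Qcal$ exists for any finite $k$.

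The only mildly delicate step is the domination $\sigma^{\otimes k}\geq q^k\ketbra{\psi_0}^{\otimes k}$. It must be stated at the level of positivity of the remaining cross terms, which are products of projectors, not by any rank argument. Once that holds, positivity of $\Qcal$ does all the work. The argument never uses that $\Qcal$ is completely positive, nor that it behaves well on any other input.
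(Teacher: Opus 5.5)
Your proof is correct and follows the same route as the paper. In both, the remainder in the expansion of $\sigma^{\otimes k}$ is positive, so $p_{01}\ketbra{\phi_{01}}\geq q^k p_0\ketbra{\psi_0}\otimes\ketbra{\phi_0}$; the rank-one domination then forces $\ket{\phi_{01}}\propto\ket{\psi_0}\otimes\ket{\phi_0}$, and the partial trace contradicts the rank-two marginal. Your expansion over all strings in $\{0,1\}^k$ is actually more careful than the paper's, which lists only the single ordering $\ketbra{\psi_0}^{\otimes(k-i)}\otimes\ketbra{\psi_1}^{\otimes i}$ for each weight; this does not affect either argument, since only positivity of the remainder is used.
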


\vspace*{11pt}

\begin{proof}
By the linearity of $\Qcal$, it follows that 
\begin{align}
    &\Qcal[(q\,\ketbra{\psi_0}+(1-q)\ketbra{\psi_1})^{\otimes k}] \nonumber \\
    &= q^k \Qcal(\ketbra{\psi_0}^{\otimes k}) \nonumber \\
    &+ \sum_{i=1}^{k} q^{k-i}(1-q)^{i}\Qcal(\ketbra{\psi_0}^{\otimes(k-i)}\otimes\ketbra{\psi_1}^{\otimes i}) \\
    &= q^k p_0 \ketbra{\psi_0}\otimes\ketbra{\phi_0} \nonumber \\
    &+ \sum_{i=1}^{k} q^{k-i}(1-q)^{i}\Qcal(\ketbra{\psi_0}^{\otimes(k-i)}\otimes\ketbra{\psi_1}^{\otimes i})
\end{align}
which, by assumption, must equal $p_{01} \ketbra{\phi_{01}}$. Now assume, by contradiction, that $\Qcal$ is a positive map. Then, every term in the sum above is positive semidefinite, and therefore
\begin{equation}
    p_{01} \ketbra{\phi_{01}} \geq q^k p_0 \ketbra{\psi_0}\otimes\ketbra{\phi_0}.
\end{equation}
Since $p_{01}>0$, $p_0>0$ and $q\in(0,1)$, both sides are non-zero rank-$1$ operators, which implies that $\ket{\phi_{01}}$ must be proportional to $\ket{\psi_0}\otimes\ket{\phi_0}$. Consequently,
\begin{equation}
    \tr_E(\ketbra{\phi_{01}}) = \ketbra{\psi_0},
\end{equation}
which contradicts the assumption that  $\tr_E(\ketbra{\phi_{01}}) = q\,\ketbra{\psi_0}+(1-q)\ketbra{\psi_1}$ for some $q\in(0,1)$. Hence, no such linear positive map $\Qcal$ can exist.
\end{proof}

The case of quantum channels and universal purification is hence obtained as corollary.
Let $\Qcal_p$ be a linear higher-order transformation $\Qcal_p:\Lcal (\Hcal_I^{\otimes k}\otimes\Hcal_O^{\otimes k})\to\Lcal (\Hcal_A\otimes\Hcal_B\otimes\Hcal_E)$ that corresponds to a universal probabilistic purification machine. That is, $\Qcal_p$  outputs exactly one of the possible purifications of the input channel with certain non-negative probability $p$, and with probability $1-p$, fails. For a fixed input and output space dimension $d_I,d_O$, and a fixed auxiliary space dimension $d_E$, we are then interested in the maximal probability of success $p_s$ of such a protocol, which is defined as 
\vspace*{-1em}
\begin{widetext}
\begin{equation}\label{eq::max_psucc_def}
     p_s(d_I,d_O,d_E;k) \coloneqq \max_{\Qcal_p} \left\{ p\ \Big|\ \forall\ C, \exists\ U_E(C);\ \Qcal_p\left(C^{\otimes k}\right) = p \, \left(\id\otimes U_E(C)\right)\ketbra{V(C)}\left(\id\otimes U_E(C)^*\right) \right\}.
\end{equation}
\end{widetext}

We then obtain the following result concerning the general performance of universal probabilistic many-copy quantum channel purification machines:

\begin{corollary}[No-go theorem for many-copy probabilistic exact universal purification]\label{thm::probkcopy} 
    Let $C\in\Lcal (\Hcal_I\otimes\Hcal_O)$ be the Choi operator of a quantum channel, $\ketbra{V_{U_E}(C)}$ be Choi operator of one of its possible Stinespring dilations, and let $\Qcal_p:\Lcal (\Hcal_I^{\otimes k}\otimes\Hcal_O^{\otimes k})\to\Lcal (\Hcal_A\otimes\Hcal_B\otimes\Hcal_E)$ be a universal probabilistic purification machine, described by a linear positive map, acting on $k$ copies of $C$. Then, the maximal probability of success $p_s(d_I,d_O,d_E;k)$, defined in Eq.~\eqref{eq::max_psucc_def}, of such a transformation satisfies
    \begin{equation}
        p_s(d_I,d_O,d_E;k) = 0,
    \end{equation}
    for all $d_I$, $d_O$, $d_E$, and $k$.
\end{corollary}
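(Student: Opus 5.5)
The plan is to reduce the corollary to the Theorem by restricting the universal machine to a suitable pair of inputs. Suppose, for contradiction, that $\Qcal_p$ is a linear positive map achieving success probability $p>0$ for every channel $C$. Universality means, in particular, that $\Qcal_p$ must work on channels with a trivial input or on replacement channels. I will choose the latter so that $d_I$ is arbitrary.

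Fix two distinct pure states $\ket{\psi_0},\ket{\psi_1}\in\Hcal_O$ and $q\in(0,1)$. Consider the replacement channels $C_0=\id^I\otimes\ketbra{\psi_0}$ and $C_{01}=\id^I\otimes\sigma$, where $\sigma=q\ketbra{\psi_0}+(1-q)\ketbra{\psi_1}$. I then define the linear positive map
\begin{equation*}
\Qcal'(X)\coloneqq \tr_A\!\left[\Qcal_p\!\left((\id^I/d_I)^{\otimes k}\otimes X\right)\right],
\end{equation*}
after reordering tensor factors so that $C^{\otimes k}$ matches $(\id^I)^{\otimes k}\otimes\rho^{\otimes k}$. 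With this normalisation, $C_0^{\otimes k}$ and $C_{01}^{\otimes k}$ correspond to $d_I^k(\id^I/d_I)^{\otimes k}\otimes\rho^{\otimes k}$. Positivity is preserved under tensoring with a fixed PSD operator and under the partial trace, and linearity is clear.

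Next I compute the outputs. A Stinespring dilation of a replacement channel $\id\otimes\rho$ with purification $\ket{\phi(\rho)}\in\Hcal_B\otimes\Hcal_E$ has Choi vector $\ket{V}=\ket{\id}\otimes\ket{\phi(\rho)}$, since $V\ket{i}=\ket{\phi(\rho)}$ up to the choice of Kraus/environment basis. The dilations related by $U_E$ stay of this product form. Tracing out $A$ then gives
\begin{equation*}
\Qcal'(\rho^{\otimes k})=p\,d_I^{1-k}\ketbra{\phi'(\rho)}.
\end{equation*}
For $\rho=\ketbra{\psi_0}$, every purification is of the form $\ket{\psi_0}\otimes\ket{\phi_0}$. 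For $\rho=\sigma$, we obtain a pure $\ket{\phi_{01}}$ with $\tr_E\ketbra{\phi_{01}}=\sigma$. Hence $\Qcal'$ satisfies the hypotheses of the Theorem with $p_0=p_{01}=p\,d_I^{1-k}>0$, which contradicts it, so $p=0$. The maximum in Eq.~\eqref{eq::max_psucc_def} is attained by the trivial map $\Qcal_p=0$, so $p_s=0$.

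The main obstacle is the bookkeeping in the second and third steps. I need to check that the Choi vector of every dilation of a replacement channel is a product across $I A$ versus $BE$, so that tracing out the $I$-copy yields a pure state rather than a mixture. This holds because $\ket{V}=\sum_i\ket{i}\otimes\ket{i}\otimes\ket{\phi}$ is already a product. The degenerate case $d_E=1$ must also be covered: there, $\sigma$ has no purification at all, so no machine can succeed on $C_{01}$ with $p>0$, and the claim is immediate. Likewise, if $d_O=1$, every channel is trivial; I will assume $d_O\ge2$, as is implicit in the statement.
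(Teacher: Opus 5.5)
There is a genuine gap in your second and third steps. The claim that the Choi vector of every dilation of the replacement channel $\id^I\otimes\rho$ is a product across $A$ versus $BE$ is false whenever $d_I\geq 2$.

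First, setting $V\ket{i}=\ket{\phi(\rho)}$ for all $i$ does not define an isometry. More generally, for \emph{any} isometry $V:\Hcal_I\to\Hcal_O\otimes\Hcal_E$ one has $\tr_{BE}\ketbra{V}=\sum_{i,j}\ketbra{i}{j}\,\bra{j}V^*V\ket{i}=\id_A$. So $\ket{V}$ is always maximally entangled across $A|BE$, and $\tr_A\ketbra{V}=VV^*$ is a rank-$d_I$ projector.

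The vector $\sum_i\ket{i}\otimes\ket{i}\otimes\ket{\phi}$ you write down is a product only across $AE_1|BE_2$, where $A$ is maximally entangled with a $d_I$-dimensional part $E_1$ of the environment. This also means $C_{01}$ admits a dilation only when $d_E\geq 2d_I$. Besides $d_E=1$, you must therefore also dispose of $1<d_E<2d_I$, which is immediate by the same reasoning. Consequently your map gives $\Qcal'(\rho^{\otimes k})=p\,d_I^{-k}V_\rho V_\rho^*$, which is not rank one. Neither hypothesis of the Theorem is met, so it cannot be invoked as stated.

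The repair is small: do not trace out $A$, but absorb it into the environment.
\begin{itemize}
\item Define $\Qcal'(X)\coloneqq\Qcal_p\big((\id^I/d_I)^{\otimes k}\otimes X\big)$ with output space $\Hcal_B\otimes(\Hcal_A\otimes\Hcal_E)$. This map is still linear and positive.
\item Then $\Qcal'(\rho^{\otimes k})=p\,d_I^{1-k}\ketbra{\phi(\rho)}$, where $\ket{\phi(\rho)}\coloneqq\ket{V_{U_E}(C_\rho)}/\sqrt{d_I}$ satisfies $\tr_{AE}\ketbra{\phi(\rho)}=\tr_A(\id_A\otimes\rho)/d_I=\rho$.
\item For $\rho=\ketbra{\psi_0}$ this pure vector is necessarily of the form $\ket{\psi_0}\otimes\ket{\phi_0}$. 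The Theorem then applies with environment $\Hcal_A\otimes\Hcal_E$ and $p_0=p_{01}=p\,d_I^{1-k}$.
\end{itemize}

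Alternatively, skip the reduction to states and run the Theorem's positivity argument directly on channels. For distinct channels $C_0,C_1$ and $C_{01}=qC_0+(1-q)C_1$, positivity gives $p\ketbra{V(C_{01})}\geq q^k p\ketbra{V(C_0)}$. This forces $\ket{V(C_{01})}\propto\ket{V(C_0)}$. Taking $\tr_E$ and comparing traces then gives $C_{01}=C_0$, a contradiction.

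The paper itself only remarks that the corollary follows from the Theorem, which is stated for $d_I=1$. Either fix above supplies the general-$d_I$ step that your replacement-channel idea was aiming for.
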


This result showcases a striking prohibition imposed by quantum theory: exact universal purification is completely forbidden, even in the probabilistic case. The simple requirements of linearity and positivity of the map---minimal requirements of a quantum transformation---are sufficient to show that, no matter how many copies of the input are provided, probabilistic exact purification is impossible.

\section{Deterministic approximate quantum purification machines}

In the deterministic approximate case, we admit a purification machine, described by a higher-order transformation $\Qcal_\epsilon:\Lcal (\Hcal_I^{\otimes k}\otimes\Hcal_O^{\otimes k})\to\Lcal (\Hcal_A\otimes\Hcal_B\otimes\Hcal_E)$, which is not required to output an exact purification of the input channel but rather an approximation of it. We require $\Qcal_\epsilon$ to be a valid general $k$-slot superchannel, that is, a higher-order transformation that maps $k$ channels into one channel, even when acting on part of its inputs~\cite{chiribella2008quantum,chiribella2008transforming,taranto2025higher}. Formally, $\Qcal_\epsilon$ must be completely CPTP preserving. This guarantees that, for all inputs, the approximate purification machine outputs a valid quantum channel. General superchannels have been extensively studied in the literature of higher-order transformations~\cite{chiribella2008quantum,chiribella2008transforming,taranto2025higher}. Such maps, while encompassing all circuit-implementable transformations, are in general not required to respect strict causality constraints, admitting transformations that may involve an indefinite causal order~\cite{chiribella2013quantum,araujo2017purification}.

For a fixed dimensions $d_I,d_O, d_E$ and a fixed number of copies $k$, given an error function $f$, which quantifies for each input $C$ the deviation of the machine output from an exact purification, we define the minimum average $k$-copy purification error $\epsilon(d_I,d_O,d_E;k)$ as
\vspace*{-1em}
\begin{widetext}
\begin{equation}\label{eq::avg_error_def}
    \epsilon(d_I,d_O,d_E;k)\coloneqq\min_{\Qcal_\epsilon}\left\{\EE_{C}\left\{\min_{U_E}\Big\{ 
    f\left(\Qcal_\epsilon\left(C^{\otimes k}\right), \ketbra{V_{U_E}(C)}\right)\Big\}\right\}\right\},
\end{equation}
\end{widetext}
\vspace*{-1em}
for all $d_I,d_E\geq1$ and $d_O\geq2$.

To quantify this purification error, we choose as figure of merit the squared Hilbert-Schmidt distance $f(A,B)=\norm{A-B}_2^2$. This quantity is interesting in the current scenario because it captures several of its defining characteristics. In the present setting, we have that 
\begin{align}
\begin{split}
    &\norm{\Qcal_\epsilon\left(C^{\otimes k}\right) - \ketbra{V_{U_E}(C)}}^2_2 = \\
    & \quad = \tr\left[\Qcal_\epsilon\left(C^{\otimes k}\right)^2\right] + \tr\left[\ketbra{V_{U_E}(C)}^2\right] + \\
    & \quad\quad - 2\tr\left[\Qcal_\epsilon\left(C^{\otimes k}\right)\ketbra{V_{U_E}(C)}\right].
\end{split}
\end{align}
The first term, $\tr\,[\Qcal_\epsilon\left(C^{\otimes k}\right)^2]$, quantifies the purity of the output of the machine. Notice that here we do not demand the output of the machine to be pure---since this is not necessarily the best approximate strategy---but rather consider completely general machines with no restrictions on their output. The second term, $\tr\,[\ketbra{V_{U_E}(C)}^2]=d_I^2$, is fixed by the fact that the target is a pure Choi operator. The last term, $\tr[\Qcal_\epsilon\left(C^{\otimes k}\right)\ketbra{V_{U_E}(C)}]=\abs{\bra{V_{U_E}(C)}\Qcal_\epsilon\left(C^{\otimes k}\right)\ket{V_{U_E}(C)}}$, is proportional to the fidelity between the output of the machine and the target of the transformation. The squared Hilbert-Schmidt norm then captures the trade-off between output purity and alignment with a valid purification of the input.

To compute the average over quantum channels $\EE_{C}$, we sample each input channel by drawing a Haar-random isometry $V(C)$ for each fixed auxiliary system dimension $d_E$~\cite{kukulski2021generating}, knowing that $C=\tr_E\ketbra{V(C)}$. See App.~\ref{ap:: reviewrandomchannels} for more details. Hence,
\vspace*{-1em}
\begin{widetext}
\begin{align}
    &\epsilon(d_I,d_O,d_E;k) = \min_{\Qcal_\epsilon}\left\{\EE_{C}\left\{\min_{U_E}\Big\{\norm{\Qcal_\epsilon\left(C^{\otimes k}\right) - \ketbra{V_{U_E}(C)}}^2_2 \Big\}\right\}\right\} \\
    &= \min_{\Qcal_\epsilon}\left\{\EE_{V}\left\{\min_{U_E}\Big\{ \norm{\Qcal_\epsilon\left(\tr_E(\ketbra{V})^{\otimes k}\right) - (\id\otimes U_E)\ketbra{V}(\id\otimes U_E^*)}^2_2 \Big\}\right\}\right\} \\
    &= d_I^2 + \min_{\Qcal_\epsilon}\left\{\EE_{V}\left\{ 
    \tr[\Qcal_\epsilon\left(\tr_E(\ketbra{V})^{\otimes k}\right)^2] - 2 \max_{U_E}\Big\{\tr\Big[\Qcal_\epsilon\left(
    \tr_E(\ketbra{V})^{\otimes k}\right)(\id\otimes U_E)\ketbra{V}(\id\otimes U_E^*)\Big]
    \Big\}\right\}\right\}. \label{eq::error_expanded}
\end{align}
\end{widetext}
\vspace*{-1em}

We now discuss a couple of noteworthy remarks about the above error expression. Since the average is taken over Haar random isometries $V:\Hcal_I\to\Hcal_O\otimes\Hcal_E$, the dimension $d_E$ of the environment affects the induced distribution of the channels $C$. In particular, for $d_E=d_Id_O$, this construction induces a uniform distribution over Choi operators of quantum channels $C$~\cite{kukulski2021generating,Collins_Nechita_2016}. For $d_E=1$, the distribution is supported on isometric channels and assigns zero probability to non-isometric ones. Finally, in the limit $\lim d_E\to\infty$, the induced channel concentrates, with probability approaching one, around the fully depolarizing channel. This allows us to interpret the parameter $d_E$ as a \textit{prior distribution} over the input channels $C$ that are given to the machine. Since $d_I,d_O,d_E$ and $k$ are parameters given to the machine, while it remains agnostic with respect to which input channel it will receive, the machine can be programmed to adapt its purification strategy depending on the information of the prior distribution over $C$ provided in the form of the parameter $d_E$.

In the regime of $d_E\ll d_Id_O$, the machine can expect, with high probability, to receive an input channel that is close to isometric. Hence, on the specific case of $d_E=1$, a strategy that attains zero error exists: since the input is known to be an isometric channel, the optimal strategy is to simply output the input channel, unchanged. In the regime where $d_E\approx d_Id_O$, the machine expects to receive a channel $C$ sampled approximately uniformly. In the case of $d_E=d_Id_O$, it is expected that the error will be maximal, as this is the scenario where the prior is least informative about the specific input channel. Finally, in the regime of $d_E\gg d_Id_O$, the machine can expect to receive an input, with high probability, close to the fully depolarizing channel $\widetilde{D}(\rho)=\tr(\rho)\id/d_I$, which maps all quantum states to the maximally mixed state. Hence, in the limit of $d_E\to\infty$, an optimal strategy that attains zero error also exists: by knowing that the input channel is the fully depolarizing channel, the machine can simply prepare one of its possible purifications. The same reasoning holds for states, which again is the case recovered when $d_I=1$, and an analogous behavior of the error predicted. Finally, for arbitrary $d_I$, $d_O$, and $d_E$,  there exists a strategy that becomes exact in the limit $k\to\infty$. In this limit, the machine can perform exact tomography of the input channel and simply prepare one of its possible purifications as output. However, this strategy necessarily incurs a nonzero error for all finite $k$, including the regimes of $d_E=1$ and $d_E\to\infty$, where an optimal strategy attaining zero error exists.
In Fig.~\ref{fig::error}, we plot the expected scaling of $\epsilon$ with $d_E$ described above.

\begin{figure}[t]
    \begin{center}
    \includegraphics[width=\columnwidth]{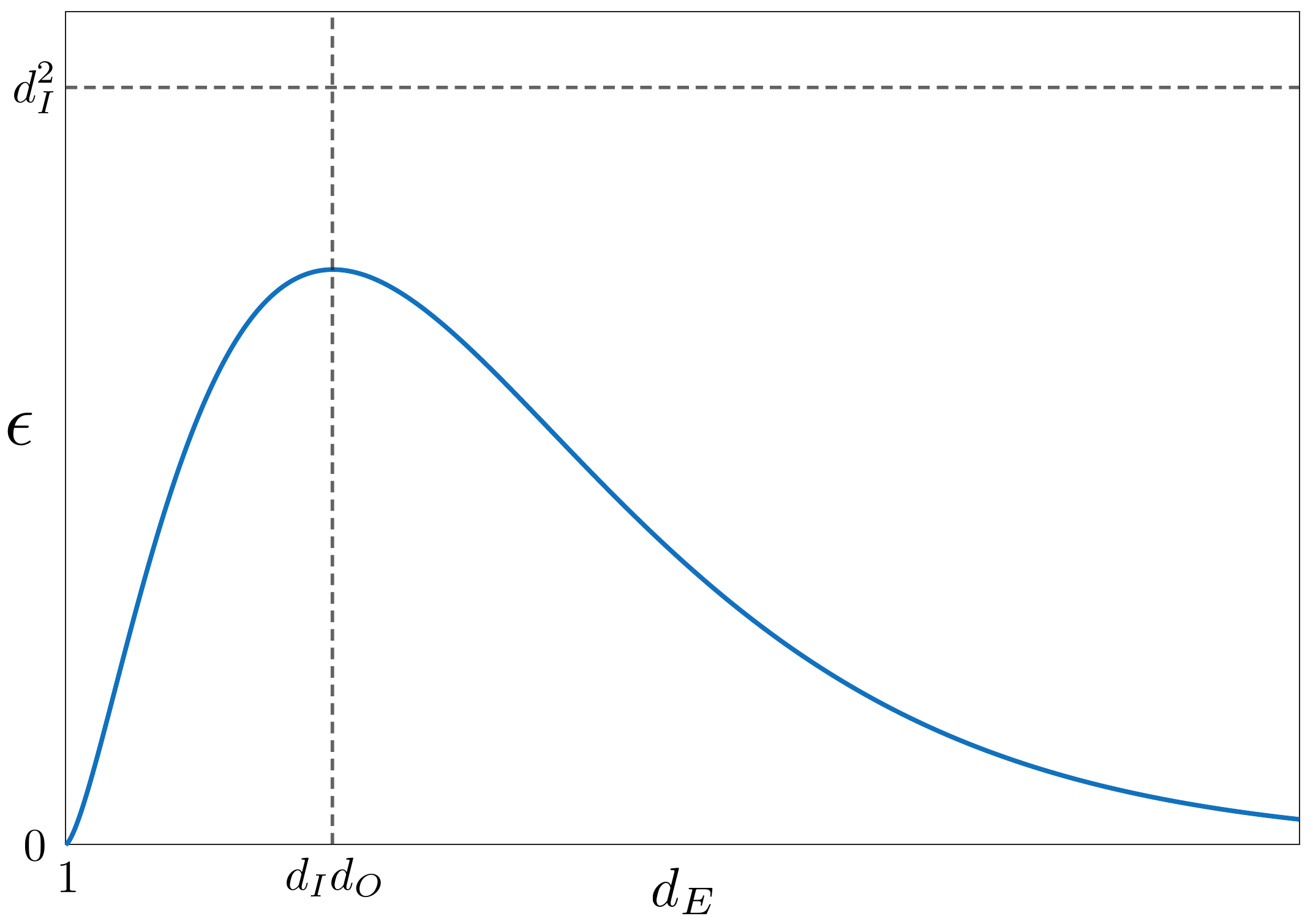}
    \caption{\textbf{Expected scaling of the minimum average error.} Expected scaling of the minimum average error $\epsilon(d_I,d_O,d_E;k)$ in Eq.~\eqref{eq::avg_error_def} as a function of the prior distribution given by the environment dimension $d_E$.} 
\label{fig::error}
\end{center}
\end{figure}

We now compare the performance of different families of physically motivated strategies that the machine can adopt. 

\subsection{Pure-output strategy}\label{sec::pureoutput}

The first family of strategies we consider consists of machines that always output a pure Choi operator, i.e., an isometric channel. This case was considered in Ref.~\cite{liu2025nouniversal} for state purification, where it was shown that if a universal purification machine described by a linear and positive map is constrained to produce a pure output, then it can only produce a fixed pure output, independently of the input state and the number of copies available. We therefore analyze here the performance of such a purification machine, which applies a kind of ``discard-and-reprepare'' strategy that ignores the input channel $C$ and prepares as output a fixed isometric channel. As shown below, this strategy becomes optimal in the limit $d_E\to\infty$, where the input concentrates on the fully depolarizing channel.

Let $\Qcal_{\ket{W}}$ denote a universal approximate purification machine that deterministically outputs a fixed isometric channel $\ketbra{W}$, according to
\begin{equation}
    \Qcal_{\ket{W}}(C)\coloneqq \tr(C)\frac{\ketbra{W}}{d_I} = \ketbra{W}
\end{equation}
for all $C$. The corresponding optimal error within this family is
\begin{equation}\label{eq::def_epsilon_pure}
    \epsilon_\mathrm{pure}(d_I,d_O,d_E) \coloneqq \min_{\ket{W}} \epsilon_{\ket{W}}(d_I,d_O,d_E),
\end{equation}
where
\begin{align}
    \epsilon_{\ket{W}}&(d_I,d_O,d_E) \coloneqq \nonumber \\
    & \EE_C \min_{U_E} \norm{\Qcal_{\ket{W}}(C)-\ketbra{V_{U_E}(C)}}^2_2 \\
    & = \EE_C \min_{U_E}\norm{\ketbra{W}-\ketbra{V_{U_E}(C)}}^2_2.
\end{align}
Notice that $\epsilon(d_I,d_O,d_E;k)\leq\epsilon_\mathrm{pure}(d_I,d_O,d_E)\leq\epsilon_{\ket{W}}(d_I,d_O,d_E)$ for all $d_I,d_I,d_E,k$ and $\ket{W}$.

As we show in App.~\ref{app::pureoutput}, Lemma~\ref{lem::error_pureoutput},
\begin{align}
    \epsilon_{\ket{W}}&(d_I,d_O,d_E) = \nonumber \\
    & 2d_I^2 - 2 d_I^2 \EE_C\left\{ F\left(\frac{C}{d_I},\frac{\tr_E\ketbra{W}}{d_I}\right)\right\} \\
    &= 2d_I^2 - 2\EE_C\left\{\tr(\sqrt{C}\sqrt{\tr_E\ketbra{W}})^2\right\},
\end{align}
where $F(\rho,\sigma)=\tr(\sqrt{\rho}\sqrt{\sigma})^2$ is state fidelity.

As we prove in App.~\ref{app::pureoutput}, the performance of different $\Qcal_{\ket{W}}$ is determined by the amount of entanglement in the $AB|E$ bipartition of $\ket{W}$. Let $\ket{\Omega}$ be a maximally entangled across the $AB|E$ cut, such that $\tr_E\ketbra{\Omega}=\id/d_O$, i.e., $\ket{\Omega}$ is a purification of the fully depolarizing channel $\widetilde{D}$, with Choi operator $D=\id/d_O$. Now let $\ket{\Upsilon}\otimes\ket{\psi}$ be separable across the $AB|E$ cut, so that it represents a purification of an isometric channel $\ketbra{\Upsilon}$. Then, we prove that $\epsilon_{\ket{W}}$ can be tightly upper and lower bounded according to
\begin{equation}
    \epsilon_{\ket{\Omega}}(d_I,d_O,d_E) \leq \epsilon_{\ket{W}}(d_I,d_O,d_E) \leq \epsilon_{\ket{\Upsilon}\otimes\ket{\psi}}(d_I,d_O,d_E),
\end{equation}
where, as shown in App.~\ref{app::pureoutput}, Lemma~\ref{lem::upperb_pureoutput}, we compute that
\begin{equation}
    \epsilon_{\ket{\Upsilon}\otimes\ket{\psi}}(d_I,d_O,d_E) = 2\left(d_I^2 - \frac{d_I}{d_O}\right),
\end{equation}
which is independent of $\ket{\Upsilon}$, $\ket{\psi}$, and $d_E$, and, as proved in App.~\ref{app::pureoutput}, Lemma~\ref{lem::lowerb_pureoutput}, we compute that
\begin{equation}
    \epsilon_{\ket{\Omega}}(d_I,d_O,d_E) = 2d_I^2 - \frac{2}{d_O}\EE_C\left\{\tr(\sqrt{C})^2\right\},
\end{equation}
which is independent of $\ket{\Omega}$. 

This implies that the optimal pure-output strategy is the one that always outputs a purification $\ketbra{\Omega}$ of the fully depolarizing channel. Hence, as we prove in App.~\ref{app::pureoutput},

\begin{theorem}[Minimum error of pure-output strategies]\label{thm::pure-output}
    The minimum average purification error $\epsilon_\mathrm{pure}(d_I,d_O,d_E)$ over all pure-output strategies $\Qcal_{\ket{W}}(C)=\ketbra{W}$, as defined in Eq.~\eqref{eq::def_epsilon_pure}, is  
    \begin{equation}\label{eq::epsilon_pureoutput_solution}
        \epsilon_\mathrm{pure}(d_I,d_O,d_E) = 2d_I^2 - \frac{2}{d_O}\EE_C\left\{\tr(\sqrt{C})^2\right\}.
    \end{equation}
\end{theorem}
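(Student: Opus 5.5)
Building on Lemma~\ref{lem::error_pureoutput}, which gives $\epsilon_{\ket{W}} = 2d_I^2 - 2\,\EE_C\{\tr(\sqrt{C}\sqrt{\sigma_W})^2\}$ with $\sigma_W\coloneqq\tr_E\ketbra{W}$, the plan is to reduce the minimization over $\ket{W}$ to maximizing
\begin{equation*}
    G(\sigma) \coloneqq \EE_C\left\{\tr(\sqrt{C}\sqrt{\sigma})^2\right\}
\end{equation*}
over the set of Choi operators $\sigma$ of channels that admit a dilation with environment dimension $d_E$. Any Choi operator $\sigma$ with $\rank\sigma\leq d_E$ arises in this way. The claimed value corresponds to $\sigma = D = \id/d_O$, which is admissible whenever $d_E\geq d_Id_O$ (and otherwise one should interpret $\ket{\Omega}$ as in the statement preceding the theorem, which I will assume is admissible). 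Lemma~\ref{lem::lowerb_pureoutput} already gives $\epsilon_{\ket{\Omega}} = 2d_I^2 - \frac{2}{d_O}\EE_C\{\tr(\sqrt{C})^2\}$, i.e. $G(D) = \frac{1}{d_O}\EE_C\{\tr(\sqrt{C})^2\}$. It therefore remains to prove $G(\sigma)\leq G(D)$ for every admissible $\sigma$.

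The key tool is the invariance of the input ensemble. If $V$ is Haar distributed, then $(U_I^T\otimes U_O)\,V$ has the same law for all unitaries $U_I$, $U_O$ (after a suitable reparametrization of the input side). Consequently the law of $C$ is invariant under $C\mapsto (\bar U_I\otimes U_O)C(\bar U_I\otimes U_O)^*$, so $G(\sigma)=G(U\sigma U^*)$ for all local unitaries $U=\bar U_I\otimes U_O$. Next I will establish concavity of $G$. The map $\sigma\mapsto\tr(\sqrt{C}\sqrt{\sigma})$ is concave, because $\sigma\mapsto\sqrt{\sigma}$ is operator concave and $\sqrt{C}\geq0$. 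The quantity we need, however, is its square. Here I would instead use $\tr(\sqrt{C}\sqrt{\sigma})^2 = F(C,\sigma)$ (unnormalized), which is jointly concave in its arguments in the form $\sqrt{F}$ and is concave in each argument since $F(\rho,\cdot)$ for fixed $\rho$ is concave. Averaging $\sigma$ over the local unitary group then yields the twirl $\int U\sigma U^*\,dU = \tr_O(\sigma)\otimes\id/d_O\cdot$ (full twirl) $= \id/d_O = D$, using $\tr_O\sigma=\id$. Concavity gives
\begin{equation*}
    G(D)\geq\int G(U\sigma U^*)\,dU = G(\sigma),
\end{equation*}
which proves optimality.

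The main obstacle is justifying concavity of $\sigma\mapsto\tr(\sqrt{C}\sqrt{\sigma})^2$ for a fixed, possibly mixed $C$. This is the concavity of the fidelity in one argument. My first approach is Uhlmann's theorem: write $F(C,\sigma)=\max|\langle c|s\rangle|^2$ over purifications. Given $\sigma=\sum_j p_j\sigma_j$, choose purifications $\ket{s_j}$ together with a flag register, then apply Uhlmann's theorem on the enlarged space to obtain $F(C,\sum_j p_j\sigma_j)\geq\sum_j p_jF(C,\sigma_j)$. This is a standard result, and I would cite it. A second subtlety is the invariance of the Haar-isometry ensemble under the input-side unitary, which acts on $C$ as conjugation by $\bar U_I\otimes\id$ on the $I$ factor. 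This follows from $V\mapsto VU_I$ preserving the Haar measure on isometries. Finally, I would note that the minimum over $\ket{W}$ is attained by $\ket{\Omega}$, which gives Eq.~\eqref{eq::epsilon_pureoutput_solution}.
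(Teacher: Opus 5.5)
Your argument is essentially the paper's proof (Lemmas~\ref{lem::error_pureoutput} and~\ref{lem::lowerb_pureoutput}). It uses the same ingredients: invariance of the Haar--Stinespring law under output conjugation, concavity of the fidelity in its second argument, the output twirl yielding $\id/d_O$ because $\tr_O\sigma=\id_I$ (so your extra input-side twirl is harmless but unnecessary), and attainment by $\ket{\Omega}$.

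The admissibility caveat you flag is real, and the paper does not address it. A vector $\ket{\Omega}\in\Hcal_A\otimes\Hcal_B\otimes\Hcal_E$ with $\tr_E\ketbra{\Omega}=\id/d_O$ exists only when $d_E\geq d_Id_O$. So for $1<d_E<d_Id_O$ both arguments establish the formula only as a lower bound on $\epsilon_\mathrm{pure}$. By a direct calculation I did not carry over from the paper, this bound is in fact strict there (e.g.\ for $d_I=1$, $d_O=3$, $d_E=2$). At $d_E=1$ the formula coincides with the product-output value $2(d_I^2-d_I/d_O)$, so the statement holds there as well as for $d_E\geq d_Id_O$.
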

\vspace*{11pt}

In particular, no partially entangled choice of $\ket{W}$ across $AB|E$ can outperform a maximally entangled $\ket{\Omega}$ while separable outputs $\ket{\Upsilon}\otimes\ket{\psi}$ are the worst within this pure-output family.

In App.~\ref{ap:: reviewrandomchannels}, Lemma~\ref{lemma:: asymptotics trace sqrt of C}, in analyze the behavior of the quantity $\EE_C\{\tr\,(\sqrt{C})^2\}$ in asymptotic regimes of $d_Id_O$, for different ranges of $d_E$.

We now discuss the error $\epsilon_\mathrm{pure}(d_I,d_O,d_E)$ in the limiting regimes of $d_E$. For $d_E=1$, we sample only isometric channels $C=\ketbra{V}$, given that the environment system has a trivial dimension. Since $\ketbra{V}$ is the Choi operator of an isometric channel, it is proportional to a projector, with trace equal to $d_I$. Hence, for every $\ketbra{V}$, $\tr\,(\sqrt{\ketbra{V}})^2=d_I$ for every $\ketbra{V}$. Consequently, in the case of $d_E=1$, Eq.~\eqref{eq::epsilon_pureoutput_solution} yields
\begin{equation}
     \epsilon_\mathrm{pure}(d_I,d_O,d_E) = 2\left(d_I^2 - \frac{d_I}{d_O}\right).
\end{equation}
Therefore, this strategy performs badly in the case where the machine's input is already pure. In fact, as we will see in the following, this strategy performs even worse then one which maps all inputs to the actual maximally depolarizing channel.

In the limit where $d_E\to\infty$, however, the fixed pure-output machine performs optimally, since the input is the maximally depolarizing channel and the strategy of outputting $\ketbra{\Omega}$, one purification of the depolarizing channel, attains zero error, according to
\begin{align}
    &\epsilon_\mathrm{pure}(d_I,d_O,d_E\to\infty)= 2d_I^2 - \frac{2}{d_O}\tr(\sqrt{\frac{\id}{d_O}})^2 =0.
\end{align}

\subsection{Append-environment strategy}\label{sec::appendenv}

The append-environment strategy is a family of ``do-nothing'' strategies in which the machine leaves the input $C$ unchanged and appends a fixed state $\rho_E$ onto the environment. This is the optimal strategy when $d_E=1$, and the input quantum channel is guaranteed to be already a pure, isometric channel. 

Such strategies are defined by 
\begin{equation}
    \Qcal_{\mathrm{app-}\rho_E}(C) \coloneqq C\otimes\rho_{E}
\end{equation}
for all $C$. The error within this family is
\begin{equation}\label{eq::def_appendenv}
    \epsilon_\mathrm{app}(d_I,d_O,d_E)\coloneqq \min_{\rho_E} \epsilon_{\mathrm{app-}\rho_E}(d_I,d_O,d_E)
\end{equation}
where
\begin{align}
    \epsilon_{\mathrm{app-}\rho_E}&(d_I,d_O,d_E)\coloneqq  \nonumber \\ 
    &\EE_C \min_{U_E}\norm{\Qcal_{\mathrm{app-}\rho_E}(C)-\ketbra{V_{U_E}(C)}}^2_2 \\
    &= \EE_C \min_{U_E}\norm{C\otimes\rho_{E}-\ketbra{V_{U_E}(C)}}^2_2
\end{align}
which satisfies $\epsilon(d_I,d_O,d_E;k)\leq\epsilon_{\mathrm{app-}\rho_E}(d_I,d_O,d_E)$ for all $d_I,d_I,d_E,k$ and $\rho_E$.

As we prove in App.~\ref{app::appendenvironment},

\begin{theorem}[Minimum error of append-environment strategies]\label{thm::append-env}
    The minimum average purification error $\epsilon_\mathrm{app}(d_I,d_O,d_E)$ over all append-environment strategies $\Qcal_{\mathrm{app-}\rho_E}(C) = C\otimes\rho_{E}$, as defined in Eq.~\eqref{eq::def_appendenv}, is
    \begin{align}
    \begin{split}\label{eq::epsilon_app_solution}
        &\epsilon_\mathrm{app}(d_I,d_O,d_E) \\
        &= d_I^2-\frac{d_O^2d_E^2-1}{d_Id_O(d_E^2-1)+d_I^2d_E(d_O^2-1)} \sum_{i}(\mathbb{E}_C\!\{(c_i^\downarrow)^2\})^2,
    \end{split}
    \end{align}
    where $\{c_i^\downarrow\}_i$ are the eigenvalues of $C$ in non-increasing order.
\end{theorem}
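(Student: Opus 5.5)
The plan is to compute the error of a fixed append strategy $\Qcal_{\mathrm{app-}\rho_E}$ exactly, maximize over $U_E$ in closed form, and then minimize the resulting expression over $\rho_E$. The minimization reduces to an unconstrained quadratic problem.

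\textbf{Expansion.} Since $\ketbra{V_{U_E}(C)}$ is rank one with trace $d_I$,
\begin{equation*}
\norm{C\otimes\rho_E-\ketbra{V_{U_E}(C)}}_2^2=\tr(C^2)\tr(\rho_E^2)+d_I^2-2\bra{V_{U_E}(C)}C\otimes\rho_E\ket{V_{U_E}(C)}.
\end{equation*}

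\textbf{Overlap term.} Write the spectral decomposition $C=\sum_i c_i\ketbra{e_i}$. Every purification has a Schmidt form $\ket{V_{U_E}(C)}=\sum_i\sqrt{c_i}\ket{e_i}\otimes\ket{f_i}$, where $\{\ket{f_i}\}$ is orthonormal in $\Hcal_E$. Ranging over $U_E$ is exactly ranging over all such orthonormal families $\{\ket{f_i}\}$. Because $\bra{e_i}C\ket{e_j}=c_i\delta_{ij}$, the overlap collapses to
\begin{equation*}
\bra{V_{U_E}(C)}C\otimes\rho_E\ket{V_{U_E}(C)}=\sum_i c_i^2\bra{f_i}\rho_E\ket{f_i}.
\end{equation*}
Note that $\rank C\le d_E$, so at most $d_E$ eigenvalues are nonzero. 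The vector of diagonal entries $(\bra{f_i}\rho_E\ket{f_i})_i$ is majorized by the spectrum $r^\downarrow$ of $\rho_E$ (Schur--Horn). By the rearrangement inequality, the maximum over bases is therefore $\sum_i (c_i^\downarrow)^2 r_i^\downarrow$, attained by aligning the $\ket{f_i}$ with the eigenbasis of $\rho_E$ in matching order.

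\textbf{Averaging.} Taking $\EE_C$ gives
\begin{equation*}
\epsilon_{\mathrm{app-}\rho_E}=d_I^2+a\,\norm{r}^2-2\,b\cdot r^\downarrow,
\end{equation*}
with $a\coloneqq\EE_C\tr(C^2)$ and $b_i\coloneqq\EE_C\{(c_i^\downarrow)^2\}$, where $b$ is nonincreasing and $\sum_i b_i=a$.

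\textbf{Minimization over $\rho_E$.} Minimizing over probability vectors $r$, first drop the constraints. The stationary point is $r=b/a$, which is automatically nonnegative, sums to one, and is correctly ordered. It is therefore the constrained optimum, and it yields
\begin{equation*}
\epsilon_{\mathrm{app}}=d_I^2-\frac{\sum_i b_i^2}{a}.
\end{equation*}

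\textbf{Main obstacle: evaluating $a$.} It remains to compute $a=\EE_C\tr(C^2)$, which should equal
\begin{equation*}
a=\frac{d_Id_O(d_E^2-1)+d_I^2d_E(d_O^2-1)}{d_O^2d_E^2-1}.
\end{equation*}
This agrees with the value $a=d_I^2$ at $d_E=1$. I expect this step to be the main obstacle. The plan is as follows.

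\begin{itemize}
\item Realize $V=U(\id_I\otimes\ket{0})$ for a Haar unitary $U$ on $\Hcal_O\otimes\Hcal_E$, taking $d_Od_E\ge d_I$.
\item Write $\tr(C^2)=\tr[(\ketbra{V})^{\otimes2}(\mathrm{SWAP}_{IO}\otimes\id_{EE})]$.
\item Apply the second-moment Weingarten formula for $U^{\otimes2}$ on the $OE$ space. Only the two permutations contribute, with coefficients $1/(D^2-1)$ and $-1/(D(D^2-1))$, where $D=d_Od_E$.
\item Contract each term against partial swaps on $I$ and $O$; the terms give traces such as $d_I^2$, $d_I$, $d_O$, $d_E$.
\end{itemize}

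Collecting terms over the common denominator $d_O^2d_E^2-1$ should reproduce the stated expression. I will check the result against $d_E=1$ and against $d_I=1$, where it should reduce to the known purity $(d_O+d_E)/(d_Od_E+1)$ of a random induced state.

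The $b_i$ themselves are left unevaluated, as in the statement.
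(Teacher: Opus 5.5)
Your proposal is correct and follows essentially the same route as the paper: expand the Hilbert--Schmidt norm, show that the $U_E$-optimized overlap equals $\sum_i (c_i^\downarrow)^2\lambda_i^\downarrow(\rho_E)$, and reduce to the strictly convex quadratic $d_I^2+a\norm{r}^2-2\,b\cdot r^\downarrow$, whose unconstrained minimizer $r=b/a$ is already a feasible ordered probability vector; then evaluate $a=\EE_C\{\tr(C^2)\}$ with the second-moment Weingarten formula. The differences are only in execution. Your Schmidt-form plus Schur--Horn argument replaces the paper's $\ket{\Phi^+}$-trick plus von Neumann trace inequality, and handles $d_E\neq d_Id_O$ more transparently; your feasibility-of-the-stationary-point argument replaces the KKT thresholding; and your target value of $a$, which passes your $d_E=1$ and $d_I=1$ checks, is the correct one (it matches the main text, whereas the displayed statement of the appendix lemma has $d_I$ and $d_I^2$ interchanged).
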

\vspace*{11pt}

Notice that the quantity that multiplies $\sum_{i}(\mathbb{E}_C\{(c_i^\downarrow)^2\})^2$ in Eq.~\eqref{eq::epsilon_app_solution} is equivalent to $1/\EE_C\{\tr(C^2)\}$, the inverse of the average purity of $C$, which as calculated explicitly in App.~\ref{ap:: reviewrandomchannels}, Lemma~\ref{lemma:: average purity of C}, is
\begin{equation}\label{eq::avg_tr_Csquared}
    \EE_C\{\tr(C^2)\} = \frac{d_Id_O(d_E^2-1)+d_I^2d_E(d_O^2-1)}{d_O^2d_E^2-1}.
\end{equation}
Notice that this quantity is monotonically decreasing in $d_E$ and satisfies
\begin{equation}
    \frac{d_I}{d_O} \leq \EE_C\{\tr(C^2)\} \leq d_I^2,
\end{equation}
for all $d_I,d_O$, and $d_E$.

The case of quantum states is recovered by taking $d_I=1$ in Eq.~\eqref{eq::epsilon_app_solution}.

As shown in App.~\ref{app::appendenvironment}, the minimal error of append-environment strategies obeys
\begin{align}\label{eq::epsilon_app_inequality}
     d_I^2 - \EE_C\{\tr(C^2)\} \leq \epsilon_\mathrm{app}(d_I,d_O,d_E) \leq d_I^2 - \frac{1}{d_E}\EE_C\{\tr(C^2)\},
\end{align}
The upper bound on $\epsilon_\mathrm{app}$, in the right-most side of Eq.~\eqref{eq::epsilon_app_inequality}, is attained in the case where the appended environment state in maximally mixed, i.e., $\rho_E=\id/d_E$. The case of pure appended environment states is also discussed in App.~\ref{app::appendenvironment}, Prop.~\ref{prop::app_purestate},in which case, the error does not in general compare to the upper bound achieved by appending a maximally mixed state.

We now examine the error $\epsilon_\mathrm{app}(d_I,d_O,d_E)$ in Eq.~\eqref{eq::epsilon_app_solution} in different regimes of $d_E$. In the case of $d_E=1$, we have that $C$ is always isometric, and hence has a single non-zero eigenvalue which equals $d_I$. In this case, $\sum_{i}(\mathbb{E}_C\!\{(c_i^\downarrow)^2\})^2=d_I^4$. For $d_E=1$ we also have that $\EE_C\{\tr(C^2)\}=d_I^2$. Therefore, for trivial-dimension environment system Eq.~\eqref{eq::epsilon_app_solution} yields
\begin{equation}
    \epsilon_{\mathrm{app}}(d_I,d_O,d_E=1) = 0.
\end{equation}
This shows that this strategy is optimal when only isometric channels are given as input. Furthermore, in the limit of $d_E\to\infty$, we have that $C$ is the fully depolarizing channel, and hence has all $d_Id_O$ eigenvalues equal to $1/d_O$. In this case, $\sum_{i}(\mathbb{E}_C\!\{(c_i^\downarrow)^2\})^2=d_I/d_O^3$. For $\lim d_E\to \infty$ we also have that $\EE_C\{\tr(C^2)\}=d_I/d_O$. Therefore, here Eq.~\eqref{eq::epsilon_app_solution} yields

\begin{equation}    
    \epsilon_{\mathrm{app}}(d_I,d_O,d_E\to\infty) = d_I^2-\frac{1}{d_O^2}.
\end{equation}

\subsection{Map-to-depolarizing strategy}\label{sec::maptodep}

This strategy is also of the ``discard-and-reprepare'' type, however, contrarily to the strategies in Sec.~\ref{sec::pureoutput}, it outputs the fully depolarizing channel $\widetilde{D}(\rho)=\trace(\rho){\id}/{d_I}$, with Choi operator $D=\id/{d_O}$, which is the barycenter of the space of CPTP maps. Hence, the quantum machine $\Qcal_\mathrm{dep}$ that implements this strategy is 
\begin{equation}
    \Qcal_\mathrm{dep}(C)\coloneqq\tr(C)\frac{\id}{d_Id_Od_E} = \frac{\id}{d_Od_E}
\end{equation}
for all $C$. The associated error
\begin{align}
\begin{split}\label{eq::def_maptodepol}
    \epsilon_\mathrm{dep}&(d_I,d_O,d_E)\coloneqq \\
    &\EE_C \min_{U_E}\norm{\Qcal_\mathrm{dep}(C)-\ketbra{V_{U_E}(C)}}^2_2,
\end{split}
\end{align}
satisfying $\epsilon(d_I,d_O,d_E;k)\leq\epsilon_\mathrm{dep}(d_I,d_O,d_E)$ for all $d_I,d_I,d_E,k$, can be computed exactly, as we show in App.~\ref{app::maptodepol}:

\begin{theorem}[Error of the map-to-depolarizing strategy]\label{thm::maptodepol}
    The average purification error $\epsilon_\mathrm{dep}(d_I,d_O,d_E)$ of the map-to-depolarizing strategy $\Qcal_\mathrm{dep}(C) = \id/d_Od_E$, as defined in Eq.~\eqref{eq::def_maptodepol}, is
    \begin{equation}\label{eq::epsilon_maptodep_solution}
        \epsilon_\mathrm{dep}(d_I,d_O,d_E) = d_I^2 - \frac{d_I}{d_Od_E}.
    \end{equation}
\end{theorem}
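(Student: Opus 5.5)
We plan to expand the squared Hilbert--Schmidt norm directly, as done in Eq.~\eqref{eq::error_expanded}, and observe that the only term that could depend on $U_E$ is the overlap term. For the fixed output $\Qcal_\mathrm{dep}(C)=\id/(d_Od_E)$, on the joint space $\Hcal_A\otimes\Hcal_B\otimes\Hcal_E$ of dimension $d_Id_Od_E$, we write
\begin{align}
&\norm{\frac{\id}{d_Od_E}-\ketbra{V_{U_E}(C)}}_2^2 = \tr\!\left[\frac{\id}{d_O^2d_E^2}\right] \nonumber\\
&\quad + \tr\!\left[\ketbra{V_{U_E}(C)}^2\right] - \frac{2}{d_Od_E}\tr\!\left[\ketbra{V_{U_E}(C)}\right].
\end{align}

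The key steps, in order, are the following. First, the purity of the output is $\tr[\id]/(d_Od_E)^2 = d_Id_Od_E/(d_Od_E)^2 = d_I/(d_Od_E)$. Second, since $\ket{V_{U_E}(C)}$ is the Choi vector of an isometry $\Hcal_I\to\Hcal_O\otimes\Hcal_E$, we have $\braket{V_{U_E}(C)}=\tr(V^*V)=d_I$, so $\tr[\ketbra{V_{U_E}(C)}^2]=d_I^2$ and $\tr[\ketbra{V_{U_E}(C)}]=d_I$. Third, the overlap term is therefore $2d_I/(d_Od_E)$. Collecting terms gives
\begin{equation}
\frac{d_I}{d_Od_E}+d_I^2-\frac{2d_I}{d_Od_E}=d_I^2-\frac{d_I}{d_Od_E}.
\end{equation}

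This value is independent of both $U_E$ and $C$. Hence the minimization over $U_E$ is trivial, and the expectation $\EE_C$ over Haar-random Stinespring isometries acts on a constant, so no random-channel integration (e.g.\ from App.~\ref{ap:: reviewrandomchannels}) is needed. This yields Eq.~\eqref{eq::epsilon_maptodep_solution}.

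There is no substantial obstacle. The only point needing care is the normalization of the Choi vector: its norm is $\sqrt{d_I}$ rather than $1$, which is what produces the factor $d_I$ instead of $1$ in the overlap term. Along with this, one should keep track of the total dimension $d_Id_Od_E$ when computing $\tr[\id]$.
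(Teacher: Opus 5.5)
Your proposal is correct and follows essentially the same route as the paper's proof. Both expand the squared Hilbert--Schmidt norm into the output purity $d_I/(d_Od_E)$, the target purity $d_I^2$, and the overlap term $2d_I/(d_Od_E)$; the overlap follows from $\tr\ketbra{V_{U_E}(C)}=d_I$ and is independent of $U_E$ and $C$, so both the minimization and the average are trivial.
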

\vspace*{11pt}

In different regimes of $d_E$, this quantity amounts to
\begin{align}
    \epsilon_\mathrm{dep}(d_I,d_O,d_E=1) &= d_I^2 - \frac{d_I}{d_O} \\
    \epsilon_\mathrm{dep}(d_I,d_O,d_E=d_Id_O) &= d_I^2 - \frac{1}{d_O^2} \\
    \epsilon_\mathrm{dep}(d_I,d_O,d_E\to\infty) &= d_I^2.
\end{align}

The case of quantum states is obtained by taking $d_I=1$ in Eq.~\eqref{eq::epsilon_maptodep_solution}.

\subsection{General error upper bound}\label{sec::genupperbound}

A general error upper bound can be computed by replacing, for each input channel, the minimization over environment unitaries by an average over them, as in
\begin{align}
    &\epsilon(d_I,d_O,d_E;k) = \nonumber \\
    &\min_\Qcal \EE_C \min_{U_E} \norm{\Qcal(C^{\otimes k}) - \ketbra{V_{U_E}(C)}}^2_2 \\
    &\leq \min_\Qcal \EE_C \EE_{U_E} \norm{\Qcal(C^{\otimes k}) - \ketbra{V_{U_E}(C)}}^2_2 \\
    &=: \epsilon_{\mathrm{avg-}U_E}(d_I,d_O,d_E), \label{eq::def_genuppbound}
\end{align}
which is computed in App.~\ref{app::genuperbound} to be exactly

\begin{theorem}[General error upper bound]\label{thm::genupperbound}
    The upper bound $\epsilon_{\mathrm{avg-}U_E}(d_I,d_O,d_E)$ for the minimum average purification error, as defined in Eq.~\eqref{eq::def_genuppbound}, is 
    \begin{align}
    \begin{split}\label{eq::epsilon_avgUE_solution}
        \epsilon_{\mathrm{avg-}U_E}&(d_I,d_O,d_E) \\
        &= d_I^2 - \frac{1}{d_E} \frac{d_I^2d_E(d_O^2-1)+d_Id_O(d_E^2-1)}{d_E^2d_O^2-1}
    \end{split}    
    \end{align}
    and is independent of $k$.
\end{theorem}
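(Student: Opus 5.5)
The plan is to expand the squared Hilbert-Schmidt norm and average over $U_E$ with respect to the Haar measure on $\Hcal_E$ first, before minimizing over $\Qcal$. For a fixed machine output $X\coloneqq\Qcal(C^{\otimes k})$ we have
\begin{equation}
\EE_{U_E}\norm{X-\ketbra{V_{U_E}(C)}}_2^2 = \tr(X^2) + d_I^2 - 2\tr\!\left[X\,\EE_{U_E}\ketbra{V_{U_E}(C)}\right].
\end{equation}
The key step is the twirl identity. Averaging $(\id\otimes U_E)\ketbra{V}(\id\otimes U_E^*)$ over the Haar measure on $E$ gives the completely depolarized environment, $\tr_E[\ketbra{V}]\otimes \id/d_E = C\otimes\id_E/d_E$. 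We therefore get
\begin{equation}
\EE_{U_E}\norm{X-\ketbra{V_{U_E}(C)}}_2^2=\norm{X - C\otimes\tfrac{\id}{d_E}}_2^2 + d_I^2 - \tfrac{1}{d_E}\tr(C^2),
\end{equation}
using $\tr[(C\otimes\id/d_E)^2]=\tr(C^2)/d_E$.

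The minimization over $\Qcal$ then reduces to minimizing $\EE_C\norm{\Qcal(C^{\otimes k})-C\otimes\id/d_E}_2^2$, which is nonnegative. This quantity is exactly zero for the append-environment machine $\Qcal_{\mathrm{app-}\id/d_E}$ of Sec.~\ref{sec::appendenv}. That machine uses only one copy of $C$ and discards the other $k-1$ copies, which is a valid superchannel for every $k$. The minimum over $\Qcal$ is therefore attained independently of $k$. This yields
\begin{equation}
\epsilon_{\mathrm{avg-}U_E}(d_I,d_O,d_E)=d_I^2-\tfrac{1}{d_E}\EE_C\{\tr(C^2)\},
\end{equation}
which is consistent with the upper bound already quoted in Eq.~\eqref{eq::epsilon_app_inequality}.

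Finally, we substitute the average purity from Eq.~\eqref{eq::avg_tr_Csquared}, i.e.\ Lemma~\ref{lemma:: average purity of C}, which gives the stated closed form. The only nontrivial ingredient is justifying the twirl identity $\EE_{U}(\id\otimes U)\ketbra{V}(\id\otimes U^*)=C\otimes \id/d_E$. This follows from Schur's lemma: the Haar average of $U Y U^*$ over the full unitary group on $\Hcal_E$ is $\tr(Y)\id/d_E$, applied blockwise in the $AB$ system. I expect no real obstacle. The most delicate point is checking that the minimizer is an admissible completely CPTP-preserving map for every $k$, which holds because tracing out copies and appending a fixed state are physical operations.
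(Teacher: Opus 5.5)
Your proposal is correct and follows the paper's proof essentially step for step. Twirling over $U_E$ replaces the target by $C\otimes\id/d_E$, and completing the square leaves $\EE_C\norm{\Qcal(C^{\otimes k})-C\otimes\id/d_E}_2^2-\frac{1}{d_E}\EE_C\{\tr(C^2)\}$; the append-maximally-mixed machine makes the first term vanish for every $k$, and the average purity then gives the closed form. Your extra remarks (Schur's lemma for the twirl, nonnegativity of the residual term, and that discarding $k-1$ copies keeps the minimizer a valid superchannel) make explicit what the paper leaves implicit. When substituting, use Eq.~\eqref{eq::avg_tr_Csquared}, which agrees with the final line of the proof of Lemma~\ref{lemma:: average purity of C}. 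Do not use that lemma's displayed statement: it has the $d_I$ and $d_I^2$ prefactors interchanged and would give a wrong result.
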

\vspace*{11pt}

Notice that the second term in the above expression is equal to $\EE_C\{\tr(C^2)\}/d_E$.

In the extremal regimes of $d_E$, we have that this general upper bound yields
\begin{equation}
    \epsilon_{\mathrm{avg-}U_E}(d_I,d_O,d_E=1) = 0,
\end{equation}
attaining the optimal value for $d_E=1$, and
\begin{equation}    
    \epsilon_{\mathrm{avg-}U_E}(d_I,d_O,d_E\to\infty) = d_I^2.
\end{equation} 
The case of quantum states is recovered by taking $d_I=1$ in Eq.~\eqref{eq::epsilon_avgUE_solution}.

Notice that this is, in general, different from the problem considered in Refs.~\cite{chen2024localtestunitarilyinvariant,chen2025quantumchanneltomographyestimation,tang2025,girardi2025random2,yoshida2025}, where the goal of the machine is not to output one of the possible purifications of the input, but rather to output the average over all possible purifications. Namely,
\begin{align}
\begin{split}
    &\min_\Qcal \EE_C \EE_{U_E} \norm{\Qcal(C^{\otimes k}) - \ketbra{V_{U_E}(C)}}^2_2 \\
    &\hspace*{7em} \neq \\
    &\min_\Qcal \EE_C  \norm{\Qcal(C^{\otimes k}) - \EE_{U_E} \ketbra{V_{U_E}(C)}}^2_2,
\end{split}    
\end{align}
since the square of the Hilbert-Schmidt distance is a nonlinear function.
The latter case trivially reduces to 
\begin{align}
\begin{split}
    &\min_\Qcal \EE_C  \norm{\Qcal(C^{\otimes k}) - \EE_{U_E} \ketbra{V_{U_E}(C)}}^2_2 \\
    &= \min_\Qcal \EE_C  \norm{\Qcal(C^{\otimes k}) - C\otimes\frac{\id}{d_E}}^2_2 = 0,
\end{split}
\end{align}
by taking $\Qcal(C^{\otimes k})=C\otimes\id/d_E$, which is not the case for $\epsilon_{\mathrm{avg-}U_E}(d_I,d_O,d_E)$ in general.

\subsection{Estimation-based many-copy strategy}\label{sec::estimation}
All of the previous upper bounds are achieved by strategies that are effectively $1$-copy strategies. That is, considering more copies of the input does not improve the attained error for any of the strategies analyzed so far. We now turn to a genuinely many-copy strategy whose performance depends on $k$, namely an estimation-based protocol. In the limit $k\to\infty$, the unknown input can be characterized perfectly via tomography, and the machine can therefore prepare one of its purifications, attaining zero error. At the same time, for every finite $k$, the error remains strictly positive, even in the case where $d_E=1$ and the input channel is guaranteed to be isometric.

Given \(k\) copies of
the normalized Choi state \(C/d_I\), the machine first applies the mixed-state
tomography procedure of Ref.~\cite[Theorem~1.3]{PelecanosEtAl2025}.  A single
run produces a classical outcome \(y\in\mathcal Y\), distributed according to
\(P(\cdot\mid C)\), together with an estimate \(\widehat C_y/d_I\) of the
input state.  The machine then prepares a fixed-gauge purification
\(\ket{\widehat V_y(C)}\) of \(\widehat C_y\), for instance by fixing the
environment gauge \(U_E=\id_E\).  Thus, conditioned on \(y\), the output is
\(\ketbra{\widehat V_y(C)}\).  If the classical outcome is forgotten, the
corresponding deterministic quantum operation outputs the averaged state
\begin{align}\label{eq::def_Qtomo}
    \Qcal_{\mathrm{tomo}}(C^{\otimes k})
    \coloneqq
    \int_{\mathcal Y}
    dP(y\mid C)\,
    \ketbra{\widehat V_y(C)} .
\end{align}
This channel is generally mixed and satisfies
\begin{align}
    \tr_E \left[\Qcal_{\mathrm{tomo}}(C^{\otimes k})\right]
    =
    \int_{\mathcal Y}
    dP(y\mid C)\,
   \widehat C_y.
\end{align}

Naturally, the error attained by such a strategy is
\begin{align}
\begin{split}\label{eq::def_estimation}
    &\epsilon_{\mathrm{tomo}} (d_I,d_O,d_E;k)\coloneqq \\
    &\EE_C \min_{U_E} \norm{\Qcal_{\mathrm{tomo}}(C^{\otimes k}) - \ketbra{V_{U_E}(C)}}^2_2.
\end{split}
\end{align}

By directly adapting fidelity-error and concentration bounds of Ref.~\cite{guta2018faststatetomographyoptimal,PelecanosEtAl2025} to the case of Choi operators of quantum channels, we show in App.~\ref{app::estimation} that

\begin{theorem}[Estimation-based strategy]\label{thm::estimation}
The average purification error $\epsilon_{\mathrm{tomo}}$ achieved by the estimation-based purification strategy $\Qcal_{\mathrm{tomo}}$ defined in Eq.~\eqref{eq::def_Qtomo} is upper bounded by
\begin{align}
\begin{split}
    &\epsilon_{\mathrm{tomo}}(d_I,d_O,d_E;k) \\
    &\leq
     4 d_I^2\min\left\{1,\kappa\,
    \frac{
        \min\{d_E,d_Id_O\}d_I d_O+\log(1/\delta)
    }{k}\right\}.
\end{split}
\end{align}
where $\kappa>0$ is a universal constant. Here $1-\delta$ denotes the success probability of the estimation strategy of Ref.~\cite{PelecanosEtAl2025}: with probability at least $1-\delta$, the output is an estimate $\hat C$ satisfying
\begin{align}\label{eq:epsilon-k}
    F(C/d_I,\hat C/d_I) \geq 1-\varepsilon_k,
\end{align}
with
\begin{align}
    \varepsilon_k
    :=
    \min\left\{1,\kappa\,
    \frac{
        \operatorname{rank}(C)d_I d_O+\log(1/\delta)
    }{k}\right\}.
\end{align}
\end{theorem}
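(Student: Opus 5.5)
Our approach is to bound the error pointwise in $C$ and only then average. Fix a channel $C$ and an outcome $y$. We choose the gauge $U_E$ in the minimization of Eq.~\eqref{eq::def_estimation} adapted to the estimate. By Uhlmann's theorem there is a purification $\ket{V_{U_E}(C)}$ of $C$ such that
\begin{equation}
\abs{\braket{V_{U_E}(C)}{\widehat V_y(C)}}^2/d_I^2 = F(C/d_I,\widehat C_y/d_I).
\end{equation}
The difficulty is that this optimal $U_E$ depends on $y$, while the minimization in Eq.~\eqref{eq::def_estimation} takes a single $U_E$ for the averaged output $\Qcal_{\mathrm{tomo}}(C^{\otimes k})$. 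The main obstacle is therefore to handle the gauge consistently across outcomes.

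Our first remedy is to fix the gauge once, using the fixed-gauge construction, for instance canonical purifications $\ket{\widehat V_y}=(\sqrt{\widehat C_y}\otimes\id)\ket{\Phi}$ with $\Phi$ on $AB|E'$. We compare them with the canonical purification of $C$ in the same gauge. For these, the overlap equals $\tr(\sqrt{C}\sqrt{\widehat C_y})$, which is at least $d_I\sqrt{F}$ by the standard inequality relating the Bhattacharyya-type overlap to fidelity; in fact it is exactly the root fidelity when the two operators commute. If this gauge-fixing argument fails, we fall back on the cruder bound below, which only needs closeness in trace norm.

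Next we convert overlap to Hilbert--Schmidt distance. For a fixed target $P=\ketbra{V_{U_E}(C)}$ with $\tr P=d_I$, convexity of $\norm{\cdot}_2^2$ gives
\begin{equation}
\norm{\Qcal_{\mathrm{tomo}}-P}_2^2\le\int dP(y|C)\,\norm{\ketbra{\widehat V_y}-P}_2^2 .
\end{equation}
For two rank-one operators of trace $d_I$, the integrand equals $2d_I^2(1-\abs{\braket{\widehat V_y}{V}}^2/d_I^2)$. Using the Fuchs--van de Graaf inequality together with $\norm{\cdot}_2\le\norm{\cdot}_1$ on the rank-two difference, we can bound it by
\begin{equation}
2d_I^2\bigl(1-F(C/d_I,\widehat C_y/d_I)\bigr)\cdot 2 \le 4d_I^2\,\varepsilon
\end{equation}
on the good event. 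Since $1-\sqrt{F}\le 1-F$ and the squared-distance formula already gives the factor $2$, the constant $4$ leaves slack, which we use to absorb the gauge mismatch.

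Finally we split over the event of Ref.~\cite{PelecanosEtAl2025}. With probability at least $1-\delta$ we have $F\ge1-\varepsilon_k$, so the contribution is at most $4d_I^2\varepsilon_k$. On the failure event we use the trivial bound $\norm{\ketbra{\widehat V_y}-P}_2^2\le 2d_I^2$. This gives the $\min\{1,\cdot\}$ form, and the extra $\delta$ is absorbed into the $\log(1/\delta)$ term by choosing $\delta\sim 1/k$ if needed. To average over $C$, we note that for Haar-induced channels $\rank(C)=\min\{d_E,d_Id_O\}$ almost surely. Substituting this into $\varepsilon_k$ gives the stated bound uniformly in $C$, so the average $\EE_C$ preserves it.
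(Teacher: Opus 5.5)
Your architecture matches the paper's. You fix one gauge for all outcomes, lower-bound the overlap of the fixed-gauge purifications on the good event, use a trivial bound on the failure event, and substitute $\operatorname{rank}(C)=\min\{d_E,d_Id_O\}$. Your Jensen step on $\norm{\cdot}_2^2$ is equivalent to the paper's bound $\tr[\Qcal_{\mathrm{tomo}}(C^{\otimes k})^2]\le d_I^2$.

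The gap is exactly at the step you flag as the crux: the inequality you invoke points the wrong way. Since $\abs{\tr X}\le\norm{X}_1$, one always has $\tr(\sqrt{C}\sqrt{\widehat C_y})\le\norm{\sqrt{C}\sqrt{\widehat C_y}}_1=d_I\sqrt{F}$. This is strict in general: for two distinct, non-orthogonal pure states, $\tr(\sqrt\rho\sqrt\sigma)=F<\sqrt F$. So the canonical-gauge argument as written does not give $\abs{\braket{\widehat V_y}{V}}^2\ge d_I^2F$.

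Your fallback does not rescue it. Fuchs--van de Graaf and $\norm{\cdot}_2\le\norm{\cdot}_1$, applied to $\ketbra{\widehat V_y}-\ketbra{V}$, only relate norms of that difference to the overlap of the same two vectors; they never bring in $F(C/d_I,\widehat C_y/d_I)$. If you instead use trace-norm closeness of $C$ and $\widehat C_y$, Powers--St\o{}rmer gives $\norm{\sqrt{C}-\sqrt{\widehat C_y}}_2^2\le\norm{C-\widehat C_y}_1\le 2d_I\sqrt{1-F}$. That yields only a per-outcome error $4d_I^2\sqrt{\varepsilon_k}$, i.e.\ $O(k^{-1/2})$ rather than the stated $O(1/k)$. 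Treating the factor $4$ as ``slack to absorb the gauge mismatch'' is not an argument, because a gauge mismatch is not controlled by any constant factor.

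The missing ingredient is the opposite-direction bound $\tr(\sqrt{C}\sqrt{\widehat C_y})\ge d_I F$. It follows from H\"older's inequality with exponents $(4,2,4)$:
\[
d_I\sqrt{F}=\norm{\sqrt{C}\sqrt{\widehat C_y}}_1\le\norm{C^{1/4}}_4\,\norm{C^{1/4}\widehat C_y^{1/4}}_2\,\norm{\widehat C_y^{1/4}}_4=\sqrt{d_I}\,\bigl[\tr(\sqrt{C}\sqrt{\widehat C_y})\bigr]^{1/2}.
\]
This gives $\abs{\braket{\widehat V_y}{V}}^2\ge d_I^2F^2\ge d_I^2(1-\varepsilon_k)^2$ on the good event. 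That is precisely the bound the paper asserts, without proof, from the ``fixed coherent gauge.'' The per-outcome error is then $2d_I^2(1-F^2)\le 4d_I^2(1-F)\le 4d_I^2\varepsilon_k$, so the factor $4$ comes from $1-F^2\le 2(1-F)$, not from slack.

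Finally, your failure-event term $2d_I^2\delta$ is not dominated by $4d_I^2\varepsilon_k$ for general $\delta$. Setting $\delta\sim 1/k$ changes $\kappa$ and removes $\delta$ as a free parameter. The paper's own proof leaves the same $O(d_I^2\delta)$ remainder and drops it in the main-text statement, so this looseness is shared rather than specific to your argument.
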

\vspace*{11pt}

Note that the $\min\{d_E,d_Id_O\}=\rank(C)$, for randomly sampled $C$, depends only on the local dimensions and the prior $d_E$. As expected, in the limit of $k\to\infty$, $\epsilon_{\mathrm{tomo}}=0$.

\subsection{Comparison of all strategies}

To facilitate comparison, Table~\ref{tab::errors} collects the solution for the error attained by each of the analyzed strategies, Table~\ref{tab::d_Eregimes}  presents a comparison of the error of each strategy in the different regimes of $d_E=1$, $d_E=d_Id_O$, and $d_E\to\infty$, and finally  Table~\ref{tab::hierarchy} reports the hierarchy between these different strategies in the different regimes of $d_E$.

The optimal strategy depends strongly on the prior specified by $d_E$. For $d_E=1$, as expected, the append-environment strategy is optimal. Interestingly, and arguably counterintuitively, in this case we find that while non-optimal, a strategy that maps all isometric input channels to the fully depolarizing channel performs better than a strategy that maps all isometric input channels to a fixed pure output, even after optimizing over all fixed pure outputs. On the other hand, in the regime of $d_E\to\infty$, a pure-output strategy is optimal, while the append-environment strategy performs increasingly sub-optimally. As expected, the estimation-based strategy is optimal for all regimes of $d_E$ in the limit of $k\to\infty$, but always achieves a non-zero error for finite $k$, which we show that scales with $O(1/k)$.  

For the case of $d_I=d_O=2$, we explicitly compute the error values for the studied single-copy strategies and plot them in Fig.~\ref{fig::strategies} as a function of $d_E$. This allows one to see the explicit cross over between the error attained by append-environment and pure-output strategies, the former performing better for lower $d_E$ and the later performing better for higher $d_E$. We hence conclude that for small value of $d_E>1$, requiring that a universal purification machine necessarily prepares a pure output for every input quantum channel is not the optimal strategy.

\begin{table*}
  \centering
  {\renewcommand{\arraystretch}{2.5}
  \begin{tabular}{| c || c  c |}
    \hline
    Strategy & \multicolumn{2}{|c|}{Attainable error: $\epsilon(d_I,d_O,d_E;k)\leq$} \\
    \hline
    \hline
    Pure output &  $\epsilon_\mathrm{pure} = 2d_I^2 - \frac{2}{d_O}\EE_C\left\{\tr(\sqrt{C})^2\right\}$ &  [Sec.~\ref{sec::pureoutput}, App.~\ref{app::pureoutput}, Thm.~\ref{thm::pure-output}] \\
    \hline
    \multirow{1}{*}{Append environ.} & $\epsilon_\mathrm{app} = d_I^2-\frac{1}{\EE_C\{\tr(C^2)\}} \sum_{i}(\mathbb{E}_C\{(c_i^\downarrow)^2\})^2$  &  [Sec.~\ref{sec::appendenv}, App.~\ref{app::appendenvironment}, Thm.~\ref{thm::append-env}] \\
    \hline
    Map to depolar. & $\epsilon_\mathrm{dep}= d_I^2 - \frac{d_I}{d_Od_E}$ &  [Sec.~\ref{sec::maptodep}, App.~\ref{app::maptodepol}, Thm.~\ref{thm::maptodepol}] \\
    \hline
    $\,$ Gen. upper bound $\,$ & $\epsilon_{\mathrm{avg}-U_E} = d_I^2 - \frac{1}{d_E} \EE_C\{\tr(C^2)\}$ & [Sec.~\ref{sec::genupperbound}, App.~\ref{app::genuperbound}, Thm.~\ref{thm::genupperbound}] \\
    \hline
    Estimation based & $\,$ $\epsilon_{\mathrm{tomo}}\leq  4 d_I^2\min\left\{1,\kappa\,
    \frac{
        \min\{d_E,d_Id_O\}d_I d_O+\log(1/\delta)
    }{k}\right\}$& $\,$ [Sec.~\ref{sec::estimation}, App.~\ref{app::estimation}, Thm.~\ref{thm::estimation}] $\,$ \\
    \hline
    \hline
    \multicolumn{3}{|c|}{$\quad$ where $\EE_C\{\tr(C^2)\} = \frac{d_Id_O(d_E^2-1)+d_I^2d_E(d_O^2-1)}{d_O^2d_E^2-1}$, $\quad$ which satisfies $\frac{d_I}{d_O}\leq \EE_C\{\tr(C^2)\}\leq d_I^2$.} \\
    \hline
  \end{tabular}
  }
  \caption{{Summary of the attainable error of different strategies.}}
  \label{tab::errors}
\end{table*}

\begin{table*}
  \centering
  {\renewcommand{\arraystretch}{2.5}
  \begin{tabular}{| c || c | c | c | }
    \hline
    Strategy & $d_E=1$ & $d_E=d_Id_O$ & $d_E\to\infty$ \\
    \hline
    \hline
    Pure output & $\quad$ $2\left(d_I^2 - \frac{d_I}{d_O}\right)$ $\quad$ & 
    [see App.~\ref{ap:: reviewrandomchannels}, Lemma~\ref{lemma:: asymptotics trace sqrt of C}]  & 0 \\
    \hline
    Append environ. & 0 & $\quad$ $d_I^2 - E(d_I,d_O) \leq \epsilon_\mathrm{app} \leq d_I^2 - \frac{1}{d_Id_O} E(d_I,d_O) $ $\quad$ & $\quad$ $d_I^2 - \frac{1}{d_O^2}$ $\quad$ \\
    \hline
    Map to depolar. & $d_I^2-\frac{d_I}{d_O}$ & $d_I^2 - \frac{1}{d_O^2}$ & $d_I^2$ \\
    \hline
    $\quad$ Gen. upp. bound $\quad$ & 0 & $d_I^2 - \frac{1}{d_Id_O}E(d_I,d_O)$ & $d_I^2$ \\
    \hline
    \hline
    \multicolumn{4}{|c|}{where $E(d_I,d_O) \coloneqq \frac{d_Id_O(d_I^2d_O^2-1)+d_I^3d_O(d_O^2-1)}{d_O^4d_I^2-1}$.} \\
    \hline
  \end{tabular}
  }
  \caption{{Attainable error of different strategies in different regimes of $d_E$.}}
  \label{tab::d_Eregimes}
\end{table*}

\begin{table*}
  \centering
  {\renewcommand{\arraystretch}{2.5}
  \begin{tabular}{| c || c  c |}
    \hline
    Regime & \multicolumn{2}{c|}{Hierarchy} \\
    \hline
    \hline
    $d_E=1$ & $\quad$ $0=\epsilon_\mathrm{app}=\epsilon_{\mathrm{avg}-U_E}\leq\epsilon_\mathrm{dep}\leq\epsilon_\mathrm{pure}$ & $\quad$ $\forall \ d_I,d_O$ $\quad$ \\
    \hline
     \multirow{2}{*}{$\quad$ $d_E=d_Id_O$} $\quad$ & $\quad$ $\epsilon_\mathrm{app}\leq\epsilon_{\mathrm{avg}-U_E}\leq\epsilon_\mathrm{dep}$ & $\quad$ $\forall \ d_I,d_O$ $\quad$  \\
    & $\quad$ $0<\epsilon_\mathrm{pure}<\epsilon_\mathrm{app}<\epsilon_{\mathrm{avg}-U_E}<\epsilon_\mathrm{dep}$ & $\ \ \ \ d_I=d_O=2$ $\quad$ \\
    \hline
    $d_E\to\infty$ & $\quad$ $0=\epsilon_\mathrm{pure}\leq\epsilon_\mathrm{app}\leq\epsilon_\mathrm{dep}=\epsilon_{\mathrm{avg}-U_E}$ & $\quad$ $\forall \ d_I,d_O$ $\quad$ \\
    \hline
  \end{tabular}
  }
  \caption{{Hierarchy of the performance of different strategies.}}
  \label{tab::hierarchy}
\end{table*}

\begin{figure}[t]
    \begin{center}
    \includegraphics[width=\columnwidth]{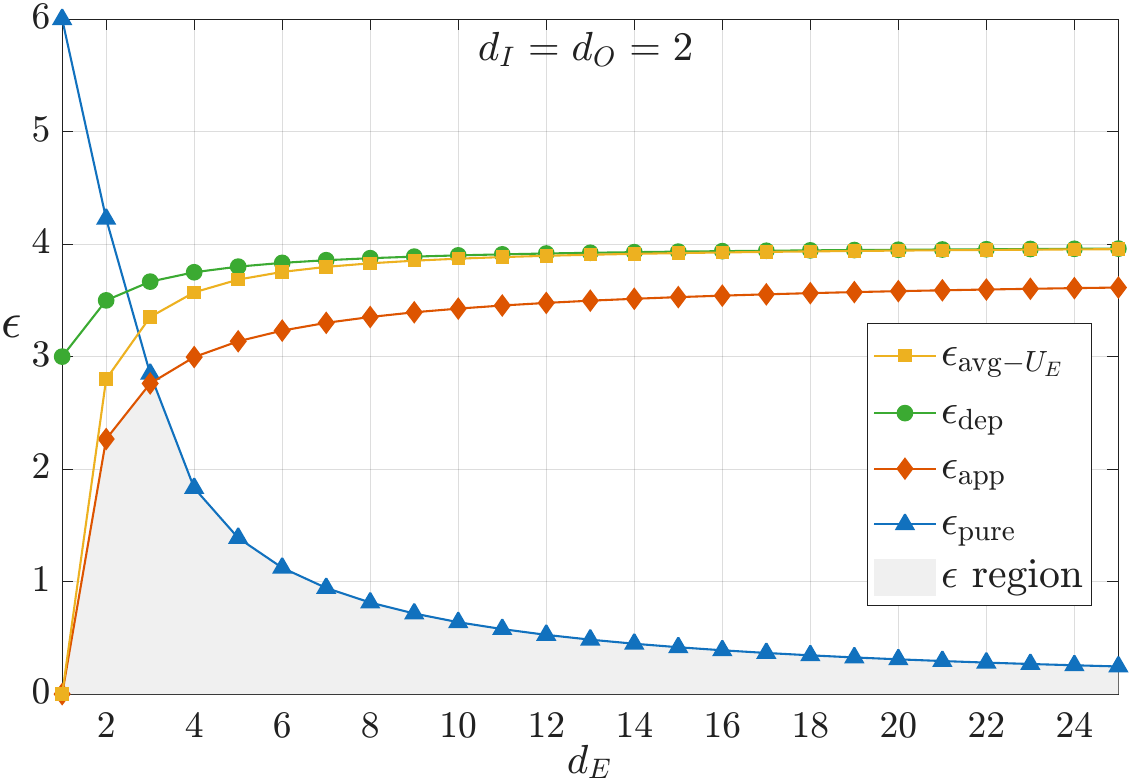}
    \caption{\textbf{Comparison of the error attained by different strategies in the case of qubit-qubit channels.} Plot of the analytical values of the average error attained in the case of $d_I=d_O=2$, and $d_E\in\{1,\ldots,25\}$ by the following single-copy strategies: the pure-output strategy $\epsilon_{\mathrm{pure}}$ in Eq.~\eqref{eq::epsilon_pureoutput_solution}, the append-environment strategy $\epsilon_{\mathrm{app}}$ in Eq.~\eqref{eq::epsilon_app_solution}, the map-to-depolarizing strategy $\epsilon_{\mathrm{dep}}$ in Eq.~\eqref{eq::epsilon_maptodep_solution}, and the general error upper bound $\epsilon_{\mathrm{avg-}U_E}$ in Eq.~\eqref{eq::epsilon_avgUE_solution} (see Table~\ref{tab::errors} for summary). All curves constitute upper bounds for the minimum average error, implying that $\epsilon(2,2,d_E;k)$ must be in the gray region of the plot.} 
\label{fig::strategies}
\end{center}
\end{figure}

\subsection{Hardness of computing general error lower bounds} 
In full generality, we do not expect the quantity
\begin{align} 
\begin{split} 
    & g(\Qcal,V) \coloneqq \\ 
    & \max_{U_E}\left\{\tr(\Qcal\left(\tr_E(\ketbra{V})^{\otimes k}\right)(\id\otimes U_E)\ketbra{V}(\id\otimes U_E^*))\right\}, 
\end{split} 
\end{align}
which corresponds to the overlap term in Eq.~\eqref{eq::error_expanded}, to admit a closed-form analytical expression. The main difficulty is that the maximization ranges over the full local-unitary orbit of the purification $V(C)$, leading to a nonconvex optimization problem whose value depends on the detailed operator structure of $\Qcal[\tr_E(\ketbra{V})^{\otimes k}]$, and not only on coarse spectral data. In particular, once $\ket{V}$ is written in a Schmidt basis, the above expression takes the form of a quadratic optimization problem in the matrix elements of $U_E$, subject to the nonlinear unitary constraints $U_E^* U_E=\id$. Without additional symmetry, covariance, or algebraic structure of $\Qcal$, there is therefore no canonical reduction of the problem that would lead to an exact analytic formula. For this reason, attempting to compute $g(\Qcal,V)$ exactly is, in general, non-tractable. Attempts to lower bound this quantity with standard techniques wind up decoupling the term $\Qcal[\tr_E(\ketbra{V})^{\otimes k}]$ from the term $(\id\otimes U_E)\ketbra{V}(\id\otimes U_E^*)$, such as e.g. using the Cauchy-Schwarz inequality, and leading to trivial lower bounds of $\epsilon\geq 0$. It has instead been more tractable to derive explicit upper bounds, attained by the physically well-motivated strategies detailed above, which are analytically accessible and provide the error estimates presented here.

\section{Comparison with cloning, controlization, and average purification}

The impossibility of universal purification is closely related in spirit to the celebrated no-cloning theorem~\cite{wootters1982asingle}. Both results establish fundamental restrictions imposed by the linear structure of quantum theory. The no-cloning theorem states that there exists no deterministic device that can take as input an arbitrary unknown pure state and output two perfect copies of it. Similarly, there exists no deterministic device that can take as input an arbitrary mixed state or noisy channel and output one of its purifications. In both cases, universality is not essential for the contradiction: no-cloning already follows from two non-orthogonal pure states, and our no-purification result already follows from a rank-one and a rank-two state.

Another similar no-go theorem in quantum theory is the impossibility of controlization of unknown unitaries~\cite{araujo2014quantum,gavorova2024topological}. However, exact controlization becomes possible when partial information about the unitary is available, for instance, if one of its eigenstates or its action on a reference state is known~\cite{araujo2014quantum,gavorova2024topological}.
This partial knowledge is responsible for the implementation of universal controlization of unitary channels in optical interferometers~\cite{friis2014implementing} or using quantum circuits~\cite{dong2019controlled}. In the case of purification, on the other hand, we expect that no amount of partial information would be sufficient to reconstruct all the information that is lost in a partial trace, allowing for an exact and deterministic purification. This argument points toward a potentially deeper obstruction of purification of quantum states and channels than that of controlization of unitaries.\\

In the present work, we consider the transformation of many copies of an arbitrary input into a single copy of any of its possible purifications. This can be viewed as a purification task in which, given many copies of an input, one aims to randomly sample a single copy of one of its possible purifications. This problem stands sharp contrast to a related task studied in Refs.~\cite{chen2024localtestunitarilyinvariant,chen2025quantumchanneltomographyestimation,tang2025,girardi2025random,girardi2025random2,yoshida2025}. There, the goal is to transform many copies of an input into the average over many copies of its possible purifications, rather than to sample a single purification. 

Unlike our task, the production of such an average is compatible with the linearity and positivity of quantum theory. For this reason, it is physically realizable and can be implemented efficiently with quantum circuits~\cite{chen2024localtestunitarilyinvariant,chen2025quantumchanneltomographyestimation,tang2025,girardi2025random,girardi2025random2,yoshida2025}. These protocols have applications to the optimal tomography of mixed states and non-unitary channels. In fact, it is known for property-testing problems in general, having access to a purification of a quantum state or channel does not offer advantage compared to its mixed form~\cite{chen2024localtestunitarilyinvariant,chen2025quantumchanneltomographyestimation}. However, when the average-purification protocol of Refs.~\cite{chen2024localtestunitarilyinvariant,chen2025quantumchanneltomographyestimation,tang2025,girardi2025random,girardi2025random2,yoshida2025} is applied to our setting, it is effectively equivalent to appending a maximally mixed state to the environment, as discussed in Sec.~\ref{app::appendenvironment}. As we show here, this strategy does not, in general, provide a good approximation to a randomly sampled purification of the input channel. 

This suggests the existence of alternative quantum information tasks, unlike property testing~\cite{chen2024localtestunitarilyinvariant,chen2025quantumchanneltomographyestimation}, in which access to a randomly sampled purification---and to an optimal universal approximate purification machine---could outperform the average purification protocol of Refs.~\cite{tang2025,girardi2025random,girardi2025random2,yoshida2025}. Characterizing such tasks remains an interesting direction for future work.

\section{Conclusions}

In this work, we investigate the problem of purifying arbitrary quantum states and channels, with channel purification understood in the sense of the Stinespring dilation, both in the probabilistic exact and deterministic approximate settings, establishing a framework of quantum purification machines. In the probabilistic exact setting, we showed that no machine that has access to a finite number of copies of the input can purify even two fixed inputs with non-zero probability while remaining a linear positive map. As a corollary, we recover that universal probabilistic purification of any finite number of input copies is impossible with a non-zero probability.

Turning to the approximate setting, we formalize approximate quantum purification machines as higher-order transformations acting on black-box quantum states and channels. The error attained by these machines is specified by the input and output dimensions, the environment dimension of the target purification, and the number of available copies of the unknown input. Importantly, we allowed the output to be an arbitrary approximation to a valid purification, without requiring it to be pure or isometric. 

Using the minimal squared Hilbert–Schmidt distance over all possible purifications as our figure of merit. We defined an average purification error by averaging over Haar–Stinespring distributed quantum channels and states, thereby interpreting the environment dimension as a prior distribution over inputs. We derived analytical expressions for physically motivated strategies, namely pure-output strategies, append-environment strategies, a map-to-depolarizing strategy, and an estimation-based many-copy strategy, along with a general upper bound on the error, which is tight in certain regimes. 

Comparing these strategies, we found that in the regime of large environment dimension, strategies that output a fixed isometric channel or pure state perform best among those considered, whereas for small environment dimensions, strategies that append a state to the environment---typically producing a non-pure output---achieve better performance. Within the class of deterministic pure-output strategies, we proved that the optimal one outputs one of the possible purifications of the fully depolarizing channel. We furthermore showed how the performance of such machines can be understood as a function of the amount of entanglement between the input/output systems and the environment in the fixed pure output. 
 
In the many-copy setting, we analyzed an estimation-based strategy and showed that it attains a purification error that scales with the inverse of the number of copies. However, it remains an open question whether the globally optimal strategy in the many-copy scenario is indeed estimation-based, or whether a more intrinsically quantum approach, which does not primarily aim to extract a classical description of the input, could achieve better performance in this purification task.

Another open direction is to extend this framework beyond states and channels to measurements and quantum instruments. It would be interesting to determine how well universal purification of such quantum operations can be approximated in deterministic or probabilistic settings.

\vspace*{11pt}
\noindent\textit{Acknowledgments.}
We are grateful to Ion Nechita and Marco T\'{u}lio Quintino for helpful discussions. We acknowledge financial support from the
French national quantum initiative managed by
Agence Nationale de la Recherche (ANR) in the frame-work of France 2030 through project EPIQ, with reference ANR-22-PETQ-0007.


%

\clearpage 
\appendix
\onecolumngrid

\section*{APPENDIX}
\setcounter{theorem}{1}

\section{Random Choi operators}\label{ap:: reviewrandomchannels}

We collect here the random-matrix facts used throughout the paper. Standard references are
\cite{Collins_Nechita_2016,kukulski2021generating,MarchenkoPastur1967,CollinsNechita2011}.

We begin with the ensemble of quantum channels induced by Haar-distributed Stinespring isometries, what we will refer to the Haar-Stinespring ensemble in what comes for the sake of simplicity. A Haar-random isometry is an isometry
\begin{align}
    V:\Hcal_{I'}\to \Hcal_O\otimes \Hcal_E
\end{align}
distributed according to the Haar measure on the complex Stiefel manifold
\begin{align}
    \mathrm{St}(d_I,d_Od_E)\coloneqq\{V\in M_{d_Od_E\times d_I}(\mathbb C):V^\ast V=\id_{I'}\}.
\end{align}
Equivalently, if a unitary $U\in \mathcal U(d_Od_E)$ is Haar-distributed, then $V$ may be taken as the first $d_I$ columns of $U$. The associated Choi vector is given by
\begin{align}
    \ket{V}_{IOE}\coloneqq(\id_I\otimes V)\ket{\Phi^+}\in \Hcal_I\otimes \Hcal_O\otimes \Hcal_E,
\end{align}
where $\ket{\Phi^+}=\sum_i\ket{ii}$ is the unnormalized maximally entangled state.\\

Furthermore, this isometry represents the Stinespring dilation of a random channel, hence
\begin{align}
    C_V\coloneqq\tr_E(\ketbra{V})\in \Lcal(\Hcal_I\otimes \Hcal_O).
\end{align}
By construction, $C_V$ is the Choi operator of the channel
\begin{align}
    \widetilde{C}_V(X)\coloneqq\tr_E(VXV^\ast),
\end{align}
and therefore
\begin{align}
    C_V\ge 0,\qquad \tr_O(C_V)=\id_I,\qquad \tr(C_V)=d_I.
\end{align}
In particular,
\begin{align}
    \mathbb E_V\{\,\tr(C_V)\}=d_I.
\end{align}
Moreover, by unitary invariance and the constraint $\tr_O(C_V)=\id_I$, one also has
\begin{align}
    \mathbb E_V\{C_V\}=\frac{\id_{IO}}{d_O}.
\end{align}
From this point of view, low-order moments of Haar-random Choi operators can be computed using either Weingarten calculus on the Stiefel side; see Refs.~\cite{CollinsSniady2006,Collins_Nechita_2016}

\begin{lemma}\label{lemma:: average purity of C}
    Let $C\in\Lcal(\Hcal_{I}\otimes\Hcal_{O})$ be a Choi operator sampled from the so-called Haar-Stinespring uniform measure. Then
    \begin{align}\label{eq::avgpurity_app}
        \EE_C\{\tr(C^2)\}&=\frac{d_Id_E(d_O^2-1)+d_I^2d_O(d_E^2-1)}{d_O^2d_E^2-1}.
    \end{align}
\end{lemma}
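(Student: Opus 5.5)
The plan is to compute $\EE_C\{\tr(C^2)\}$ directly as a second moment of a Haar-random unitary, using the Weingarten formula for two copies, and then put the result in the form of Eq.~\eqref{eq::avgpurity_app}. Write $d\coloneqq d_Od_E$ and, as in the construction above, take $V$ to be the first $d_I$ columns of a Haar-random $U\in\mathcal U(d)$, so that $V\ket{i}=U\ket{i}$. Expanding $C=\sum_{i,j}\ketbra{i}{j}\otimes\tr_E(U\ketbra{i}{j}U^*)$, the purity becomes a sum over pairs of input indices:
\begin{equation}
    \tr(C^2)=\sum_{i,j=1}^{d_I}\tr\!\left[\left(\tr_E\otimes\tr_E\right)\!\left(U^{\otimes2}X_{ij}U^{*\otimes2}\right)\,F_O\right],
    \qquad X_{ij}\coloneqq\ketbra{i}{j}\otimes\ketbra{j}{i}.
\end{equation}
Here $F_O$ is the swap on $\Hcal_O^{\otimes2}$. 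The computation then proceeds in three steps.

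\textbf{Step 1 (twirl).} Apply the two-copy twirl
\begin{equation}
    \EE_U\!\left[U^{\otimes2}XU^{*\otimes2}\right]=a\,\id+b\,F,
\end{equation}
where $F=F_O\otimes F_E$ is the swap on $(\Hcal_O\otimes\Hcal_E)^{\otimes2}$. The coefficients are
\begin{equation}
    a=\frac{\tr X-\tr(XF)/d}{d^2-1},\qquad b=\frac{\tr(XF)-\tr X/d}{d^2-1}.
\end{equation}
For $X_{ij}$ one has $\tr X_{ij}=\delta_{ij}$ and $\tr(X_{ij}F)=1$. Summing over $i,j$ therefore gives $d_I$ diagonal contributions and $d_I^2$ swap contributions:
\begin{equation}
    A\coloneqq\sum_{i,j}a_{ij}=\frac{d_I-d_I^2/d}{d^2-1},\qquad
    B\coloneqq\sum_{i,j}b_{ij}=\frac{d_I^2-d_I/d}{d^2-1}.
\end{equation}

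\textbf{Step 2 (partial traces).} Take the partial traces over $E$ and contract with $F_O$. This uses $\tr_{EE}\id=d_E^2$, $\tr_{EE}F_E=d_E$, $\tr(F_O)=d_O$ and $\tr(F_O^2)=d_O^2$, and yields
\begin{equation}
    \EE_C\{\tr(C^2)\}=A\,d_Od_E^2+B\,d_O^2d_E.
\end{equation}

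\textbf{Step 3 (simplify).} Substitute $d=d_Od_E$, clear the factor $1/d$, and collect terms into the form $[\,\cdots]/(d_O^2d_E^2-1)$, to be compared term by term with the right-hand side of Eq.~\eqref{eq::avgpurity_app}.

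I expect the main obstacle to be Step 3 together with the bookkeeping in Step 1. The delicate point is which of $d_I$ and $d_I^2$ attaches to the $d_E$-type term and which to the $d_O$-type term. That assignment is decided entirely by the counts of diagonal versus all index pairs, and by which coefficient, $A$ or $B$, receives the factor $d_Od_E^2$ rather than $d_O^2d_E$. Before comparing with Eq.~\eqref{eq::avgpurity_app}, I would therefore run two sanity checks on the simplified expression.
\begin{itemize}
    \item At $d_E=1$ the channel is isometric, so $C=\ketbra{V}$ and the answer must be exactly $\tr(C^2)=d_I^2$.
    \item As $d_E\to\infty$ the channel concentrates on the depolarizing channel, so the answer must tend to $\tr\big((\id/d_O)^2\big)=d_I/d_O$, consistent with the bounds $d_I/d_O\le\EE_C\{\tr(C^2)\}\le d_I^2$ quoted in the main text.
\end{itemize}
For the state case $d_I=1$, the stated numerator $d_E(d_O^2-1)+d_O(d_E^2-1)$ is invariant under exchanging the two terms, so there the comparison is unambiguous. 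For $d_I>1$ the $d_E=1$ check is the one that fixes the placement of $d_I$ versus $d_I^2$: it is satisfied by the expression that comes out of Steps 1 and 2. The stated form, by contrast, evaluates to $d_I$ rather than $d_I^2$ at $d_E=1$, so it can match the computed expression only for $d_I=1$. The final verification should therefore record the computed formula together with this check, rather than assume agreement with Eq.~\eqref{eq::avgpurity_app} as printed.
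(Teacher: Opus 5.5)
Your proposal is correct and follows essentially the paper's route: the paper also uses the swap trick and the two-copy Haar (Weingarten) second moment of the Choi vector $\ket{V}$, then traces out $EE'$ and contracts with $F_I\otimes F_O$, while you perform the $I$-contraction first by summing the scalar twirl coefficients over $i,j$. Your result $\frac{d_Id_O(d_E^2-1)+d_I^2d_E(d_O^2-1)}{d_O^2d_E^2-1}$ is exactly the last line of the paper's proof and agrees with Eq.~\eqref{eq::avg_tr_Csquared}, and you are right that the lemma as printed swaps $d_I$ and $d_I^2$ (a typo), since it fails both the $d_E=1$ check (giving $d_I$ instead of $d_I^2$) and the $d_E\to\infty$ check (giving $d_I^2/d_O$ instead of $d_I/d_O$).
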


\begin{proof}
This is a consequence of the findings in random quantum channel theory and Haar-random isometries found in Refs.~\cite{Collins_Nechita_2016,kukulski2021generating,Puchala_Miszczak_2017}.\\

First step is to linearize the expression $\tr(C^2)$. Define the swap operator (also called flip) as the unitary $F_{A,B}: \mathcal{H}_A \otimes \mathcal{H}_B \to \mathcal{H}_A \otimes \mathcal{H}_B $ defined on pure tensors by
\begin{equation}
    F_{A,B}\left(\ket{a}_A\otimes\ket{b}_B\right) \coloneqq \ket{b}_A\otimes\ket{a}_B,
\end{equation}
for all $\ket{a}\in \mathcal{H}_A$ and $\ket{b}\in \mathcal{H}_B$. Equivalently, in the chosen bases,  
\begin{equation}
   F_{A,B}=\sum_{i, j=1}^d\ket{i}_A\bra{j}_A \otimes \ket{j}_B\bra{i}_B=\sum_{i, j=1}^d\ketbra{ij}{ji}_{AB},
\end{equation}
so that its matrix elements satisfy $\left(F_{A,B}\right)_{i j, k l}=\delta_{i l} \delta_{j k}$.

The swap trick then consists in the following identity: for any $X,Y\in \Lcal(\mathcal{H})$, we have that
\begin{equation}
    \operatorname{Tr}(XY)= \operatorname{Tr}[(X \otimes Y) F].
\end{equation}

Note that the trick requires the two tensor factors to have the same dimension (so that $F$ is a unitary on $\Hcal\otimes \Hcal$).

Then, 
\begin{align}
     \mathbb{E}_C \big\{\tr\left(C^2\right)\big\} = \tr\Big[F_{IO,I'O'} \left(\mathbb{E}_C\big\{C_{IO}\otimes C_{I'O'}\big\}\right) \Big]. \label{eq::g_post_swap}
\end{align}

We are now ready to formally define the average function $\mathbb{E}_C$, in order to compute $\mathbb{E}_C\{C_{IO}\otimes C_{I'O'}\}$.

Rather than averaging directly over Choi operators, we pass through their purifications. Concretely, we draw a Haar-random isometry that implements a Stinespring dilation, build the corresponding purified Choi state, and take the partial trace over the environment. Then,
\begin{equation}
    C_{IO}=\tr_{E}\left[\left(\id_{IO}\otimes U_{E}(C)\right)\ketbra{V(C)}_{IOE}\left(\id_{IO}\otimes U^*_{E}(C)\right)\right]=\tr_{E}\left[\ketbra{V(C)}_{IOE}\right],
\end{equation}
which is independent of the local unitary $U_{E}$ since the environment space is traced out.

The average over quantum channels $C$ is then defined via the average over the isometries $V$ 
\begin{align}
    \mathbb{E}_C\{C_{IO}\otimes C_{I'O'}\} 
    &= \mathbb{E}_{V}\{\tr_{E}(\ketbra{V}_{IOE})\otimes \tr_{E'}(\ketbra{V}_{I'O'E'})\} \\
    &= \mathbb{E}_{V}\{\tr_{{E}{E}'} \left(\ketbra{V}_{IOE}\otimes\ketbra{V}_{I'O'E'}\right)\} \\
    &= \tr_{{E}{E}'}\left(\mathbb{E}_{V}\left\{\ketbra{V}_{IOE}\otimes\ketbra{V}_{I'O'E'}\right\} \right) \label{eq::partialtrace_EE}
\end{align}

Now we are only left to compute the expected value $\mathbb{E}_{V_C}\left\{\ketbra{V_C}_{IOE}\otimes\ketbra{V_C}_{I'O'\tilde E'}\right\} = \mathbb{E}_{V}\{\ketbra{V}^{\otimes 2}\}$ over Haar random isometries. Take $D\coloneq d_Od_{E}$ to be the dimension of the total output space of the isometries. The required second-moment identity is standard in the literature of Haar integration and Weingarten calculus~\cite{CollinsSniady2006,Collins_Nechita_2016,Puchala_Miszczak_2017}, and given by
\begin{align}\label{eq:2moment}
    \mathbb{E}_{V}\{\ketbra{V}^{\otimes 2}\}
    &=
    \int_{\mathrm{St}(d_I,D)} \ketbra{V}_{IOE} \otimes \ketbra{V}_{I'O'E'}\, dV \\
    &= 
    2\left( \frac{ P^{+}_{I,I'} \otimes P^{+}_{OE,O'E'}}{D^2+D} \right) + 2\left( \frac{P^{-}_{I,I'} \otimes P^{-}_{OE,O'E'}}{D^2-D}  \right) \\
    &=
    \frac{1}{D^2-1} (\id_{II'OEO'E'} + F_{I,I'}\otimes F_{OE,O'E'})- \frac{1}{D(D^2-1)} (F_{I,I'}\otimes\id_{OE,O'E'} + \id_{II'}\otimes F_{OE,O'E'}),
\end{align}
where $P^{\pm}_{A,B}=\frac{\id_{AB}\pm F_{A,B}}{2}$ are the projectors on the symmetric and antisymmetric subspaces. We then reintroduce this expression in Eq.~\eqref{eq::partialtrace_EE} to perform the partial traces over $E$ and $E'$, arriving at 
\begin{align}
    \mathbb{E}_C\{C_{IO}\otimes C_{I'O'}\} &= \tr_{{E}{E}'} \left[ \mathbb{E}_{V}\{\ketbra{V}^{\otimes 2}\} \right] \\
    &= \frac{1}{(D^2-1)}\left(d_{E}^2\,\id_{II'OO'} + d_{E}\, F_{II'}\otimes F_{OO'} \right) - \frac{1}{D(D^2-1)}\left(d_{E}^2 \,F_{II'}\otimes\id_{OO'} + d_{E}\,\id_{II'}\otimes F_{OO'} \right).
\end{align}
Plugging it into \eqref{eq::g_post_swap},
\begin{align}
    \EE_C & \tr[F_{II',OO'}\left(\frac{1}{(D^2-1)}\left(d_{E}^2\,\id_{II'OO'} + d_{E}\, F_{I,I'}\otimes F_{O,O'} \right) - \frac{1}{D(D^2-1)}\left(d_{E}^2 \,F_{I,I'}\otimes\id_{OO'} + d_{E}\,\id_{II'}\otimes F_{O,O'} \right)\right)]\nonumber \\
    &=\frac{1}{d_E^2d_O^2-1}\left(d_E^2d_Id_O+d_Ed_I^2d_O^2-\frac{d_E^2d_I^2d_O+d_Ed_Id_O^2}{d_Ed_O}\right)\\
    &=\frac{d_Id_O(d_E^2-1)+d_I^2d_E(d_O^2-1)}{d_O^2d_E^2-1}.
\end{align}
\end{proof}

The Haar-Stinespring model admits a convenient Gaussian realization. Let $G\in M_{m\times n}(\mathbb C)$ be a standard complex Ginibre matrix, meaning that its entries are i.i.d. complex Gaussian random variables with law $\mathcal N_{\mathbb C}(0,1)$, equivalently
\begin{align}
    \Re (G_{ij}),\Im (G_{ij})\sim \mathcal N(0,1/2)
    \quad\text{independently}.
\end{align}
If
\begin{align}
    G\in M_{d_Od_E\times d_I}(\mathbb C)
\end{align}
is Ginibre and $d_Od_E\ge d_I$, then $G^\ast G$ is invertible almost surely, and the polar factor
\begin{align}
    V\coloneqq G(G^\ast G)^{-1/2}
\end{align}
is Haar-distributed on $\mathrm{St}(d_I,d_Od_E)$. Thus Haar-random Stinespring isometries can be sampled from Ginibre matrices.

This Gaussian representation is especially useful because it makes the Kraus structure completely explicit. Writing $G$ in environment blocks as
\begin{align}
    G=\sum_{\alpha=1}^{d_E} G_\alpha\otimes \ket{\alpha}_E,
    \qquad
    G_\alpha\in M_{d_O\times d_I}(\mathbb C),
\end{align}
and defining
\begin{align}
    S\coloneqq G^\ast G=\sum_{\alpha=1}^{d_E} G_\alpha^\ast G_\alpha,
    \qquad
    A_\alpha\coloneqq G_\alpha S^{-1/2},
\end{align}
one obtains the Kraus representation
\begin{align}
    \widetilde{C}_V(X)=\sum_{\alpha=1}^{d_E} A_\alpha X A_\alpha^\ast.
\end{align}
In particular, the Stinespring, Kraus, and Choi descriptions all refer to the same random-channel ensemble.\\

A second useful viewpoint is obtained by expressing the random Choi operator itself through a normalized Wishart construction. Let $ \mathsf{C}\in \Lcal(\Hcal_I\otimes\Hcal_O)$ be a Wishart matrix with parameters $(d_Id_O,d_E)$, that is,
\begin{align}
    \mathsf{C}=GG^\ast
\end{align}
for a Ginibre matrix $G\in M_{d_Id_O\times d_E}(\mathbb C)$. Define
\begin{align}
    T\coloneqq \tr_O( \mathsf{C})\in \Lcal(\Hcal_I),
    \qquad
    C\coloneqq (T^{-1/2}\otimes \id_O)\,\mathsf{C}\, (T^{-1/2}\otimes \id_O).
\end{align}
Then $C$ satisfies $\tr_O(C)=\id_I$, and its distribution coincides with the distribution of $C_V$ arising from a Haar-random Stinespring isometry. This equivalence provides the basic bridge between random-channel theory and Wishart ensembles.\\

We will use the Gaussian representation to define induced states and normalized Wishart matrices, which govern the spectral behavior of random reduced states and random channel outputs. Let $n,s\in\mathbb N$, let
\begin{align}
    \ket{\psi}\in \mathbb C^n\otimes \mathbb C^s
\end{align}
be Haar-distributed on the unit sphere, and define the reduced state
\begin{align}
    \rho\coloneqq\tr_{\mathbb C^s}\ketbra{\psi}\in M_n(\mathbb C).
\end{align}
Then $\rho$ has the induced measure of parameters $(n,s)$, and its distribution is equivalent to that of a normalized Wishart matrix,
\begin{align}
    \rho \overset{d}{=} \frac{W}{\tr W},
    \qquad
    W=GG^\ast,
    \qquad
    G\in M_{n\times s}(\mathbb C)\ \text{Ginibre}.
\end{align}
In particular, when $s\ge n$, the joint density of the unordered eigenvalues $\lambda_1,\dots,\lambda_n\ge 0$ of $\rho$ is proportional to
\begin{equation}
    \delta\!\left(1-\sum_{i=1}^n\lambda_i\right)
    \prod_{i=1}^n \lambda_i^{\,s-n}
    \prod_{1\le i<j\le n}(\lambda_i-\lambda_j)^2.
\label{eq:induced-density}
\end{equation}
This ensemble appears repeatedly in quantum information theory because random reduced states, as well as the outputs of Haar-random channels acting on fixed pure inputs, are distributed as normalized Wishart matrices.

In the random-channel setting this connection takes a particularly simple form. If $\Phi_V:\Lcal(\Hcal_I)\to\Lcal(\Hcal_O)$ is Haar-Stinespring random and $\ketbra{x}$ is a fixed pure input state, then the output
\begin{align}
    \Phi_V(\ketbra{x})
\end{align}
is distributed as an induced state of parameters $(d_O,d_E)$, equivalently as
\begin{align}
    \frac{GG^\ast}{\tr(GG^\ast)},
    \qquad
    G\in M_{d_O\times d_E}(\mathbb C)\ \text{Ginibre}.
\end{align}
Thus one can transfer spectral information from normalized Wishart matrices directly to random channel outputs.

Finally, for asymptotic spectral estimates we use the Mar\v{c}enko--Pastur law. Let $G_n\in M_{n\times s_n}(\mathbb C)$ be Ginibre and define the rescaled Wishart matrices
\begin{align}
    W_n\coloneqq \frac1{s_n}G_nG_n^\ast.
\end{align}
Assume that
\begin{align}
    \frac{n}{s_n}\longrightarrow c\in(0,\infty)
    \qquad\text{as }n\to\infty.
\end{align}
Then the empirical eigenvalue distribution
\begin{align}
    \mu_{W_n}\coloneqq \frac1n\sum_{i=1}^n\delta_{\lambda_i(W_n)}
\end{align}
converges almost surely and weakly to the Mar\v{c}enko--Pastur law $\mu_{\mathrm{MP},c}$ \cite{MarchenkoPastur1967}, given by
\begin{equation}
    \mu_{\mathrm{MP},c}
    =
    \left(1-\frac1c\right)_+\delta_0
    +
    \frac{\sqrt{(x_+-x)(x-x_-)}}{2\pi c\,x}\,
    \mathbf 1_{[x_-,x_+]}(x)\,dx,
    \qquad
    x_\pm=(1\pm \sqrt c)^2.
\label{eq:MP-law}
\end{equation}
Since
\begin{align}
    \frac{\tr(G_nG_n^\ast)}{ns_n}\xrightarrow[n\to\infty]{a.s.}1,
\end{align}
the same scaling applies to normalized Wishart matrices
\begin{align}
    \rho_n\coloneqq \frac{G_nG_n^\ast}{\tr(G_nG_n^\ast)}.
\end{align}
Indeed,
\begin{align}
    n\rho_n=\frac{ns_n}{\tr(G_nG_n^\ast)}\,W_n,
\end{align}
and the prefactor converges almost surely to $1$. Consequently, the empirical eigenvalue distribution of $n\rho_n$ also converges almost surely to $\mu_{\mathrm{MP},c}$. Equivalently, the eigenvalues of $\rho_n$ are typically of order $1/n$, and after multiplication by $n$ they fill the Mar\v{c}enko--Pastur bulk.

\begin{lemma}\label{lemma:: asymptotics trace sqrt of C}
     Let $C\in\Lcal(\Hcal_{I}\otimes\Hcal_{O})$ be a Choi operator sampled from the uniform measure induced by Haar-distributed Stinespring isometries , then
    \begin{align}
        \EE_C\left\{\tr(\sqrt{C})^2\right\}&=O(d_I^2d_O)
    \end{align}
    in the asymptotic regime for $d_E\geq d_Id_O$
\end{lemma}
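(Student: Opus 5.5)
The upper bound in the statement holds deterministically, for every Choi operator $C$, so no randomness is needed for it. The plan is to prove it first. Then, to show the order $d_I^2d_O$ is actually reached when $d_E\geq d_Id_O$, a matching lower bound will be derived from the Mar\v{c}enko--Pastur law recalled above.

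\emph{Step 1 (deterministic upper bound).} Let $c_1,\dots,c_r$ be the nonzero eigenvalues of $C$, with $r=\rank(C)\leq\min\{d_E,d_Id_O\}\leq d_Id_O$. By the Cauchy--Schwarz inequality,
\begin{equation*}
\tr(\sqrt{C})^2=\Big(\sum_{i=1}^r \sqrt{c_i}\Big)^2\leq r\sum_{i=1}^r c_i = r\,\tr(C)\leq d_Id_O\cdot d_I .
\end{equation*}
Here we used $\tr(C)=d_I$, which holds because $\tr_O(C)=\id_I$. Taking $\EE_C$ immediately gives $\EE_C\{\tr(\sqrt{C})^2\}\leq d_I^2d_O$ for all $d_E$. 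This already establishes the $O(d_I^2d_O)$ claim. Equality holds exactly for $C=\id/d_O$, consistent with the limit $d_E\to\infty$ used in the main text.

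\emph{Step 2 (the order is attained).} To show the bound is of the right order in the regime $d_E\geq d_Id_O$, I would use the normalized Wishart representation $C=(T^{-1/2}\otimes\id_O)\,\mathsf{C}\,(T^{-1/2}\otimes\id_O)$, where $\mathsf{C}=GG^*$ and $G\in M_{d_Id_O\times d_E}$ is Ginibre.

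First, I would show that $T=\tr_O(\mathsf{C})$ concentrates around $d_Od_E\,\id_I$. It is itself a Wishart matrix with $d_Od_E$ degrees of freedom, so its extreme eigenvalues lie in $d_Od_E\,(1\pm O(\sqrt{d_I/(d_Od_E)}))$ with high probability. Consequently $C$ is comparable, in operator order and up to a factor $1+o(1)$, to $\mathsf{C}/(d_Od_E)$.

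Next, set $n=d_Id_O$ and $c=n/d_E\in(0,1]$. The Mar\v{c}enko--Pastur law applies to $W=\mathsf{C}/d_E$ and gives
\begin{equation*}
\tr\sqrt{\mathsf{C}/(d_Od_E)}=d_O^{-1/2}\,\tr\sqrt{W}\approx d_O^{-1/2}\, n\int\sqrt{x}\,d\mu_{\mathrm{MP},c}(x).
\end{equation*}
Squaring yields $\tr(\sqrt{C})^2\approx d_I^2d_O\,m_c^2$, with $m_c=\int\sqrt{x}\,d\mu_{\mathrm{MP},c}$. Since $c\leq1$ there is no atom at $0$, and $m_c$ is bounded below by a positive constant uniformly in $c\in(0,1]$; it equals $8/(3\pi)$ at $c=1$ and tends to $1$ as $c\to0$. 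Monotonicity of $X\mapsto\tr\sqrt{X}$ under the operator order lets us transfer this estimate from $\mathsf{C}/(d_Od_E)$ to $C$. Finally, dominated convergence, justified by the bound of Step 1, converts the almost-sure statement into one about expectations.

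\emph{Main obstacle.} The main difficulty is the non-asymptotic control of the normalization $T^{-1/2}$ together with uniformity in $c$ near $1$, where the MP density vanishes at $x_-=0$. I would handle the normalization with standard operator-norm concentration for Wishart matrices. For the second point, I would note that only the integral of $\sqrt{x}$, not the edge behaviour, enters the estimate. In any case, this step is only needed for tightness; the statement itself follows from Step 1 alone.
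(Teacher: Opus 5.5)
Your proof is correct, and for the statement as written it is more direct than the paper's. The paper works in a proportional regime $d_Id_O/d_E\to c_0$ with all dimensions growing. It cites an operator-norm estimate showing the partial normalization is negligible ($\|d_OC-W\|_\infty\to0$), applies the Mar\v{c}enko--Pastur law to $\tr\sqrt{d_OC}$, and passes to expectations by dominated convergence, obtaining $\EE_C\{(\tr\sqrt C)^2\}=d_I^2d_O\,\mu(c_0)^2+o(d_I^2d_O)$. The dominating bound it uses, $(\tr\sqrt C)^2\le d_I^2d_O$, is exactly your Step~1. You observe that this bound alone already is the claimed $O(d_I^2d_O)$ estimate. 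Your Cauchy--Schwarz proof of it is non-asymptotic, holds for all $d_I,d_O,d_E$ (so the hypothesis $d_E\ge d_Id_O$ is not needed), and is saturated exactly at $C=\id/d_O$.

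What the paper's random-matrix route buys is the leading constant $\mu(c_0)^2$ (e.g.\ $64/(9\pi^2)$ at $d_E=d_Id_O$, and the small-$c$ expansion), which is what is needed to evaluate $\epsilon_{\mathrm{pure}}$ in the tables. Your Step~2 is the counterpart of that part, with concentration of the Wishart matrix $T$ replacing the cited estimate. It needs three small corrections, none of which affect the statement itself.

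\begin{itemize}
\item Congruence by $T^{-1/2}\otimes\id_O$, which does not commute with $\mathsf C$, does not in general preserve Loewner-order comparability. This can fail when $\mathsf C$ is ill-conditioned, as at $c=1$. Compare eigenvalue-wise instead, e.g.\ via Ostrowski's theorem, or directly via $\tr\sqrt C=\|(T^{-1/2}\otimes\id_O)G\|_1\ge\lambda_{\max}(T)^{-1/2}\,\tr\sqrt{\mathsf C}$. This suffices because $\tr\sqrt{\cdot}$ depends only on the spectrum.
\item The $1+o(1)$ control of $T$ requires $d_I/(d_Od_E)\to0$, which is automatic if $d_O\to\infty$. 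Otherwise you only get a constant factor: enough for the order, but not for the constant $m_c^2$.
\item At $c=1$ the Mar\v{c}enko--Pastur density diverges like $x^{-1/2}$ at the hard edge rather than vanishing. This is harmless when integrated against $\sqrt x$.
\end{itemize}
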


\begin{proof}
Using normalized Wishart matrices, we construct the same Choi distribution using $d_E$ i.i.d. Ginibre Kraus operators followed by trace-preserving normalization, i.e.\ a partially normalized Wishart Choi matrix \cite[Thm.~1, Eq.~(4)]{KNPPZ21}.\\

Concretely, one may sample a complex Ginibre matrix $G\in\mathbb{C}^{d_Id_O\times d_E}$ (i.i.d.\ standard complex Gaussians),
form the Wishart matrix $W \coloneqq \frac{1}{d_E}GG^*\in \Lcal(I\otimes O)$, and then enforce
$\mathrm{Tr}_O[C]=\id_I$ by the partial normalization
\begin{align}
    C \;\equiv\; \frac{1}{d_O}\Big( \big(\mathrm{Tr}_O W\big)^{-1/2}\otimes \id_O\Big)\,W\,
    \Big( \big(\mathrm{Tr}_O W\big)^{-1/2}\otimes \id_O\Big),
\end{align}
which matches Haar–Stinespring sampling (as an induced measure), for integer environment size $d_E$.\\

Assume a proportional-growth regime in which $d_Id_O\to\infty$ and
\begin{align}
    c \;\coloneqq \; \frac{d_Id_O}{d_E} \;\to\; c_0\in(0,\infty),
\end{align}
(and $d_I,d_O,d_E\to\infty$ accordingly).

Then the operator-norm effect of the partial normalization is asymptotically negligible \cite[Prop.~11]{NPPZ18}:
\begin{align}
    \big\|\, d_O C - W \,\big\|_\infty \xrightarrow[]{a.s.} 0,
\end{align}
so $d_O C$ has the same limiting empirical eigenvalue distribution as $W$.
By the Mar\v{c}enko--Pastur theorem, the empirical eigenvalue distribution of $W$ converges a.s.\ to $\mathrm{MP}_{c_0}$
with density~\cite{AGZ10}
\begin{align}
    \mathrm{MP}_{c}(dx)
    =
    \Big(1-\frac{1}{c}\Big)_+ \delta_0(dx)
    +
    \frac{\sqrt{(b-x)(x-a)}}{2\pi c\,x}\,\mathbf{1}_{[a,b]}(x)\,dx,
    \qquad
    a=(1-\sqrt{c})^2,\; b=(1+\sqrt{c})^2.
\end{align}

Since $\sqrt{\cdot}$ is continuous on $[0,b]$ and the spectrum of $d_O C$ is asymptotically supported in $[0,b]$,
we obtain the almost sure limit of the linear statistic
\begin{align}
    \frac{1}{d_Id_O}\mathrm{Tr}\sqrt{d_O C}
    \;\xrightarrow[]{a.s.}\;
    \mu(c_0),
    \qquad
    \mu(c)\coloneqq\int \sqrt{x}\,\mathrm{MP}_{c}(dx)
    = \int_a^b \frac{\sqrt{(b-x)(x-a)}}{2\pi c\,\sqrt{x}}\,dx.
\end{align}
Consequently, writing $D:=d_I d_O$,
\begin{align}
    \frac{1}{d_I^2 d_O}\,\big(\mathrm{Tr}\sqrt{C}\big)^2
    =
    \Big(\frac{1}{D}\mathrm{Tr}\sqrt{d_O C}\Big)^2
    \;\xrightarrow[]{a.s.}\;
    \mu(c_0)^2,
\end{align}
and by the deterministic bound $\frac{1}{d_I^2 d_O}(\mathrm{Tr}\sqrt{C})^2\le 1$,
dominated convergence yields convergence of expectations:
\begin{align}
    \mathbb{E}\big(\mathrm{Tr}\sqrt{C}\big)^2
    =
    d_I^2 d_O\,\mu(c_0)^2 \;+\; o\!\left(d_I^2 d_O\right).
\end{align}
Then, we are able to get a closed form for $\mu(c)$ in terms of elliptic integrals \cite{KNPPZ21}. Let  $m\coloneqq\frac{4r}{(1+\sqrt c)^2}\in(0,1]$.
With the standard complete elliptic integrals $K(m),E(m)$,
\begin{align}
    \mu(c)=\frac{2(1+\sqrt{c})}{3\pi c}\Big((1+c)E(m)-(1-\sqrt{c})^2K(m)\Big).
\end{align}
In particular, at the balanced point $c=1$ (i.e.\ $d_E=d_Id_O$),
\begin{align}
    \mu(1)=\frac{8}{3\pi},
    \qquad
    \mathbb{E}\big(\mathrm{Tr}\sqrt{C}\big)^2
    =
    d_I^2 d_O \cdot \frac{64}{9\pi^2}
    \;+\; o(d_I^2 d_O).
\end{align}

In the large environment regime [$d_E\ge d_Id_O$ (i.e.\ $c\le 1$)], using the small-$c$ expansion $\mu(c)=1-\frac{c}{8}+O(c^2)$,
\begin{align}
    \mathbb{E}\big(\mathrm{Tr}\sqrt{C}\big)^2
    =
    d_I^2 d_O\Big(1-\frac{1}{4}\frac{D}{d_E}+O\!\big((D/d_E)^2\big)\Big)
    =
    d_I^2 d_O - \frac{d_I^3 d_O^2}{4d_E} + O\!\Big(\frac{1}{d_E^2}\Big).
\end{align}
Then, as $d_E\to\infty$ (with fixed $d_I,d_O$), the limit is $d_I^2 d_O$, consistent with
concentration of $C$ near $\id_{IO}/d_O$.

\end{proof}

To summarize, the Haar-Stinespring model is the intrinsic one for random channels and pure Choi operators, while the Ginibre/Wishart model is the most convenient one for explicit calculations. Exact finite-dimensional quantities such as $\EE\,\tr(C_V)$ and $\EE\,\tr(C_V^2)$ are naturally handled by Wick or Weingarten calculus, whereas asymptotic spectral information is described by Wishart theory and the Mar\v{c}enko--Pastur law.

\section{Pure-output strategy}\label{app::pureoutput}

\begin{figure}[H]
    \centering
    \includegraphics[width=0.4\linewidth]{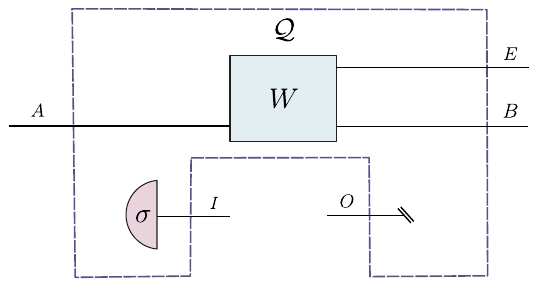}
    \caption{\textbf{Pure-output strategies.} Depiction of the family of strategies that forgets the input channel and outputs a fixed isometric channel $\widetilde{W}$.}
    \label{fig:map to pure W}
\end{figure}

Here we analyze the average purification error attained by the single-copy pure-output strategies, which output a fixed isometric channel regardless of the input quantum channel. We recall that these strategies, defined in Sec.~\ref{sec::pureoutput}, are described by a linear map $\Qcal_{\ket{W}}:\Lcal (\Hcal_I\otimes\Hcal_O)\rightarrow \Lcal (\Hcal_A \otimes\Hcal_B \otimes \Hcal_E)$ such that
\begin{equation}
        \Qcal_{\ket{W}}(C)=\tr(C)\frac{\ketbra{W}}{d_I} = \ketbra{W}
\end{equation}
and $\ketbra{W}\in\Lcal(\Hcal_{A}\otimes\Hcal_{B}\otimes\Hcal_{E})$ is a fixed isometric channel and $\Hcal_{A}\cong\Hcal_{I}$, $\Hcal_{B}\cong\Hcal_{O}$, implying $d_A=d_I$ and $d_B=d_O$. Such strategies attain an error defined by
\begin{align}
    \epsilon_{\ket{W}}(d_I,d_O,d_E) &= \EE_C \min_{U_E} \norm{\Qcal_{\ket{W}}(C)-\ketbra{V_{U_E}(C)}}^2_2 \\
    & = \EE_C \min_{U_E}\norm{\ketbra{W}-\ketbra{V_{U_E}(C)}}^2_2.
\end{align}

We are interested in finding out the minimum error of pure-output strategies, taken over all fixed output isometries $\ket{W}$
\begin{align}
    \epsilon_\mathrm{pure}(d_I,d_O,d_E) &\coloneqq \min_{\ket{W}} \epsilon_{\ket{W}}(d_I,d_O,d_E)
\end{align}
and the optimal pure-output strategy that attains this error.

\begin{lemma}\label{lem::error_pureoutput}
    A given pure-output approximate purification machine $\Qcal_{\ket{W}}:\Lcal (\Hcal_I\otimes\Hcal_O)\rightarrow \Lcal (\Hcal_A \otimes\Hcal_B \otimes \Hcal_E)$, such that
    \begin{equation}
        \Qcal_{\ket{W}}(C)=\ketbra{W}
    \end{equation}
    is a fixed isometry (independent of the input channel $C$), yields an average purification error given by
    \begin{equation}
        \epsilon_{\ket{W}}(d_I,d_O,d_E)=2d_I^2\left(1-\mathbb{E}_C\left\{F\left(\frac{C}{d_I},\frac{\tr_E\ketbra{W}}{d_I}\right)\right\}\right)=2\left(d_I^2-\mathbb{E}_C\left\{\tr(\sqrt{C}\sqrt{\tr_E\ketbra{W}})^2\right\}\right),
    \end{equation}
    where $F(\sigma,\rho)$ is the state fidelity.
\end{lemma}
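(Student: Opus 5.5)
We expand the squared Hilbert--Schmidt norm for each fixed input $C$ and then reduce the maximization over environment unitaries to Uhlmann's theorem. Since $\ketbra{W}$ and $\ketbra{V_{U_E}(C)}$ are both rank-one with squared norms $\braket{W}^2=d_I^2$ and $\braket{V_{U_E}(C)}^2=d_I^2$ (both are Choi vectors of isometries, so $\braket{W}=\tr\ketbra{W}=d_I$), we have
\begin{equation}
\norm{\ketbra{W}-\ketbra{V_{U_E}(C)}}_2^2 = 2d_I^2 - 2\abs{\braket{W}{V_{U_E}(C)}}^2 .
\end{equation}
Hence the inner minimization over $U_E$ becomes $2d_I^2-2\max_{U_E}\abs{\bra{W}(\id\otimes U_E)\ket{V(C)}}^2$. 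Averaging over $C$ will then give the claim, once this maximum is identified.

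The key step is to normalize both vectors. We set $\ket{w}=\ket{W}/\sqrt{d_I}$ and $\ket{v}=\ket{V(C)}/\sqrt{d_I}$. These are unit vectors in $\Hcal_A\otimes\Hcal_B\otimes\Hcal_E$ whose marginals on $AB$ are $\sigma=\tr_E\ketbra{W}/d_I$ and $\rho=C/d_I$. As $U_E$ ranges over all unitaries on $\Hcal_E$, the vectors $(\id\otimes U_E)\ket{v}$ range over all purifications of $\rho$ on the fixed environment $\Hcal_E$.

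Uhlmann's theorem then gives $\max_{U_E}\abs{\braket{w}{(\id\otimes U_E)v}}^2=F(\rho,\sigma)=\tr(\sqrt{\rho}\sqrt{\sigma})^2$, with the paper's convention for $F$. We must check that $\Hcal_E$ is large enough for the maximum to be attained there rather than only in a larger ancilla. Both $\ket{w}$ and $\ket{v}$ live on the same $E$, so both marginals have rank at most $d_E$. The Schmidt decomposition writes $\ket{v}=(\sqrt{\rho}\otimes\id)\sum_i\ket{i}\ket{e_i}$ and $\ket{w}=(\sqrt{\sigma}\otimes\id)\sum_i\ket{i}\ket{f_i}$ with the sums over $AB$ basis vectors mapped into $E$. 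The usual polar-decomposition argument then yields a partial isometry between the supports in $E$ achieving $\abs{\tr(\sqrt{\sigma}\sqrt{\rho}\,X)}$ with $X$ the polar unitary, and this partial isometry extends to a unitary $U_E$ on $\Hcal_E$. This dimension check, that the optimizing unitary exists within $\Hcal_E$ itself, is the only delicate point, and I expect the main effort to go there. The rest is routine.

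Undoing the normalization gives $\max_{U_E}\abs{\braket{W}{V_{U_E}(C)}}^2=d_I^2F(C/d_I,\tr_E\ketbra{W}/d_I)=\tr(\sqrt{C}\sqrt{\tr_E\ketbra{W}})^2$, because the factor $d_I^{-1/2}\cdot d_I^{-1/2}$ inside the trace squares to $d_I^{-2}$. Substituting this into the expanded norm and taking $\EE_C$, which is linear, yields both stated forms of $\epsilon_{\ket{W}}(d_I,d_O,d_E)$.
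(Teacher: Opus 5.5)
Your proposal is correct and follows essentially the paper's own proof: you expand the squared Hilbert--Schmidt norm using $\braket{W}=\braket{V_{U_E}(C)}=d_I$ and evaluate $\max_{U_E}\abs{\bra{W}(\id\otimes U_E)\ket{V(C)}}^2$ by Uhlmann's theorem, and your check that the optimal unitary already exists on the given $\Hcal_E$ is a detail the paper leaves implicit. One caveat, inherited from the paper's notation: the second form holds only if $\tr(\sqrt{C}\sqrt{\tr_E\ketbra{W}})$ is read as the trace norm $\norm{\sqrt{C}\sqrt{\tr_E\ketbra{W}}}_1$, which is what your polar-decomposition step produces, because the squared plain trace $\tr(\sqrt{\rho}\sqrt{\sigma})^2$ is strictly smaller than $F(\rho,\sigma)$ whenever $\rho$ and $\sigma$ do not commute.
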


\begin{proof}
For the fixed pure-output strategy, 
\begin{align}
    \epsilon_{\ket{W}}(d_I,d_O,d_E) & = \mathbb{E}_C \inf_{U_E} \left\{\tr[(\Qcal_{\ket{W}}(C))^2]+d_A^2-2\tr[ \Qcal_{\ket{W}}(C)(\id\otimes U_E)\ketbra{V(C)}(\id\otimes U_E^*)]\right\}\\
    & = \mathbb{E}_C\left\{\tr[\ketbra{W}^2]+d_A^2-2\sup_{U_E}\tr[\ketbra{W}(\id\otimes U_E)\ketbra{V(C)}(\id\otimes U_E^*)]\right\} \label{eq::pure_sup}\\
    & = \mathbb{E}_C\left\{2\left(d_A^2-d_A^2F\left(\frac{\tr_E\ketbra{V(C)}}{d_A},\frac{\tr_E\ketbra{W}}{d_A}\right)\right)\right\} \label{eq::post_uhlmann}\\
    & = 2d_I^2\left(1-\mathbb{E}_C\left\{F\left(\frac{C}{d_I},\frac{\tr_E\ketbra{W}}{d_I}\right)\right\}\right)\label{eq: fixed W out after uhlman} \\
    &= 2\left(d_I^2-\mathbb{E}_C\left\{\tr(\sqrt{C}\sqrt{\tr_E\ketbra{W}})^2\right\}\right),
\end{align}
where from Eq.~\eqref{eq::pure_sup} to~\eqref{eq::post_uhlmann} we used Uhlmann's theorem and recalling that $d_A=d_I$.
\end{proof}

We now prove tight lower and upper bounds for the error $\epsilon_{\ket{W}}(d_I,d_O,d_E)$ and show that the lower bound (best-case scenario) is attained by strategies that output an isometry $\ket{W}=\ket{\Omega}$ that is maximally entangled across the partition $AB|E$, and that the upper bound (worst-case scenario) is attained by strategies that output an isometry $\ket{W}=\ket{\Upsilon}\otimes\ket{\psi}$ that is separable across the partition $AB|E$.

\begin{theorem}[Minimum error of pure-output strategies -- Expanded]\label{thm::pure-output_app}
    The average purification error $\epsilon_{\ket{W}}(d_I,d_O,d_E)$ attained by any fixed pure-output strategy $\Qcal_{\ket{W}}(C)=\ketbra{W}$ satisfies
    \begin{equation}
      \epsilon_{\ket{\Omega}}(d_I,d_O,d_E) 
      \leq 
      \epsilon_{\ket{W}}(d_I,d_O,d_E) 
      \leq
      \epsilon_{\ket{\Upsilon}\otimes\ket{\psi}}(d_I,d_O,d_E),
    \end{equation}
    where 
    \begin{equation}
        \epsilon_{\ket{\Omega}}(d_I,d_O,d_E) =  2\left(
    d_I^2-\frac{1}{d_O}\,\EE_C\!\left\{(\tr\sqrt C)^2\right\}\right)
    \end{equation}
    is the error attained by a strategy the outputs an isometry $\ket{\Omega}$ that is maximally entangled across the $AB|E$ bipartition (i.e., a purification of the fully depolarizing channel), and is independent of which $\ket{\Omega}$, and 
    \begin{equation}
        \epsilon_{\ket{\Upsilon}\otimes\ket{\psi}}(d_I,d_O,d_E) = 2\left(d_I^2-\frac{d_I}{d_O}\right)
    \end{equation}
     is the error attained by a strategy the outputs an isometry $\ket{\Upsilon}\otimes\ket{\psi}$ that is separable across the $AB|E$ bipartition (i.e., a purification of a unitary channel), which is independent of $\ket{\Upsilon}$ and $\ket{\psi}$.

     Consequently, the minimum error of pure-output purification strategies over all possible fixed pure outputs $\ket{W}$ is given by
    \begin{align}
        \epsilon_\mathrm{pure}(d_I,d_O,d_E) &= \min_{\ket{W}} \epsilon_{\ket{W}}(d_I,d_O,d_E) = \epsilon_{\ket{\Omega}}(d_I,d_O,d_E) \\
        &=2\left(d_I^2-\frac1{d_O}\,\EE_C\!\left[(\tr\sqrt C)^2\right]\right)
    \end{align}
    and the optimal pure-output strategy is the one that maps all inputs to $\ketbra{\Omega}$, a purification of the fully depolarizing channel.
\end{theorem}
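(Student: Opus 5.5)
By Lemma~\ref{lem::error_pureoutput}, $\epsilon_{\ket{W}}$ depends on $\ket{W}$ only through the Choi operator $\sigma\coloneqq\tr_E\ketbra{W}$, a channel Choi operator with $\tr_B\sigma=\id_A$ and $\tr\sigma=d_I$. Write $\Phi(C,\sigma)\coloneqq\tr(\sqrt{C}\sqrt{\sigma})^2$, understood as the unnormalized fidelity $\norm{\sqrt C\sqrt\sigma}_1^2=d_I^2F(C/d_I,\sigma/d_I)$. Then $\epsilon_{\ket{W}}=2(d_I^2-\EE_C\Phi(C,\sigma))$, and the whole theorem reduces to sandwiching $\EE_C\Phi(C,\sigma)$ between $d_I/d_O$ and $\EE_C\{(\tr\sqrt C)^2\}/d_O$ for every Choi operator $\sigma$. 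I will also check that the endpoints are attained by the two stated choices of $\ket{W}$.

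\textbf{The two endpoint values.} If $\ket{W}=\ket{\Omega}$ is maximally entangled across $AB|E$, then $\sigma=\id/d_O=D$ for every such $\ket{\Omega}$. Hence $\Phi(C,D)=(\tr\sqrt C)^2/d_O$, which gives the formula for $\epsilon_{\ket{\Omega}}$ and its independence of the particular $\ket{\Omega}$. If instead $\ket{W}=\ket{\Upsilon}\otimes\ket{\psi}$, then $\sigma=\ketbra{\Upsilon}$ has rank one. In that case $\Phi(C,\sigma)=\bra{\Upsilon}C\ket{\Upsilon}$, and $\EE_C C=\id/d_O$ (stated in App.~\ref{ap:: reviewrandomchannels}) gives $\EE_C\Phi=\braket{\Upsilon}/d_O=d_I/d_O$. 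This yields $2(d_I^2-d_I/d_O)$, independent of $\ket{\Upsilon}$ and $\ket{\psi}$.

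\textbf{Upper bound on the error.} For normalized states one has $F(\rho,\tau)\ge\tr(\rho\tau)$. After rescaling this reads $\Phi(C,\sigma)\ge\tr(C\sigma)$. Averaging and using $\EE_C C=\id/d_O$ gives $\EE_C\Phi(C,\sigma)\ge\tr(\sigma)/d_O=d_I/d_O$ for every $\sigma$, which is exactly the upper bound $\epsilon_{\ket{W}}\le\epsilon_{\ket{\Upsilon}\otimes\ket{\psi}}$.

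\textbf{Lower bound on the error (the main step).} The obstacle is that only the root fidelity is concave, so a direct Jensen argument for $\Phi$ goes the wrong way. I will proceed in three steps.
\begin{enumerate}
\item \emph{Invariance of the ensemble.} The Haar--Stinespring ensemble is invariant under $C\mapsto(\bar U_I\otimes U_O)C(\bar U_I\otimes U_O)^*$. This follows from $\ket{V}=(\id\otimes V)\ket{\Phi^+}$ and left/right Haar invariance of $V$. Consequently $\EE_C\Phi(C,\sigma)=\EE_C\EE_{U}\Phi(C,U\sigma U^*)$, where $U$ ranges over Haar-random local unitaries $U_A\otimes U_B$.
\item \emph{Variational bound at fixed $C$.} I use the Alberti-type variational form $\Phi(C,\tau)=\inf_{Y>0}\tr(CY)\tr(\tau Y^{-1})$. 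For fixed $C$, I choose $Y$ independently of $U$, namely $Y=(C+\eta\id)^{-1/2}$ with $\eta>0$ to handle singular $C$. This gives
\[
\EE_U\Phi(C,U\sigma U^*)\le\tr\!\big(C(C+\eta)^{-1/2}\big)\,\tr\!\big(\EE_U[U\sigma U^*](C+\eta)^{1/2}\big).
\]
\item \emph{Twirl and limit.} The local twirl gives $\EE_U U\sigma U^*=\tr(\sigma)\id/(d_Id_O)=\id/d_O$. Letting $\eta\to0$ yields $\EE_U\Phi(C,U\sigma U^*)\le(\tr\sqrt C)^2/d_O$. Averaging over $C$ then gives $\epsilon_{\ket{W}}\ge\epsilon_{\ket{\Omega}}$.
\end{enumerate}

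Combining the upper and lower bounds gives $\min_{\ket{W}}\epsilon_{\ket{W}}=\epsilon_{\ket{\Omega}}$, with the minimum attained by any purification of the fully depolarizing channel. The points needing care are the correct transpose in the invariance step and the justification of the $\eta\to0$ limit, which follows by continuity and dominated convergence since all quantities are bounded by $d_I^2$.
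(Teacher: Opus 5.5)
Your proof is correct and has the same skeleton as the paper's: the Uhlmann reduction, the two endpoint evaluations, $F\ge\tr(\rho\tau)$ together with $\EE_C C=\id/d_O$ for the upper bound, and ensemble invariance plus a twirl to $\id/d_O$ for the lower bound. The one real difference is how you justify moving the unitary average inside $\Phi$.

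\textbf{The concavity step.} The paper simply invokes concavity of the fidelity in its second argument, and that step is legitimate. The squared fidelity $\norm{\sqrt\rho\sqrt\tau}_1^2$ is not jointly concave, but it is concave in each argument separately. So the obstacle you describe is not real: Jensen goes the right way. Your bound $\Phi(C,\tau)\le\tr(CY)\tr(\tau Y^{-1})$, with a $U$-independent choice $Y=(C+\eta\id)^{-1/2}$, is in effect a self-contained proof of exactly that step. For fixed $C$, the map $\tau\mapsto\Phi(C,\tau)$ is an infimum of linear functionals, hence concave. What your route buys is that it needs only the elementary direction of Alberti's formula, which is just H\"older's inequality.

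\textbf{The twirl.} You twirl over $U_A\otimes U_B$. This uses input-side invariance of the Haar--Stinespring ensemble but needs only $\tr\sigma=d_I$. The paper twirls over output unitaries alone and instead uses $\tr_B\sigma=\id_A$. Either suffices. The transpose you worry about is immaterial, because $\bar U$ is Haar-distributed whenever $U$ is.

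\textbf{The $\eta\to0$ limit.} Your inequality holds pointwise in $C$ for every $\eta>0$. You can therefore let $\eta\to0$ before averaging over $C$, and no dominated-convergence argument is needed.

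\textbf{A caveat shared with the paper.} A vector $\ket{\Omega}\in\Hcal_A\otimes\Hcal_B\otimes\Hcal_E$ with $\tr_E\ketbra{\Omega}=\id/d_O$ has Schmidt rank $d_Id_O$ across $AB|E$, so it exists only when $d_E\ge d_Id_O$. For smaller $d_E$, both arguments establish $2\left(d_I^2-\EE_C\{(\tr\sqrt C)^2\}/d_O\right)$ only as a lower bound on $\epsilon_{\ket{W}}$. The attainment claim, and hence the identification $\epsilon_{\mathrm{pure}}=\epsilon_{\ket{\Omega}}$, is therefore proved only in the regime $d_E\ge d_Id_O$.
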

The proof of Thm.~\ref{thm::pure-output_app} is presented as Lemma~\ref{lem::upperb_pureoutput} for the upper bound and as Lemma~\ref{lem::lowerb_pureoutput} for the lower bound.

\begin{lemma}[Upper bound on the error of pure-output strategies]\label{lem::upperb_pureoutput}
   The average purification error $\epsilon_{\ket{W}}(d_I,d_O,d_E)$ attained by any pure-output strategy that outputs a fixed isometry is upper bounded by
   \begin{align}
       \epsilon_{\ket{W}}(d_I,d_O,d_E)\leq 2\left(d_I^2-\frac{d_I}{d_O}\right).
   \end{align}
   This bound is tight and attained by strategies that output isometries of the form $\ket{W}=\ket{\Upsilon}\otimes\ket{\psi}$, that is, isometries that are separable across the bipartition $AB|E$.
\end{lemma}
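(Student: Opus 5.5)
Starting from Lemma~\ref{lem::error_pureoutput}, we have
\begin{equation}
\epsilon_{\ket{W}}(d_I,d_O,d_E)=2\left(d_I^2-\EE_C\left\{\tr\left(\sqrt{C}\sqrt{\sigma_W}\right)^2\right\}\right),\qquad \sigma_W\coloneqq\tr_E\ketbra{W}.
\end{equation}
The upper bound is therefore equivalent to the lower bound $\EE_C\{\tr(\sqrt{C}\sqrt{\sigma_W})^2\}\geq d_I/d_O$ for every fixed isometry $\ket{W}$. I will prove this by pulling the expectation inside with a concavity argument and then using the known first moment $\EE_C\{C\}=\id_{IO}/d_O$ from App.~\ref{ap:: reviewrandomchannels}.

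\textbf{Step 1 (concavity).} The map $C\mapsto F(C/d_I,\sigma_W/d_I)$ is concave in its first argument, because fidelity is jointly concave. Jensen's inequality then gives
\begin{equation}
\EE_C\{F(C/d_I,\sigma_W/d_I)\}\leq F(\EE_C\{C\}/d_I,\sigma_W/d_I).
\end{equation}
This points in the wrong direction for our purpose, since it gives a lower bound on the error rather than an upper bound. I therefore need a different inequality.

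\textbf{Step 1 (revised).} Use $\tr(\sqrt{C}\sqrt{\sigma})^2\geq\tr(C\sigma)$. This holds because $F(\rho,\sigma)\geq\tr(\rho\sigma)$ for normalized states, and rescaling by $d_I$ gives $\tr(\sqrt{C}\sqrt{\sigma})^2\geq\tr(C\sigma)$. The right-hand side is linear in $C$, so averaging yields
\begin{equation}
\EE_C\{\tr(\sqrt{C}\sqrt{\sigma_W})^2\}\geq\tr\left(\EE_C\{C\}\,\sigma_W\right)=\frac{1}{d_O}\tr(\sigma_W)=\frac{d_I}{d_O}.
\end{equation}
The last equality uses $\tr\sigma_W=\tr\ketbra{W}=d_I$, which holds because $W$ is an isometry. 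Substituting into Lemma~\ref{lem::error_pureoutput} gives $\epsilon_{\ket{W}}\leq 2(d_I^2-d_I/d_O)$.

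\textbf{Step 2 (tightness).} Take $\ket{W}=\ket{\Upsilon}\otimes\ket{\psi}$. Then $\sigma_W=\ketbra{\Upsilon}$ has rank one and trace $d_I$, so $\sqrt{\sigma_W}=\ketbra{\Upsilon}/\sqrt{d_I}$. This gives
\begin{equation}
\tr(\sqrt{C}\sqrt{\sigma_W})^2=\frac{1}{d_I}\bra{\Upsilon}\sqrt{C}\ket{\Upsilon}^2.
\end{equation}
For a rank-one $\sigma$, the fidelity equals $\tr(\rho\sigma)$ exactly, so the quantity to average is really $\tr(C\sigma_W)=\bra{\Upsilon}C\ket{\Upsilon}$. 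To confirm this, I would rewrite the fidelity of a pure state with $\rho$ as $\bra{\phi}\rho\ket{\phi}$ rather than work with the $\sqrt{C}$ form, whose naive expression is misleading. Averaging then gives $\bra{\Upsilon}\id\ket{\Upsilon}/d_O=d_I/d_O$ exactly, independently of $\ket{\Upsilon}$, $\ket{\psi}$ and $d_E$. So the bound is attained by every separable output.

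The main obstacle is only in choosing the right inequality, namely $F\geq\tr(\rho\sigma)$ with equality for rank-one $\sigma$, so that the average becomes linear and the first moment $\EE_C\{C\}=\id/d_O$ suffices. I also need to keep the normalization by $d_I$ consistent throughout; both steps use only results stated earlier in the paper.
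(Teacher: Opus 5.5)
Your argument is the paper's proof. You bound the fidelity from below by the linear quantity $\tr(C\sigma_W)$, average it using $\EE_C\{C\}=\id/d_O$ and $\tr\sigma_W=d_I$, and get tightness from the fact that this bound is an equality when $\sigma_W=\ketbra{\Upsilon}$ has rank one.

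One fix is needed in Step~1: state the inequality for $\norm{\sqrt{C}\sqrt{\sigma_W}}_1^2$, not for the literal $\tr(\sqrt{C}\sqrt{\sigma_W})^2$. The trace-norm form is what Uhlmann's theorem produces in Lemma~\ref{lem::error_pureoutput} and what the paper uses. The literal form can fall strictly below $\tr(C\sigma_W)$ when $\sigma_W$ has rank one, as you noticed yourself in Step~2.
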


\begin{proof}
The upper bound can be derived via $F(X, Z)=\|\sqrt{X} \sqrt{Z}\|^2_1\geq\tr[X Z]=\|\sqrt{X} \sqrt{Z}\|_2^2 $ for full rank $X,Z\geq0$. Then, starting from Lemma~\ref{lem::error_pureoutput},
\begin{align}
    \epsilon_{\ket{W}}(d_I,d_O,d_E) &= 
    2\left(d_A^2-\mathbb{E}_C\{F(\tr_E\ketbra{W},C)\}\right) \\
    &\leq2\left(d_A^2-\tr[F_{AB,A'B'}(\tr_E[\ketbra{W}]\otimes  \mathbb{E}_{V}\{\tr_E[\ketbra{V}]\} )]\right) \label{eq:fidelity ineq}\\
    &=2\left(d_A^2-\tr[F_{AB,A'B'}(\tr_E[\ketbra{W}]\otimes  \frac{\id_{AB}}{d_B})]\right)\label{eq: average over fixed V}\\
    &=2\left(d_A^2-\tr[\frac{\tr_E\ketbra{W}}{d_B}]\right)\\
    &=2\left(d_A^2-\frac{d_A}{d_B}\right)\\
    &=2\left(d_I^2-\frac{d_I}{d_O}\right),
\end{align}
where $d_A=d_I$ and $d_B=d_O$, which is independent of the output isometric channel $\ketbra{W}$ and of $d_E$.

To show the attainability, we take $\ket{W}=\ket{\Upsilon}\otimes\ket{\psi}$ and compute
\begin{align}
    \epsilon_{\ket{\Upsilon}\otimes\ket{\psi}}(d_I,d_O,d_E) &= 2\left(d_I^2-\mathbb{E}_C\{F(\tr_E(\ketbra{\Upsilon}\otimes\ketbra{\psi}),C)\}\right) \\
    &= 2\left(d_I^2-\mathbb{E}_C\{F(\ketbra{\Upsilon},C)\}\right) \\
    &= 2\left(d_I^2-\tr(\ketbra{\Upsilon}\mathbb{E}_C\{C\})\right) \\
    &= 2\left(d_I^2-\tr(\ketbra{\Upsilon}\frac{\id}{d_O})\right) \\
    &=2\left(d_I^2-\frac{d_I}{d_O}\right).
\end{align}
Hence
\begin{equation}
     \epsilon_{\ket{W}}(d_I,d_O,d_E) \leq \epsilon_{\ket{\Upsilon}\otimes\ket{\psi}}(d_I,d_O,d_E) =2\left(d_I^2-\frac{d_I}{d_O}\right).
\end{equation}
\end{proof}

\begin{lemma}[Lower bound on the error of pure-output strategies]\label{lem::lowerb_pureoutput}
   The average purification error $\epsilon_{\ket{W}}(d_I,d_O,d_E)$ attained by any pure-output strategy that outputs a fixed isometry is lower bounded by
   \begin{align}
       \epsilon_{\ket{W}}(d_I,d_O,d_E)\geq 2\left(d_I^2-\frac1{d_O}\,\EE_C\!\left[(\tr\sqrt C)^2\right]\right).
   \end{align}
   This bound is tight and attained by strategies that output isometries of the form $\ket{W}=\ket{\Omega}$ where $\tr_E\ketbra{\Omega}=\id_{AB}/d_B$, that is, isometries that are maximally entangled across the bipartition $AB|E$.
\end{lemma}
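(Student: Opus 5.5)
The plan is to start from Lemma~\ref{lem::error_pureoutput}, which reduces the claim to an inequality on a single fidelity-type term. Write $\rho_W\coloneqq\tr_E\ketbra{W}$; it is a valid Choi operator, so $\rho_W\geq0$ and $\tr_B\rho_W=\id_A$. Also write $F(X,Z)\coloneqq\norm{\sqrt{X}\sqrt{Z}}_1^2$ for the unnormalized fidelity, which is the quantity produced by Uhlmann's theorem in that lemma. It then suffices to show
\begin{equation}
    \EE_C\{F(C,\rho_W)\}\leq \EE_C\Big\{F\Big(C,\frac{\id_{AB}}{d_O}\Big)\Big\}=\frac{1}{d_O}\EE_C\{(\tr\sqrt{C})^2\},
\end{equation}
where the equality is immediate because $\sqrt{\id/d_O}$ is a scalar multiple of the identity. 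Equality in the lemma for $\ket{W}=\ket{\Omega}$ follows from the same computation, since $\tr_E\ketbra{\Omega}=\id/d_O$. This also gives independence from the particular $\ket{\Omega}$.

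The first key step is a symmetry of the Haar--Stinespring ensemble. If $V$ is Haar-distributed on the Stiefel manifold, then so is $(U_O\otimes\id_E)VU_I^*$ for any fixed unitaries $U_I,U_O$. At the level of Choi operators this maps $C\mapsto (\overline{U_I}\otimes U_O)\,C\,(\overline{U_I}\otimes U_O)^*$. Hence the law of $C$ is invariant under conjugation by every local unitary $U_A\otimes U_B$. Unitary invariance of $F$ then gives
\begin{equation}
    F(C,\rho_W)=F\big((U_A\otimes U_B)^*C(U_A\otimes U_B),\rho_W\big)=F\big(C,(U_A\otimes U_B)\rho_W(U_A\otimes U_B)^*\big).
\end{equation}
Averaging over the ensemble, we can replace $\EE_C F(C,\rho_W)$ by $\EE_C\,\EE_{U_A,U_B}F\big(C,(U_A\otimes U_B)\rho_W(U_A\otimes U_B)^*\big)$, with $U_A,U_B$ independent and Haar-distributed.

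The second key step is concavity of $F$ in its second argument. I would establish this through the Alberti variational formula $F(X,Z)=\inf_{Y>0}\tr(XY)\tr(ZY^{-1})$. For fixed $X$ this is an infimum of functions that are linear in $Z$, and is therefore concave in $Z$. Jensen's inequality then yields
\begin{equation}
    \EE_{U_A,U_B}F\big(C,(U_A\otimes U_B)\rho_W(U_A\otimes U_B)^*\big)\leq F\big(C,\mathcal{T}(\rho_W)\big),
\end{equation}
where $\mathcal{T}$ is the local twirl. The local twirl of any operator on $AB$ factorizes into independent single-system twirls, so $\mathcal{T}(X)=\tr(X)\,\id_{AB}/(d_Ad_B)$. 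Using $\tr\rho_W=d_I$, we get $\mathcal{T}(\rho_W)=\id_{AB}/d_O$, which closes the argument.

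The main obstacle I expect is making the concavity step airtight. Two points need care. First, the paper's notation $\tr(\sqrt{C}\sqrt{\rho})^2$ must be read as $\norm{\sqrt{C}\sqrt{\rho}}_1^2$, so that it coincides with what Uhlmann's theorem actually produces. Second, the Alberti formula has to be justified for possibly rank-deficient $C$ or $\rho_W$, for instance by a continuity argument with $Y$ restricted to strictly positive operators. A more minor check is the invariance step: the complex conjugation appearing on the input side is harmless, because $\overline{U_I}$ ranges over all unitaries as $U_I$ does.
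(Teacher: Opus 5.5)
Your proposal is correct and is essentially the paper's argument: distributional invariance of the Haar--Stinespring ensemble under local unitaries, concavity of the fidelity in its second argument, and a Haar twirl that sends $\tr_E\ketbra{W}$ to $\id/d_O$, with equality at $\ket{\Omega}$. The differences are minor. The paper twirls only the output system and uses $\tr_O(\tr_E\ketbra{W})=\id_I$, whereas your $U_A\otimes U_B$ twirl needs only $\tr\rho_W=d_I$ but also uses input-side invariance; you also justify concavity via Alberti's formula and make explicit the trace-norm reading of the fidelity, both of which the paper leaves implicit. Two caveats: your first displayed equality $F(C,\rho_W)=F((U_A\otimes U_B)^*C(U_A\otimes U_B),\rho_W)$ holds only in expectation over $C$, not pointwise. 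And, as in the paper, the tightness claim tacitly requires $d_E\geq d_Id_O$, so that some $\ket{\Omega}$ with $\tr_E\ketbra{\Omega}=\id/d_O$ actually exists.
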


\begin{proof}
Let
\begin{align}
    C_W\coloneqq\tr_E\ketbra{W}.
\end{align}
Now fix a unitary $U$ on $\Hcal_O$. Since $C$ is Haar-Stinespring random, its law is invariant under output conjugation:
\begin{align}
    C\overset{d}{=}(\id_I\otimes U)C(\id_I\otimes U^*).
\end{align}
Hence, using also the unitary invariance of the fidelity under common conjugation,
\begin{align}
    \EE_C \left\{F\left(\frac{C}{d_I},\frac{C_W}{d_I}\right)\right\}
    &= \EE_C \left\{F\left((\id_I\otimes U)\frac{C}{d_I}(\id_I\otimes U^*),(\id_I\otimes U)\frac{C_W}{d_I}(\id_I\otimes U^*)\right)\right\}\\
    &= \EE_C \left\{F\left(\frac{C}{d_I},(\id_I\otimes U)\frac{C_W}{d_I}(\id_I\otimes U^*)\right)\right\},
\end{align}
where in the second line we used that $(\id_I\otimes U)C(\id_I\otimes U^*)$ has the same distribution as $C$. Since this holds for every unitary $U$ on $\Hcal_O$, averaging over $U$ with respect to the Haar measure gives
\begin{align}
    \EE_C \left\{F\left(\frac{C}{d_I},\frac{C_W}{d_I}\right)\right\}
    &=
    \EE_C\int dU\,F\left(\frac{C}{d_I},(\id_I\otimes U)\frac{C_W}{d_I}(\id_I\otimes U^*)\right).
\end{align}

Now use concavity of the fidelity in its second argument for fixed \begin{align}
    \EE_C \left\{F\left(\frac{C}{d_I},\frac{C_W}{d_I}\right)\right\}
    &\le
    \EE_C\left\{F\!\left(\frac{C}{d_I},\int dU\,(\id_I\otimes U)\frac{C_W}{d_I}(\id_I\otimes U^*)\right)\right\}.
\end{align}
The Haar twirl on the output system is
\begin{align}
    \int dU\,(\id_I\otimes U)X(\id_I\otimes U^*)
    &=
    \frac{\tr_O X}{d_O}\otimes \id_O.
\end{align}
Applying this to $X=C_W$ and using $\tr_O C_W=\id_I$, we obtain
\begin{align}
    \int dU\,(\id_I\otimes U)C_W(\id_I\otimes U^*)
    &=
    \frac{\id_I\otimes \id_O}{d_O}
    =
    \frac{\id}{d_O}.
\end{align}
Therefore,
\begin{align}
    \EE_C \left\{F\left(\frac{C}{d_I},\frac{C_W}{d_I}\right
)\right\}
    &\le
    \EE_C\left\{F\left(\frac{C}{d_I},\frac{\id}{d_Id_O}\right) \right\}\\
    &=
    \frac{1}{d_Od_I^2}\,\EE_C\left\{\left(\tr\sqrt {C}\right)^2\right\}.
\end{align}
Substituting this into the definition of $\epsilon_{\ket{W}}(d_I,d_O,d_E)$ yields
\begin{align}
    \epsilon_{\ket{W}}(d_I,d_O,d_E)
    &=
    2\left(d_I^2-\EE_C \left\{F(C,C_W)\right\}\right)\\
    &\ge
    2\left(d_I^2-\frac1{d_O}\,\EE_C\left\{\left(\tr\sqrt C\right)^2\right\}\right),
\end{align}
which proves the lower bound.\\

To show the attainability, we take $\ket{W}=\ket{\Omega}$ and compute
\begin{align}
    \epsilon_{\ket{\Omega}}(d_I,d_O,d_E) &= 2\left(d_I^2-\mathbb{E}_C\left\{\tr[\sqrt{C\tr_E\ketbra{\Omega}}]^2\right\}\right) \\
    &= 2\left(d_I^2-\mathbb{E}_C\left\{\tr[\sqrt{C\,\frac{\id}{d_O}}]^2\right\}\right) \\
    &= 2\left(d_I^2-\frac1{d_O}\,\EE_C\left\{(\tr\sqrt C)^2\right\}\right).
\end{align}
\end{proof}

\section{Append-environment strategy}\label{app::appendenvironment}

\begin{figure}[H]
    \centering
    \includegraphics[width=0.4\linewidth]{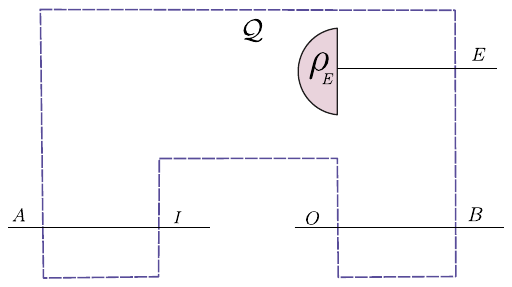}
    \caption{\textbf{Append-environment strategies.} Depiction of the family of strategies that forgets outputs the input channel and appends an ancillary state in the environment space.}
    \label{fig:append env}
\end{figure}

Here we analyze the average purification error attained by the single-copy append-environment strategies, which act as an identity superchannel on the input quantum channel $C$ and append a fixed state $\rho_E$ on the environment system. We recall that these strategies, defined in Sec.~\ref{sec::appendenv}, are described by a linear map $\Qcal_{\mathrm{app}-\rho_E}:\Lcal (\Hcal_I\otimes\Hcal_O)\rightarrow \Lcal (\Hcal_A \otimes\Hcal_B \otimes \Hcal_E)$ such that
\begin{equation}
        \Qcal_{\mathrm{app}-\rho_E}(C) = C \otimes \rho_E.
\end{equation}
Such strategies attain an error defined as
\begin{align}
    \epsilon_{\mathrm{app}-\rho_E}(d_I,d_O,d_E) &= \EE_C \min_{U_E} \norm{\Qcal_{\mathrm{app}-\rho_E}(C)-\ketbra{V_{U_E}(C)}}^2_2 \\
    & = \EE_C \min_{U_E}\norm{C\otimes\rho_E-\ketbra{V_{U_E}(C)}}^2_2.
\end{align}

We are interested in computing the error of the optimal strategy of this kind, that is, of 
\begin{equation}
     \epsilon_{\mathrm{app}}(d_I,d_O,d_E) = \min_{\rho_E}  \epsilon_{\mathrm{app}-\rho_E}(d_I,d_O,d_E).
\end{equation}

\begin{theorem}[Minimal error of append-environment strategies -- Expanded]\label{thm::append-env_app}
    The minimum error attained by an append-environment strategy, obtained by optimizing all such strategies $\Qcal_{\mathrm{app}-\rho_E}(C) = C \otimes \rho_E$ over the appended state $\rho_E$, is given by
    \begin{equation}
        \epsilon_{\mathrm{app}}(d_I,d_O,d_E) = d_I^2-\frac{d_O^2d_E^2-1}{d_Id_O(d_E^2-1)+d_I^2d_E(d_O^2-1)}\sum_{i=1}^{r} (\mathbb{E}_C\!\big[(c_i^\downarrow)^2\big])^2,
    \end{equation}
    where $\{c_i^\downarrow\}$ are the eigenvalues of $C$ in decreasing order, such that $C=\sum_{i=1}^r c_i^\downarrow\ketbra{i}$, and $r=\rank(C)$. 
    
    This quantity can furthermore be upper and lower bounded according to
    \begin{equation}
       d_I^2 - \frac{d_Id_O(d_E^2-1)+d_I^2d_E(d_O^2-1)}{d_O^2d_E^2-1} \leq \epsilon_\mathrm{app}(d_I,d_O,d_E) \leq d_I^2 - \frac{1}{d_E}\frac{d_Id_O(d_E^2-1)+d_I^2d_E(d_O^2-1)}{d_O^2d_E^2-1},
    \end{equation}
    where the upper bound is tight and attained by $\rho_E=\id/d_E$.
\end{theorem}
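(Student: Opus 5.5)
The plan is to compute $\epsilon_{\mathrm{app}-\rho_E}$ in closed form for a fixed $\rho_E$, as a function of the spectrum of $\rho_E$, and then minimize this explicit quadratic over the spectrum.

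\textbf{Expanding the error.} As in Eq.~\eqref{eq::error_expanded},
\begin{equation}
\norm{C\otimes\rho_E-\ketbra{V_{U_E}(C)}}_2^2=\tr(C^2)\tr(\rho_E^2)+d_I^2-2\bra{V_{U_E}(C)}C\otimes\rho_E\ket{V_{U_E}(C)}.
\end{equation}
To evaluate the overlap, I will write the spectral decomposition $C=\sum_{i=1}^r c_i^\downarrow\ketbra{i}$. Since every purification of $C$ is related to a fixed one by a unitary on $E$, I take the Schmidt form $\ket{V(C)}=\sum_i\sqrt{c_i^\downarrow}\ket{i}\otimes\ket{e_i}$ with $\{\ket{e_i}\}$ orthonormal in $\Hcal_E$. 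This is possible because $r\le d_E$. Then
\begin{equation}
\bra{V_{U_E}(C)}C\otimes\rho_E\ket{V_{U_E}(C)}=\sum_{i,j}\sqrt{c_ic_j}\,\bra{i}C\ket{j}\bra{e_i}U_E^*\rho_EU_E\ket{e_j}=\sum_i (c_i^\downarrow)^2\bra{e_i}U_E^*\rho_EU_E\ket{e_i}.
\end{equation}
The diagonal of $U_E^*\rho_EU_E$ in a fixed basis ranges over all vectors majorized by the spectrum of $\rho_E$ (Schur--Horn). Hence, by the rearrangement/von Neumann trace inequality, the maximum over $U_E$ is $\sum_i (c_i^\downarrow)^2 r_i^\downarrow$, where $r_i^\downarrow$ are the eigenvalues of $\rho_E$ in non-increasing order. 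This maximum is attained by aligning the eigenbasis of $\rho_E$ with $\{\ket{e_i}\}$ in matching order.

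\textbf{Averaging over $C$.} Taking $\EE_C$ and writing $a\coloneqq\EE_C\{\tr(C^2)\}$ (given by Lemma~\ref{lemma:: average purity of C}) and $b_i\coloneqq\EE_C\{(c_i^\downarrow)^2\}$ (with $b_i=0$ for $i>d_Id_O$), I obtain
\begin{equation}
\epsilon_{\mathrm{app}-\rho_E}=d_I^2+a\sum_i (r_i^\downarrow)^2-2\sum_i b_ir_i^\downarrow .
\end{equation}
This depends on $\rho_E$ only through its spectrum. The ordering is what makes the expectation separate cleanly: for every $C$ the same sorted spectrum of $\rho_E$ is paired with the sorted spectrum of $C$.

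\textbf{Minimizing over the spectrum.} It remains to minimize the convex quadratic above over probability vectors $r\in\mathbb{R}^{d_E}$ with non-increasing entries. The unconstrained stationary point is $r_i=b_i/a$. It turns out to be feasible: $\sum_i b_i=\EE_C\sum_i(c_i^\downarrow)^2=a$, the entries $b_i$ are non-negative and non-increasing, and only $\min\{d_E,d_Id_O\}$ of them are non-zero, so the vector fits in dimension $d_E$. Hence $r=b/a$ is the global minimizer, and the minimal value is $d_I^2-\frac1a\sum_i b_i^2$. Substituting $a$ from Lemma~\ref{lemma:: average purity of C} gives the stated formula.

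\textbf{Bounds.} For the lower bound, $\sum_i b_i^2\le(\sum_ib_i)^2=a^2$ gives $\epsilon_{\mathrm{app}}\ge d_I^2-a$. For the upper bound, at most $d_E$ of the $b_i$ are non-zero, so Cauchy--Schwarz gives $\sum_ib_i^2\ge a^2/d_E$, i.e.\ $\epsilon_{\mathrm{app}}\le d_I^2-a/d_E$. This value is attained by $\rho_E=\id/d_E$: plugging $r_i=1/d_E$ into the expression for $\epsilon_{\mathrm{app}-\rho_E}$ gives $d_I^2+a/d_E-2a/d_E$.

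\textbf{Main obstacle.} The step needing the most care is the overlap maximization over $U_E$. It requires justifying the Schmidt-form reduction, including when $C$ has degenerate eigenvalues or rank $r<d_E$. It also requires checking that the Schur--Horn/rearrangement argument gives exactly $\sum_i (c_i^\downarrow)^2r_i^\downarrow$, with the pairing of sorted spectra uniform in $C$, so that the expectation separates as claimed.
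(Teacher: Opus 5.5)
Your proposal is correct and follows the paper's proof essentially step for step: expand the squared Hilbert--Schmidt distance, maximize the overlap over $U_E$ with a von Neumann/rearrangement argument to get $\sum_i (c_i^\downarrow)^2 r_i^\downarrow$, average over $C$, minimize the resulting quadratic in the spectrum of $\rho_E$ at $r=b/a$, and obtain the bounds from $\sum_i b_i^2\le a^2$ and from evaluating $\rho_E=\id/d_E$. Your Schmidt-form/Schur--Horn treatment of the overlap (clearer than the paper's $\ket{\Phi^+}$ manipulation when $d_E\neq d_Id_O$) and your observation that the unconstrained minimizer is already feasible (in place of the paper's KKT thresholding) are minor refinements; just take $a$ from Eq.~\eqref{eq::avg_tr_Csquared}, because the displayed statement of Lemma~\ref{lemma:: average purity of C} has $d_O$ and $d_E$ interchanged in the numerator, whereas the last line of that lemma's proof gives the correct value.
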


\begin{proof}
We begin by recalling the error functional related to a given append-environment strategy, given by
\begin{align}
    \epsilon_{\text{app}}(d_I,d_O,d_E) &= \min_{\rho_E}\EE_C\min_{U_E}\norm{C\otimes \rho_E-\ketbra{V_{U_E}(C)}}_2^2\\
    &=\min_{\rho_E}\mathbb{E}_{C}\left(d_A^2+\tr[C_{AB}^2]\tr[\rho_E^2]-2\max_{U_E}\tr[(C\otimes\rho_E)(\id\otimes U_E)\ketbra{V(C)}(\id\otimes U_E^*)]\right)\\
    &=\min_{\rho_E}\mathbb{E}_{C}\left(d_A^2+\tr[C_{AB}^2]\tr[\rho_E^2]-2\max_{U_E}\tr[(C\otimes U_E^*\rho_EU_E)\ketbra{V(C)}]\right).
    \end{align}
We now compute the third term by using the property $(A\otimes B)\ket{\Phi^+}=(AB^T\otimes \id)\ket{\Phi^+}$ and the von Neumann inequality $\max_{U}\tr[AUBU^*]=\sum_{i=1}^{r_max}\lambda_i(A)^\downarrow\lambda_i(B)^\downarrow$, for which
\begin{align}
    \max_{U_E}\tr[(C\otimes U_E^*\rho_EU_E)\ketbra{V(C)}]&=\max_{U_E}\tr[(C^2\otimes U_E^*\rho_EU_E)\ketbra{\Phi^+}]\\
    &=\max_{U_E}\tr[C^2(U_E^*\rho_EU_E)^T]\\
    &=\sum_{i=1}^{\rank(C)}(c_i^\downarrow)^2\lambda_i(\rho_E)^\downarrow.
\end{align}
Here we are using $C=\sum_{i=1}^r c_i^\downarrow$, with $c_1\geq c_2\geq...\geq c_r$ and $\rho_E=\sum_{i=1}^{d_E} \lambda_i(\rho_E)^\downarrow$, with $\lambda_1(\rho_E)\geq...\geq\lambda_r(\rho_E)$. For $C$ generated via Haar-Stinespring Choi vectors, their rank is $r=\min\{d_E,d_Ad_B\}$ as stated in Refs.~\cite{kukulski2021generating,Collins_Nechita_2016}.

The error then reads,
\begin{align}
    \epsilon_{\text{app}}(d_I,d_O,d_E) &= d_I^2+\min_{\rho_E}\{\mathbb E_C\tr[C^2]\tr[\rho_E^2]-2\mathbb E_C\sum_{i}^r (c_i^\downarrow)^2\lambda^{\downarrow}(\rho_E)\}.
\end{align}

Given that $\tr(\rho_E)=1$, we know $\sum_i \lambda_i^\downarrow(\rho_E)=1$. We define the ordered weights
\begin{align}
    w_i \;\coloneqq\;\mathbb{E}_C\!\big[(c_i^\downarrow)^2\big],
    \qquad i=1,\dots,r,
    \qquad
    w_{r+1}=\cdots=w_{d_E}\coloneqq0 \ \ (\text{if } d_E>r).
\end{align}

By the linearity of the expectation,
\begin{align}
    \mathbb{E}_C\!\left[\sum_{i=1}^{r} (c_i^\downarrow)^2\,\lambda_i^\downarrow(\rho_E)\right]
    \;=\;
    \sum_{i=1}^{r}\mathbb{E}_C\!\big[(c_i^\downarrow)^2\big]\,\lambda_i^\downarrow(\rho_E)
    \;=\;
    \sum_{i=1}^{d_E} w_i \lambda_i^\downarrow(\rho_E).
\end{align}
Thus, the minimization is equivalent to the convex program
\begin{align}\label{eq: optimization over eigenvalues}
    \epsilon_{\text{app}}(d_I,d_O,d_E) = \min_{\lambda_i^\downarrow(\rho_E)}
    \left\{
    d_I^2\;+\;\EE_C\tr[C^2]\sum_{i=1}^{d_E}\lambda_i^\downarrow(\rho_E)^2\;-\;2\sum_{i=1}^{d_E} w_i \lambda_i^\downarrow(\rho_E)
    \right\},
\end{align}
with $$\lambda_i^\downarrow(\rho_E)\in\mathbb{R}^{d_E}:\left\{\begin{array}{cc}
     \lambda_i^\downarrow(\rho_E)_1\ge\cdots\ge \lambda_i^\downarrow(\rho_E)_{d_E}\ge 0  \\
      \sum_i \lambda_i^\downarrow(\rho_E)=1
\end{array}\right.\,.$$

Consider first the simplex constraints $\lambda_i^\downarrow(\rho_E)\ge 0$ and $\sum_i \lambda_i^\downarrow(\rho_E)=1$ (the ordering constraint
will be checked a posteriori).
The Lagrangian for the task is
\begin{align}
    \Lcal(\lambda_i^\downarrow(\rho_E),\tau,\mu)
    =
    \EE_C\tr[C^2]\sum_{i=1}^{d_E}\lambda_i^\downarrow(\rho_E)^2 \;-\; 2\sum_{i=1}^{d_E} w_i \lambda_i^\downarrow(\rho_E)
    \;+\;\tau\Big(\sum_{i=1}^{d_E}\lambda_i^\downarrow(\rho_E)-1\Big)
    \;-\;\sum_{i=1}^{d_E}\mu_i \lambda_i^\downarrow(\rho_E),
\end{align}
with multipliers $\mu_i\ge 0$. Stationarity gives, for each $i$,
\begin{align}
    2\EE_C\tr[C^2]\lambda_i^\downarrow(\rho_E) - 2 w_i + \tau - \mu_i = 0.
\end{align}
Complementary slackness $\mu_i \lambda_i^\downarrow(\rho_E)=0$ implies the standard thresholding solution
\begin{align}
    \lambda_i^\downarrow(\rho_E)^{opt}
    =
    \frac{(w_i-\tau/2)_+}{\EE_C\tr[C^2]}
    \qquad (i=1,\dots,d_E),
\end{align}
where $(x)_+\coloneqq\max\{x,0\}$ and $\tau$ is chosen so that $\sum_i \lambda_i^\downarrow(\rho_E)^{opt}=1$.

For the random Haar-Stinespring ensemble under consideration, one has
\begin{align}
    \sum_{i=1}^{d_E} w_i
    =
    \mathbb{E}_C\!\left[\sum_{i=1}^{r} (c_i^\downarrow)^2\right]
    =
    \mathbb{E}_C\!\big[\mathrm{Tr}(C^2)\big].
\end{align}
Imposing $\sum_i \lambda_i^\downarrow(\rho_E)^{opt}=1$ therefore forces the threshold to be $0$, i.e.
$\tau/2=0$, since
\begin{align}
    \sum_{i=1}^{d_E}(w_i-\tau/2)_+
    \begin{cases}
        < \sum_i w_i=\EE_C\tr[C^2], & \text{if }\tau/2>0,\\
        > \sum_i w_i=\EE_C\tr[C^2], & \text{if }\tau/2<0.
    \end{cases}
\end{align}
Hence $\tau/2=0$ and
\begin{align}
    \lambda_i^\downarrow(\rho_E)^{opt}=\frac{w_i}{\EE_C\tr[C^2]}\qquad (i=1,\dots,d_E).
\end{align}
Since $w_1\ge\cdots\ge w_r\ge 0$ (ordered by definition) and $w_{r+1}=\cdots=w_{d_E}=0$,
the vector $\lambda(\rho_E)^{opt}$ is already non-increasing, so it is a feasible point.
Therefore any state $\rho_E^{opt}$ with spectrum
\begin{align}
    \lambda^\downarrow(\rho_E^{opt})
    =
    \left(\frac{w_1}{\EE_C\tr[C^2]},\dots,\frac{w_r}{\EE_C\tr[C^2]},0,\dots,0\right)
\end{align}
is a minimizer.

Substituting the optimal value in Eq.~\eqref{eq: optimization over eigenvalues},
\begin{align}\label{eq: error appendenv wo purity}
    \epsilon_{\text{app}}(d_I,d_O,d_E)
    &=
    d_I^2
    +\EE_C\tr[C^2]\sum_{i=1}^{d_E}\left(\frac{w_i}{\EE_C\tr[C^2]}\right)^2
    -2\sum_{i=1}^{d_E} w_i\left(\frac{w_i}{\EE_C\tr[C^2]}\right)
    =
    d_I^2-\frac{1}{\EE_C\tr[C^2]}\sum_{i=1}^{r} w_i^2
\end{align}
with $w_i=\mathbb{E}_C\!\big[(c_i^\downarrow)^2\big]$.

Using Eq.~\eqref{eq::avgpurity_app} for $\EE_C\tr[C^2]$, we conclude that
\begin{align}\label{eq:: error appendenv}
    \epsilon_{\text{app}}(d_I,d_O,d_E) = d_I^2-\frac{d_O^2d_E^2-1}{d_Id_O(d_E^2-1)+d_I^2d_E(d_O^2-1)}\sum_{i=1}^{r} \left(\EE_C[(c_i^\downarrow)^2]\right)^2.
\end{align}

We can derive a lower bound for this quantity via $\sum_{i=1}^{r} w_i^2\leq (\sum_{i=1}^{r} w_i)^2=\EE_C\{\tr[C^2]\}^2$, therefore
\begin{equation}
    \epsilon_{\text{app}}(d_I,d_O,d_E) \geq d_I^2-\frac{d_Id_O(d_E^2-1)+d_I^2d_E(d_O^2-1)}{d_O^2d_E^2-1}.
\end{equation}

An upper bound can be derived by taking $\rho_E=\id/d_E$, in which case
\begin{align}
    \epsilon_{\text{app-}\id/d_E}(d_I,d_O,d_E) &= \EE_C \min_{U_E}\norm{C\otimes\frac{\id}{d_E}-\ketbra{V_{U_E}(C)}}^2_2 \\
    &= d_I^2 + \frac{1}{d_E}\EE_C\{\tr(C^2)\} - 2 \EE_C \max_{U_E}\tr[\left(C\otimes\frac{U_E^*\id U_E}{d_E}\right)\ketbra{V(C)}]\\
    &=  d_I^2 + \frac{1}{d_E}\EE_C\{\tr(C^2)\} - \frac{2}{d_E} \EE_C\tr[C\tr_E\ketbra{V(C)}]\\
    &=  d_I^2 + \frac{1}{d_E}\EE_C\{\tr(C^2)\} - \frac{2}{d_E} \EE_C\{\tr(C^2)\}\\
    &=  d_I^2 - \frac{1}{d_E}\EE_C\{\tr(C^2)\} \\
    &=  d_I^2 - \frac{1}{d_E}\frac{d_Id_O(d_E^2-1)+d_I^2d_E(d_O^2-1)}{d_O^2d_E^2-1} \label{eq::append_maxmixedstate}
\end{align}
Hence,
\begin{equation}
    \epsilon_{\text{app}}(d_I,d_O,d_E) \leq d_I^2 - \frac{1}{d_E}\frac{d_Id_O(d_E^2-1)+d_I^2d_E(d_O^2-1)}{d_O^2d_E^2-1}.
\end{equation}

\end{proof}

The particular case where the appended environment state $\rho_E=\ketbra{\psi}$ is pure can be of particular interest. In this case, for strategies $\Qcal_{\mathrm{app}-\ket{\psi}}(C)=C\otimes\ketbra{\psi}$, we would like to compute the minimum error over all pure states $\ket{\psi}$, given by
\begin{align}
    \epsilon_{\text{app-pure}}(d_I,d_O,d_E) \coloneqq \min_{\ket{\psi}} \EE_C \min_{U_E}\norm{C\otimes\ketbra{\psi}-\ketbra{V_{U_E}(C)}}^2_2
\end{align}

\begin{proposition}[Minimum error of append-pure state strategy]\label{prop::app_purestate}
    The minimum error attained by a strategy that appends a pure state onto the environment system is 
    \begin{equation}
        \epsilon_{\mathrm{app-pure}}(d_I,d_O,d_E) = d_I^2 + \frac{d_Id_O(d_E^2-1)+d_I^2d_E(d_O^2-1)}{d_O^2d_E^2-1} - 2\,\EE_C\{c_\mathrm{max}^2\},
    \end{equation}
    where $c_\mathrm{max}$ is the maximal eigenvalue of $C$.
\end{proposition}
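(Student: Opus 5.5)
The plan is to specialize the computation in the proof of Thm.~\ref{thm::append-env_app} to rank-one environment states $\rho_E=\ketbra{\psi}$, and then observe that the optimization over $\ket{\psi}$ becomes trivial.

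\textbf{Expanding the squared norm.} First expand the squared Hilbert--Schmidt distance exactly as for a general $\rho_E$. Since $\tr(\ketbra{\psi}^2)=1$, the purity term is $\tr(C^2)$ and the target term is $d_I^2$. This gives
\begin{equation}
\norm{C\otimes\ketbra{\psi}-\ketbra{V_{U_E}(C)}}_2^2 = d_I^2+\tr(C^2)-2\tr\!\left[(C\otimes\ketbra{\psi})\ketbra{V_{U_E}(C)}\right].
\end{equation}

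\textbf{Maximizing the overlap.} Next, maximize the overlap over $U_E$. As in the proof of Thm.~\ref{thm::append-env_app}, move $U_E$ onto the appended state, $U_E^*\ketbra{\psi}U_E$, and use the Schmidt/Choi structure of $\ket{V(C)}$. Writing $\ket{V(C)}=\sum_i\sqrt{c_i}\ket{i}_{AB}\ket{\phi_i}_E$, with $\ket{i}$ the eigenvectors of $C$, the overlap becomes
\begin{equation}
\sum_i c_i^2\,\abs{\langle\phi_i|U_E^*|\psi\rangle}^2 .
\end{equation}
Here $\abs{\langle\phi_i|U_E^*|\psi\rangle}^2$ is a probability vector, and any probability vector is attainable by a suitable choice of $U_E$. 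Equivalently, von Neumann's trace inequality applies with $\rho_E$ of spectrum $(1,0,\dots,0)$. Either way, the maximum is $c_{\max}^2=(c_1^\downarrow)^2$, attained by rotating $\ket{\psi}$ onto $\ket{\phi_1}$.

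\textbf{Averaging and optimizing over $\ket{\psi}$.} Averaging over $C$ gives
\begin{equation}
d_I^2+\EE_C\{\tr(C^2)\}-2\EE_C\{c_{\max}^2\}.
\end{equation}
This expression is independent of $\ket{\psi}$, so the minimization over pure states is trivial. Substituting Lemma~\ref{lemma:: average purity of C} for $\EE_C\{\tr(C^2)\}$ then yields the claimed formula.

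\textbf{Expected obstacle.} The only delicate point is to justify that the maximum over $U_E$ is exactly $c_{\max}^2$, even when $d_E<\rank(C)$ is impossible or when the eigenvalues are degenerate. The argument above handles this uniformly, because the weights form an arbitrary attainable probability vector over at most $\min\{d_E,d_Id_O\}$ indices. Checking that claim carefully is therefore the main step.
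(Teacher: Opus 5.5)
Your proposal is correct and follows essentially the paper's route: the paper sets the spectrum of $\rho_E$ to $(1,0,\dots,0)$ in the eigenvalue program from the proof of Thm.~\ref{thm::append-env_app}, which gives your per-$C$ overlap maximum $c_{\max}^2$, and then substitutes the average purity (your Schmidt-basis computation is just a more explicit justification of that von Neumann step). When substituting, use $\EE_C\{\tr(C^2)\}$ as in Eq.~\eqref{eq::avg_tr_Csquared}, which matches the end of the lemma's proof, because the lemma's displayed statement has $d_O$ and $d_E$ interchanged in the numerator; also, your weights $\lvert\langle\phi_i|U_E^*|\psi\rangle\rvert^2$ are only sub-normalized when $d_E>d_Id_O$, which leaves the maximum $c_{\max}^2$ unchanged.
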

\begin{proof}
We start from expression Eq.~\eqref{eq: optimization over eigenvalues} noting that in this case $\lambda_1^\downarrow(\rho_E)=1$ and the rest are zero,
\begin{align}\label{eq: pure ancilla append error}
    \epsilon_{\text{app}}(d_I,d_O,d_E) &= 
    d_I^2\;+\;\EE_C\tr[C^2]\;-\;2w_1\\
    &=d_I^2\;+\frac{d_Id_O(d_E^2-1)+d_I^2d_E(d_O^2-1)}{d_O^2d_E^2-1}-\;2\EE_C[c_{\max}^2]
\end{align}

\end{proof}
The quantity $\EE_C\{c_\mathrm{max}^2\}$ respects
\begin{equation}
    \frac{1}{d_O^2}\leq \EE_C\{c_\mathrm{max}^2\} \leq d_I^2,
\end{equation}
implying
\begin{equation}
    \epsilon_{\text{app-pure}}(d_I,d_O,d_E) \leq  \frac{d_Id_O(d_E^2-1)+d_I^2d_E(d_O^2-1)}{d_O^2d_E^2-1} + d_I^2  - \frac{2}{d_O^2}.
\end{equation}
From this we conclude that the error attained by the optimal append-environment strategy that prepares a pure state for the environment is incomparable with that which prepares the maximally mixed state for the environment, given by Eq.~\eqref{eq::append_maxmixedstate}.

\section{Map-to-depolarizing strategy}\label{app::maptodepol}

\begin{figure}[H]
    \centering
    \includegraphics[width=0.4\linewidth]{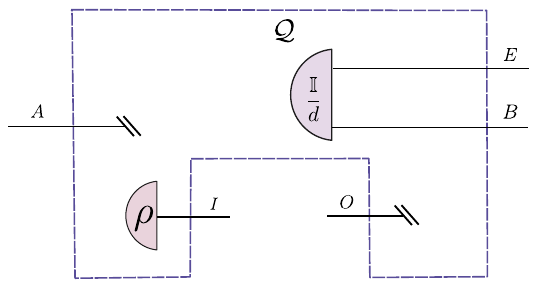}
    \caption{\textbf{Map-to-depolarizing strategy.} Depiction of a strategy mapping every input map into the fully depolarizing channel.}
    \label{fig: map to depolarizing}
\end{figure}

Here we analyze the average purification error attained by the strategy that maps all input quantum channels to the fully depolarizing channel. We recall that these strategies, defined in Sec.~\ref{sec::maptodep}, are described by a linear map $\Qcal_\mathrm{dep}:\Lcal(\Hcal_I\otimes\Hcal_O)\rightarrow \Lcal (\Hcal_A \otimes\Hcal_B \otimes \Hcal_E)$ such that
\begin{equation}
        \Qcal_\mathrm{dep}(C) = \tr(C)\frac{\id}{d_Id_Od_E} = \frac{\id}{d_Od_E}.
\end{equation}
The fully depolarizing channel $\widetilde{D}(\rho)=\tr(\rho)\id/d_O$ is the least–extremal (“most mixed”) channel and the barycenter of the space of CPTP maps. By symmetry, for any unitarily invariant distance, $\mathcal{D}$ is isotropic (on average, equally distant from all directions) making this an interesting strategy for single-copy purification. Such strategy attains an error defined as
\begin{align}
    \epsilon_\mathrm{dep}(d_I,d_O,d_E) &= \EE_C \min_{U_E} \norm{\Qcal_\mathrm{dep}(C)-\ketbra{V_{U_E}(C)}}^2_2 \\
    & = \EE_C \min_{U_E}\norm{\frac{\id}{d_Od_E}-\ketbra{V_{U_E}(C)}}^2_2.
\end{align}

\begin{theorem}[Error of the map-to-depolarizing strategy -- Expanded]\label{thm::maptodepol_app}
    The average purification error attained by a strategy $\Qcal_\mathrm{dep}(C)=\id/d_Od_E$ which maps all input channels to the fully depolarizing channel is
    \begin{equation}
        \epsilon_\mathrm{dep}(d_I,d_O,d_E)=d_I^2-\frac{d_I}{d_Od_E}.
    \end{equation}
\end{theorem}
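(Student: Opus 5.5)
The plan is to expand the squared Hilbert--Schmidt norm exactly as in Eq.~\eqref{eq::error_expanded}. The key observation is that the output $\id/(d_Od_E)$ is invariant under every unitary $\id\otimes U_E$, so the inner minimization over $U_E$ is trivial. For any $C$ and any $U_E$ we write
\begin{equation}
\norm{\frac{\id}{d_Od_E}-\ketbra{V_{U_E}(C)}}_2^2 = \tr\left[\frac{\id}{d_O^2d_E^2}\right] + \tr\left[\ketbra{V_{U_E}(C)}^2\right] - \frac{2}{d_Od_E}\tr\left[\ketbra{V_{U_E}(C)}\right].
\end{equation}

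We then evaluate the three terms separately. The first is a trace over $\Hcal_A\otimes\Hcal_B\otimes\Hcal_E$, of dimension $d_Id_Od_E$, and equals $d_I/(d_Od_E)$. The second equals $\tr[\ketbra{V}]^2=d_I^2$, since $\braket{V}{V}=\tr(V^*V)=d_I$ and the target is rank one with this norm, as noted in the main text. The third uses $\tr\ketbra{V_{U_E}(C)}=\tr C=d_I$, which gives $-2d_I/(d_Od_E)$.

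Summing the terms gives $d_I^2+d_I/(d_Od_E)-2d_I/(d_Od_E)=d_I^2-d_I/(d_Od_E)$. This value depends neither on $U_E$ nor on $C$, so both $\min_{U_E}$ and $\EE_C$ act trivially, which yields the claimed expression.

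There is no real obstacle here. The only points requiring care are to track dimensions correctly, remembering that the identity lives on all three systems $ABE$, and to normalize the Choi vector as unnormalized, so that its trace is $d_I$ and its purity is $d_I^2$. No properties of the random ensemble are needed.
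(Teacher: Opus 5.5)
Your proposal is correct and follows essentially the same route as the paper. Both expand the squared Hilbert--Schmidt norm into the output purity $d_I/(d_Od_E)$, the target purity $d_I^2$, and the overlap term $-2d_I/(d_Od_E)$. The overlap term is independent of $U_E$ and of $C$ because $\tr\ketbra{V_{U_E}(C)}=d_I$; you get this from the invariance of the output under $\id\otimes U_E$, where the paper cites cyclicity of the trace.
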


\begin{proof}
\begin{align}
    \epsilon_\mathrm{dep}(d_I,d_O,d_E) &= d_I^2 + \tr[\frac{\id}{d_O^2d_E^2}] - \frac{2}{d_Od_E}\EE_C\max_{U_E}\tr[(\id\otimes U_E)\ketbra{V(C)}(\id\otimes U_E^*)] \\
    &= d_I^2 + \frac{d_I}{d_Od_E} - 2\frac{d_I}{d_Od_E} \\
    &= d_I^2 - \frac{d_I}{d_Od_E}. 
\end{align}
where the negative term in the third equality is obtained by cyclicity of the trace and normalization of the Choi operators.
\end{proof}

It can then be seen that this strategy, although it seemed reasonable, it is very close to the maximal possible error.

\section{General error upper bound}\label{app::genuperbound}
Here we derive the general error upper bound in Sec.~\ref{sec::genupperbound}, obtained by performing an average instead of a minimization over the set of unitaries on the environment. That is, we compute
\begin{align}
    \epsilon_{\mathrm{avg-}U_E}(d_I,d_O,d_E) &= \min_\Qcal \EE_C \EE_{U_E} \norm{\Qcal(C^{\otimes k}) - \ketbra{V_{U_E}(C)}}^2_2 \label{eq::epsilonavg_app}\\
    &\geq \min_\Qcal \EE_C \min_{U_E} \norm{\Qcal(C^{\otimes k}) - \ketbra{V_{U_E}(C)}}^2_2 \\
    &= \epsilon(d_I,d_O,d_E;k). \\
\end{align}
As we will see, the solution for $\epsilon_{\mathrm{avg-}U_E}(d_I,d_O,d_E)$ is independent of $k$.

\begin{theorem}[General error upper bound -- Expanded]\label{thm::genupperbound_app}
    The upper bound $\epsilon_{\mathrm{avg-}U_E}(d_I,d_O,d_E)$ defined in Eq.~\eqref{eq::epsilonavg_app} of the average purification error $\epsilon(d_I,d_O,d_E;k)$ is exactly
    \begin{equation}
        \epsilon_{\mathrm{avg-}U_E}(d_I,d_O,d_E) = d_I^2-\frac{1}{d_E}\frac{d_I^2d_E(d_O^2-1)+d_Id_O(d_E^2-1)}{d_E^2d_O^2-1}
    \end{equation}
    for all $k$.
\end{theorem}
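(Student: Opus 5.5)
The plan is to expand the squared Hilbert--Schmidt norm inside the average over $U_E$ and observe that the problem becomes a quadratic minimization whose minimizer can be read off pointwise. For a fixed input $C$, write $X\coloneqq\Qcal(C^{\otimes k})$. Then
\begin{equation}
\EE_{U_E}\norm{X-\ketbra{V_{U_E}(C)}}_2^2 = \tr(X^2)+d_I^2-2\tr\!\left[X\,\EE_{U_E}\ketbra{V_{U_E}(C)}\right].
\end{equation}
Taking the Haar average over $U_E$ is a twirl on $E$. Hence $\EE_{U_E}(\id\otimes U_E)\ketbra{V(C)}(\id\otimes U_E^*)=\tr_E\ketbra{V(C)}\otimes\id_E/d_E=C\otimes\id_E/d_E$. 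Completing the square gives
\begin{equation}
\tr(X^2)-\tfrac{2}{d_E}\tr[X(C\otimes\id_E)] = \norm{X-C\otimes\tfrac{\id_E}{d_E}}_2^2-\tfrac{1}{d_E}\tr(C^2).
\end{equation}
This holds because $\norm{C\otimes\id_E/d_E}_2^2=\tr(C^2)/d_E$.

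The averaged error is therefore
\begin{equation}
d_I^2-\tfrac{1}{d_E}\EE_C\tr(C^2)+\EE_C\norm{\Qcal(C^{\otimes k})-C\otimes\tfrac{\id_E}{d_E}}_2^2 .
\end{equation}
The last term is nonnegative for every $\Qcal$. It vanishes for the append-maximally-mixed-environment machine $\Qcal(C^{\otimes k})=\tr_{2..k}$ of the input tensored with $\id_E/d_E$. Concretely, this machine discards $k-1$ copies, applies the identity superchannel to the remaining one, and appends $\id_E/d_E$. It is a valid completely CPTP-preserving superchannel, since discarding copies, the identity, and appending a state are all admissible operations. The minimum is thus attained and is independent of $k$. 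As a consistency check, this matches the value of $\epsilon_{\text{app-}\id/d_E}$ computed in the proof of Thm.~\ref{thm::append-env_app}.

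Finally, I substitute $\EE_C\tr(C^2)$ from Lemma~\ref{lemma:: average purity of C}. Its main-text form, Eq.~\eqref{eq::avg_tr_Csquared}, is $\frac{d_Id_O(d_E^2-1)+d_I^2d_E(d_O^2-1)}{d_O^2d_E^2-1}$, which yields the claimed formula.

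The only delicate points are the following.
\begin{itemize}
\item The twirl identity, which is the standard fact that $\int dU\,UYU^*=\tr(Y)\id/d$ applied on $E$.
\item Checking that the appended-state machine is admissible within the class of superchannels over which the minimum is taken.
\item Reconciling the two slightly different forms of the average purity appearing in Lemma~\ref{lemma:: average purity of C} and Eq.~\eqref{eq::avg_tr_Csquared}. I would use the form derived at the end of that lemma's proof, which agrees with the theorem statement.
\end{itemize}
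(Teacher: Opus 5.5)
Your proposal is correct and follows the paper's proof essentially line by line: twirl over $U_E$ to replace the target by $C\otimes\id_E/d_E$, complete the square, eliminate the nonnegative residual with the append-$\id_E/d_E$ machine, and substitute $\EE_C\tr(C^2)=\frac{d_Id_O(d_E^2-1)+d_I^2d_E(d_O^2-1)}{d_O^2d_E^2-1}$. That is indeed the right form; the displayed statement of Lemma~\ref{lemma:: average purity of C} has $d_O$ and $d_E$ swapped, as the case $d_E=1$ (where $\tr(C^2)=d_I^2$) shows. One minor normalization nit: at the Choi level, discarding a channel copy is $X\mapsto\tr(X)/d_I$ (feed in $\id/d_I$ and trace out the output), so your $k$-copy machine should be $d_I^{-(k-1)}\tr_{2..k}(\cdot)\otimes\id_E/d_E$, because the bare partial trace gives $d_I^{k-1}C\otimes\id_E/d_E$.
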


\begin{proof}
\begin{align}
    \epsilon_{\mathrm{avg-}U_E}(d_I,d_O,d_E) &= \min_\Qcal \EE_C \EE_{U_E} \norm{\Qcal(C^{\otimes k}) - \ketbra{V_{U_E}(C)}}^2_2 \\
    &= d_I^2 + \min_\Qcal \EE_C \left\{ \tr[Q(C^{\otimes k})^2] - 2\EE_{U_E} \tr[Q(C^{\otimes k})(\id\otimes U_E)\ketbra{V(C)}(\id\otimes U_E^*)]\right\} \\
    &= d_I^2 + \min_\Qcal \EE_C \left\{ \tr[Q(C^{\otimes k})^2] - 2\tr[Q(C^{\otimes k})\left(\tr_E \ketbra{V(C)} \otimes \frac{\id}{d_E}\right)]\right\} \\
    &= d_I^2 + \min_\Qcal \EE_C \left\{ \tr[Q(C^{\otimes k})^2] - 2\tr[Q(C^{\otimes k})\left(C \otimes \frac{\id}{d_E}\right)]\right\}.
\end{align}

By noticing that since
\begin{align}
    \norm{\Qcal(C^{\otimes k}) - C\otimes\frac{\id}{d_E}}_2^2 &= \tr[\Qcal(C^{\otimes k})^2] + \tr[\left(C\otimes\frac{\id}{d_E}\right)^2] - 2\tr[\Qcal(C^{\otimes k})\left(C\otimes\frac{\id}{d_E}\right)]
\end{align}
one has that 
\begin{align}
  \tr[\Qcal(C^{\otimes k})^2] - 2\tr[\Qcal(C^{\otimes k})\left(C\otimes\frac{\id}{d_E}\right)] &= \norm{\Qcal(C^{\otimes k}) - C\otimes\frac{\id}{d_E}}_2^2 - \frac{1}{d_E}\tr(C^2), 
\end{align}
and, consequently,
\begin{align}
    \epsilon_{\mathrm{avg-}U_E}(d_I,d_O,d_E) &=  d_I^2 + \min_\Qcal \EE_C \left\{ \norm{\Qcal(C^{\otimes k}) - C\otimes\frac{\id}{d_E}}_2^2 - \frac{1}{d_E}\tr(C^2) \right\}
\end{align}
Taking $\Qcal(C^{\otimes k}) = C\otimes\frac{\id}{d_E}$ being the strategy that outputs the input quantum channel unchanged and appends a maximally mixed state onto the environment, the first term in the minimization goes to zero, hence
\begin{align}
    \epsilon_{\mathrm{avg-}U_E}(d_I,d_O,d_E) &=  d_I^2 - \frac{1}{d_E} \EE_C \left\{\tr(C^2) \right\} \\
    &= d_I^2-\frac{1}{d_E}\frac{d_I^2d_E(d_O^2-1)+d_Id_O(d_E^2-1)}{d_E^2d_O^2-1}.
\end{align}

\end{proof}

It can be easily seen the optimal strategy in this case has the same form as the so-called random dilation superchannel defined in Refs.~\cite{girardi2025random2,yoshida2025}.

\section{Estimation-based many-copy strategy}\label{app::estimation}

Let us perform an estimation-based protocol as the one described in Refs.~\cite{girardi2025random} and~\cite{PelecanosEtAl2025}, where they implement the following steps:

\begin{enumerate}
   
    \item
    Run the mixed-state tomography algorithm of
    Ref.~\cite[(Theorem~1.3)]{PelecanosEtAl2025} on $k$ copies of
    $C/d_I$.  The output of one run is a state estimate $ \frac{\widehat C_y}{d_I},$
    where $y\in\mathcal Y$ denotes the classical outcome of the tomography
    procedure.

    \item
    For each outcome $y$, prepare a fixed-gauge purification of
    $\widehat C_y/d_I$, choosing for instance the fixed local environment
    unitary $U_E=\id_E$.  Thus the final output in one run is$\ketbra{\widehat V_y(C)},$
    with $\tr_E \ketbra{\widehat V_y(C)}=\frac{\widehat C_y}{d_I}.$
\end{enumerate}

The preceding description is the output of a single run, conditioned on the
classical tomography outcome $y$.  If the learning machine is regarded as a
deterministic quantum operation and the classical register $y$ is forgotten,
then its actual deterministic output is not one of the pure states
$\ketbra{\widehat V_y(C)}$; rather, it is the averaged state
\begin{align}
    \Qcal_\mathrm{tomo}(C^{\otimes k})
    :=
    \int_{\mathcal Y}
    dP(y\mid C)\,
    \ketbra{\widehat V_y(C)} ,
\end{align}
where $P(\cdot\mid C)$ is the outcome distribution of the tomography
algorithm on input $(C/d_I)^{\otimes k}$.  In particular,
$\Qcal_\mathrm{tomo}(C^{\otimes k})$ is generally mixed, and
\begin{align}
    \tr_E \left[\Qcal_\mathrm{tomo}(C^{\otimes k})\right]
    =
    \int_{\mathcal Y}
    dP(y\mid C)\,
    \frac{\widehat C_y}{d_I}.
\end{align}
\begin{theorem}[Estimation-based strategy]\label{thm::estimation_app}
There exist an estimation-based deterministic purification machine $\Qcal_\mathrm{tomo}$ that takes in $k$ copies of an unknown $C$ , performs the strategy explained above and outputs $\int_{\mathcal Y}
dP(y\mid C)\,
\ketbra{\widehat V_y(C)}$. 
Then the average error for this strategy is upper bounded by
\begin{align}
     \epsilon_{\mathrm{tomo}}(d_I,d_O,d_E;k)
    \leq
    \frac{4\kappa d_I^2
    \left(
        d_I d_O\,\min\{d_E,d_Id_O\}
        +
        \log(1/\delta)
    \right)}{k}
    +
    O\!\left(
        \frac{d_I^2}{k^2}
        \EE_C\left[
            \left(
                \operatorname{rank}(C)d_I d_O+\log(1/\delta)
            \right)^2
        \right]
    \right)
    +
    O(d_I^2\delta).
\end{align}
where $\kappa>0$ is a universal constant. Here $1-\delta$ denotes the success probability of the estimation strategy of Ref.~\cite{PelecanosEtAl2025}: with probability at least $1-\delta$, the output is an estimate $\hat C$ satisfying
\begin{align}\label{eq:epsilon-k}
    F(C/d_I,\hat C/d_I) \geq 1-\varepsilon_k,
\end{align}
with
\begin{align}
    \varepsilon_k
    :=
    \min\left\{1,\kappa\,
    \frac{
        \operatorname{rank}(C)d_I d_O+\log(1/\delta)
    }{k}\right\}.
\end{align}
\end{theorem}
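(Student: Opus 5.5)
The plan is to reduce the mixed output $\Qcal_\mathrm{tomo}(C^{\otimes k})$ to its pure branches $\ketbra{\widehat V_y(C)}$. I will control each branch through Uhlmann-type overlaps, using the fidelity guarantee in Eq.~\eqref{eq:epsilon-k}.

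\textbf{Step 1 (fix a common gauge for the target).} For each input $C$ I choose one environment unitary $U_E^\star(C)$, depending only on $C$ and not on the outcome $y$. It aligns the target $\ket{V_{U_E^\star}(C)}$ with the fixed gauge used to build every $\ket{\widehat V_y(C)}$. Concretely:
\begin{itemize}
\item Since $\rank(C)=\min\{d_E,d_Id_O\}$ almost surely, I embed $\mathrm{supp}(C)$ isometrically into $\Hcal_E$.
\item I take $\ket{\widehat V_y(C)}$ to be the canonical purification $(\sqrt{\widehat C_y}\otimes \id)\ket{\Phi^+}$, transported into $\Hcal_E$ by this embedding. If needed, I first truncate $\widehat C_y$ to rank $\rank(C)$, as is natural for the rank-aware estimator of Ref.~\cite{PelecanosEtAl2025}.
\item I choose $U_E^\star(C)$ so that the target is the canonical purification of $C$ in the same embedding.
\end{itemize}
Then $\min_{U_E}$ is bounded by the value at $U_E^\star(C)$. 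Next I use convexity of $X\mapsto\norm{X-Y}_2^2$ to move the integral $\int dP(y\mid C)$ outside the norm. This gives
\begin{equation}
\min_{U_E}\norm{\Qcal_\mathrm{tomo}(C^{\otimes k})-\ketbra{V_{U_E}(C)}}_2^2\le\int_{\mathcal Y} dP(y\mid C)\,\norm{\ketbra{\widehat V_y(C)}-\ketbra{V_{U_E^\star}(C)}}_2^2 .
\end{equation}

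\textbf{Step 2 (per-branch bound).} Both vectors have squared norm $d_I$. Hence each branch term equals $2\big(d_I^2-|\braket{\widehat V_y(C)}{V_{U_E^\star}(C)}|^2\big)$. In the canonical gauge the overlap is $\tr(\sqrt{C}\sqrt{\widehat C_y})=d_I\,A(C/d_I,\widehat C_y/d_I)$, where $A(\rho,\sigma)=\tr\sqrt\rho\sqrt\sigma$ is the affinity. I then use the standard Fuchs--van de Graaf-type inequality $A(\rho,\sigma)\ge F(\rho,\sigma)$. Together with $F\le 1$ this yields
\begin{equation}
\norm{\cdots}_2^2\le 2d_I^2(1-F^2)=4d_I^2(1-F)-2d_I^2(1-F)^2\le 4d_I^2(1-F).
\end{equation}
On the success event (probability $\ge1-\delta$), $1-F\le\varepsilon_k$, so the branch contributes at most $4d_I^2\varepsilon_k$. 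On the failure event I use the trivial bound $2d_I^2$ per branch, giving the $O(d_I^2\delta)$ term. Keeping the $-2d_I^2(1-F)^2$ term, or expanding $1-F^2$ to second order, gives the $O\big(d_I^2\,\EE_C[(\rank(C)d_Id_O+\log(1/\delta))^2]/k^2\big)$ correction in the statement. For the cruder bound of the main-text theorem it can simply be dropped.

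\textbf{Step 3 (average over $C$).} I take $\EE_C$ and substitute $\rank(C)=\min\{d_E,d_Id_O\}$, which holds almost surely under the Haar--Stinespring measure (App.~\ref{ap:: reviewrandomchannels}). Since $\varepsilon_k\le\min\{1,\cdot\}$, this gives both the stated $1/k$ leading term and the capped form $4d_I^2\min\{1,\kappa(\cdots)/k\}$.

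\textbf{Main obstacle.} I expect the hardest part to be Step 1: justifying that a single gauge $U_E^\star(C)$, independent of $y$, is simultaneously good for all branches. This is what lets the minimum over $U_E$ be handled before averaging over $y$. It requires that the purification of $\widehat C_y$ be produced in a fixed canonical gauge compatible with an environment of dimension $d_E$. When $d_E<d_Id_O$, this forces the rank truncation of $\widehat C_y$. I must check that the truncation does not degrade the fidelity guarantee beyond constants absorbed into $\kappa$. I would try first to use directly that the Ref.~\cite{PelecanosEtAl2025} estimator already outputs rank-$r$ estimates when run with the known rank $r=\min\{d_E,d_Id_O\}$.
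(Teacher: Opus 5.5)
Your argument is the paper's proof in slightly different clothing. Both use a single environment unitary for all branches $y$ and bound the purity term by $d_I^2$; your convexity step $\norm{\int dP(y\mid C)\,X_y-Y}_2^2\le\int dP(y\mid C)\,\norm{X_y-Y}_2^2$ is exactly equivalent to the paper's bound $\tr[\Qcal_{\mathrm{tomo}}(C^{\otimes k})^2]\le d_I^2$. Both then use an overlap $\ge d_I(1-\varepsilon_k)$ on the success set and $\ge 0$ off it, expand, and substitute $\rank(C)=\min\{d_E,d_Id_O\}$. Your bookkeeping reproduces the paper's intermediate bound $2d_I^2[\delta+(1-\delta)(2\varepsilon_k-\varepsilon_k^2)]$.

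You are more explicit than the paper in two useful places. First, you construct the aligning unitary $U_E^\star(C)$, which the paper leaves implicit in its choice $U_E=\id_E$. Second, you justify the paper's bare assertion $|\braket{\widehat V_y(C)}{V(C)}|^2\ge d_I^2(1-\varepsilon_k)^2$ via the canonical overlap $\tr\sqrt{C}\sqrt{\widehat C_y}$ and the inequality $\tr\sqrt\rho\sqrt\sigma\ge F(\rho,\sigma)=\norm{\sqrt\rho\sqrt\sigma}_1^2$. That inequality is true; it is the comparison between Holevo's (pretty-good) fidelity and Uhlmann's fidelity, not Fuchs--van de Graaf.

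\textbf{The gap.} It lies in Step 1 when $d_E<d_Id_O$, and your diagnosis (rank truncation) misses the real obstruction. Truncation is needed, but it is not the issue.

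As written, you embed $\mathrm{supp}(C)$ into $\Hcal_E$, but the machine never sees $C$, only $y$. The environment factor of $(\sqrt{\widehat C_y}\otimes\id)\ket{\Phi^+}$ lies in the complex conjugate of $\mathrm{supp}(\widehat C_y)$, which changes with $y$. So the map into $\Hcal_E$ must be a $y$-dependent frame $J_y$. Then each branch overlap acquires the factor $J_y^*J$, with $J$ the frame used for the target. These overlaps are no longer $\tr\sqrt C\sqrt{\widehat C_y}$, and no single $U_E^\star(C)$ repairs all of them at once.

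Worse, there is no continuous rule $\widehat C_y\mapsto J_y$ on rank-$d_E$ states. Such a rule would trivialize the tautological bundle over the Grassmannian of $d_E$-planes in $\Hcal_I\otimes\Hcal_O$, which is nontrivial for $0<d_E<d_Id_O$. Hence any fixed gauge rule is discontinuous or ill-conditioned somewhere. You would then have to bound the $\EE_C$-weight of inputs whose estimates land near that locus, and it is not clear that this costs only a constant absorbable into $\kappa$.

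\textbf{The regime where your proof is complete.} When $d_E\ge d_Id_O$ the problem disappears. Fix one $C$-independent isometry $J:\Hcal_I\otimes\Hcal_O\to\Hcal_E$ and set $\ket{\widehat V_y(C)}=(\id\otimes J)(\sqrt{\widehat C_y}\otimes\id)\ket{\Phi^+}$. Then $U_E^\star(C)$ exists, because any two purifications of $C$ in $\Hcal_I\otimes\Hcal_O\otimes\Hcal_E$ differ by a unitary on $\Hcal_E$, and your proof is complete.

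The paper's proof has the same hole: it only asserts the overlap bound ``since the purifications are chosen with a fixed coherent gauge.'' Your argument is therefore at least as rigorous as the paper's, and fully rigorous for $d_E\ge d_Id_O$.
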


\begin{proof}

Let us start by writing the error functional for this strategy based on Eq. \eqref{eq::avg_error_def},
\begin{align}\label{eq:: def error tomo}
    &\epsilon_\mathrm{tomo}(d_I,d_O,d_E;k) \nonumber\\
    &=\EE_C\min_{U_E}\norm{\Qcal_\mathrm{tomo}(C^{\otimes k})-(\id\otimes U_E)\ketbra{V(C)}(\id\otimes U_E^*)}_2^2\\
    &=\EE_C\left\{\tr\left[\left(\int_{\mathcal Y}
    dP(y\mid C)\,
    \ketbra{\widehat V_y(C)}\right)^2 \right]+d_I^2-2\max_{U_E}\tr\left[\int_{\mathcal Y}
    dP(y\mid C)\,
    \ketbra{\widehat V_y(C)} \ketbra{V_{U_E}(C)}\right]\right\}
\end{align}

We next focus on the third term and lower bound it by fixing the unitary optimization, then
\begin{align}\label{eq:: upperbound fix gauge}
    \max_{U_E}&\tr\left[\int_{\mathcal Y}
    dP(y\mid C)\,
    \ketbra{\widehat V_y(C)} (\id\otimes U_E) \ketbra{V(C)}(\id\otimes U_E^*)\right]\\
    \geq&\, \tr\left[\int_{\mathcal Y}
    dP(y\mid C)\,
    \ketbra{\widehat V_y(C)} \ketbra{V(C)}\right]
\end{align}
Theorem~1.3 of Ref.~\cite{PelecanosEtAl2025} states that, for a rank-\(r\) state on a \(d\)-dimensional Hilbert space, \(k\) copies
are sufficient whenever
\begin{align}
    k
    =
    O\!\left(
        \frac{rd+\log(1/\delta)}{\varepsilon}
    \right)
\end{align}
to obtain an estimate with fidelity at least \(1-\varepsilon\) with probability
at least \(1-\delta\).  In our setting,
\begin{align}
    r=\operatorname{rank}(C),
    \qquad
    d=d_I d_O .
\end{align}
Equivalently, there exists a universal constant \(\kappa>0\) such that the
guarantee holds whenever
\begin{align}
    k
    \ge
    \kappa\,
    \frac{
        \operatorname{rank}(C)d_I d_O+\log(1/\delta)
    }{\varepsilon}.
\end{align}
For a fixed number \(k\) of copies, we therefore define
\begin{align}\label{eq:: }
    \varepsilon_k
    :=
    \min\left\{1,\kappa\,
    \frac{
        \operatorname{rank}(C)d_I d_O+\log(1/\delta)
    }{k}\right\},
\end{align}
which then it implies that
\begin{align}
    F\!\left(
        \frac{C}{d_I},
        \frac{\widehat C_y}{d_I}
    \right)
    \ge 1-\varepsilon_k
\end{align}
with probability at least \(1-\delta\) over \(y\sim P(\cdot\mid C)\).
Thus, we define
\begin{align}
    \mathcal S_C
    :=
    \left\{
        y\in\mathcal Y:
        F\!\left(
            \frac{C}{d_I},
            \frac{\widehat C_y}{d_I}
        \right)
        \ge 1-\varepsilon_k
    \right\}
\end{align}
as the set of outcomes from the tomography that are close to the input.\\
Then
\begin{align}
    P(\mathcal S_C\mid C)\ge 1-\delta,
    \qquad
    P(\mathcal Y\setminus \mathcal S_C\mid C)\le \delta.
\end{align}
Since the purifications are
chosen with a fixed coherent gauge, this implies
\begin{align}
    \left|
    \braket{\widehat V_y(C)}{V(C)}
    \right|^2
    \ge
    d_I^2(1-\varepsilon_k)^2,
    \qquad
    y\in\mathcal S_C,
\end{align}
where the factor \(d_I^2\) comes from the unnormalized Choi-vector
normalization.
For the remaining outcomes \(y\notin\mathcal S_C\), the theorem gives no
closeness guarantee.  However, their contribution is still nonnegative:
\begin{align}
    \left|
        \braket{\widehat V_y(C)}{V_{U_E}(C)}
    \right|^2
    \ge 0.
\end{align}
Therefore,
\begin{align}
    &\int_{\mathcal Y}
    dP(y\mid C)\,
    \tr\left[\ketbra{\widehat V_y(C)}\ketbra{V(C)}\right]
    \\
    &=\int_{\mathcal S_C}
    dP(y\mid C)\,
    \tr\left[\ketbra{\widehat V_y(C)}\ketbra{V(C)}\right]
    +
    \int_{\mathcal Y\setminus\mathcal S_C}
    dP(y\mid C)\,
    \tr\left[\ketbra{\widehat V_y(C)}\ketbra{V(C)}\right]\\
    &=
    \int_{\mathcal S_C}
    dP(y\mid C)\,
    \left|
    \braket{\widehat V_y(C)}{V_{U_E}(C)}
    \right|^2
    +
    \int_{\mathcal Y\setminus\mathcal S_C}
    dP(y\mid C)\,
    \left|
    \braket{\widehat V_y(C)}{V_{U_E}(C)}
    \right|^2
    \\
    &\ge
    \int_{\mathcal S_C}
    dP(y\mid C)\,
    d_I^2(1-\varepsilon_k)^2
    \\
    &=
    d_I^2(1-\varepsilon_k)^2
    P(\mathcal S_C\mid C)
    \\
    &\ge
    d_I^2(1-\varepsilon_k)^2(1-\delta) .
\end{align}

We plug this lower bound back to Eq.~\eqref{eq:: def error tomo},

\begin{align}
    \epsilon_\mathrm{tomo}(d_I,d_O,d_E;k)\leq \EE_C \left\{\tr\left[\left(\int_{\mathcal Y}
    dP(y\mid C)\,
    \ketbra{\widehat V_y(C)}\right)^2 \right]+d_I^2-2d_I^2(1-\delta)(1-\varepsilon_k)^2\right\}.
\end{align}
We can finally upper-bound the first term by the maximum purity possible,
\begin{align}
    \epsilon_\mathrm{tomo}(d_I,d_O,d_E;k)&\leq 2d_I^2-2d_I^2(1-\delta)\EE_C\{(1-\varepsilon_k)^2\}\\
    &=2d_I^2-2d_I^2(1-\delta)(1-2\EE_C\{\varepsilon_k\}+\EE_C\{\varepsilon_k^2\})\\
    &=4d_I^2\EE_C\{\varepsilon_k\}-2d_I^2\EE_C\{\varepsilon_k^2\}+2d_I^2\delta(1-2\EE_C\{\varepsilon_k\}+\EE_C\{\varepsilon_k^2\})\\
    &=4d_I^2\EE_C\left\{\min\left\{1,\kappa\,
    \frac{
    \operatorname{rank}(C)d_I d_O+\log(1/\delta)
    }{k}\right\}\right\}-2d_I^2\EE_C\{\varepsilon_k^2\}+2d_I^2\delta(1-2\EE_C\{\varepsilon_k\}+\EE_C\{\varepsilon_k^2\})\\
    &=4d_I^2\min\left\{1,\kappa\,
    \frac{
    \min\{d_E,d_Id_O\}d_I d_O+\log(1/\delta)
    }{k}\right\}-2d_I^2\EE_C\{\varepsilon_k^2\}+2d_I^2\delta(1-2\EE_C\{\varepsilon_k\}+\EE_C\{\varepsilon_k^2\})
\end{align}

The last two contributions are negligible at the \(1/k\) level. Indeed,
\begin{align}
    \EE_C\{\varepsilon_k^2\}
    =
    O\!\left(
        \frac{1}{k^2}
        \EE_C\left\{
            \left(
                \operatorname{rank}(C)d_I d_O+\log(1/\delta)
            \right)^2
        \right\}
    \right),
\end{align}
while
\begin{align}
    2d_I^2\delta
    \left(
        1-2\EE_C\{\varepsilon_k\}+\EE_C\{\varepsilon_k^2\}
    \right)
    =
    O(d_I^2\delta).
\end{align}
Therefore,
\begin{align}
\begin{split}
    &\epsilon_{\mathrm{tomo}}(d_I,d_O,d_E;k) \\
    &\leq
    4 d_I^2\min\left\{1,\kappa\,
    \frac{
        \min\{d_E,d_Id_O\}d_I d_O+\log(1/\delta)
    }{k}\right\}
    +
    O\!\left(
        \frac{d_I^2}{k^2}
        \EE_C\left\{
            \left(
                \operatorname{rank}(C)d_I d_O+\log(1/\delta)
            \right)^2
        \right\}
    \right)
    +
    O(d_I^2\delta).
\end{split}
\end{align}

\end{proof}

\end{document}